\newcommand{\N}{\mathbb{N}}
\newcommand{\R}{\mathbb{R}}
\newcommand{\C}{\mathbb{C}}
\newcommand{\compcirc}{G}
\newcommand{\inspace}{{\mathcal X}}
\newcommand{\outspace}{{\mathcal Y}}
\newtheorem{theorem}{Theorem}
\newtheorem{lemma}{Lemma}
\newtheorem{corollary}{Corollary}
\newtheorem{fact}{Fact}
\newtheorem{note}{Note}
\newtheorem{definition}{Definition}
\newcommand{\e}{\epsilon}
\newcommand{\nbits}{\{0,1\}^n}
\newcommand{\poly}{\text{poly}}
\newcommand{\oracle}{\mathtt{O}}
\newcommand{\dist}{\mathcal{D}}
\newcommand{\state}[1]{|#1 \rangle}
\newcommand{\id}{\mathbb{I}}
\newcommand{\adv}{\mathbf{A}}
\newcommand{\x}{\mathbf{x}}
\newcommand{\Prob}{\mathbb{P}}
\newcommand{\ver}{\mathbf{V}}
\newcommand{\prov}{\mathbf{P}}
\newcommand{\comp}{G}
\newcommand{\heli}{d_H}
\newcommand{\tv}{\triangle}
\newcommand{\clock}{\text{clock}}
\newcommand{\out}{\text{out}}
\newcommand{\aux}{\text{aux}}
\newcommand{\adver}{\text{adv}}
\newcommand{\inp}{\text{in}}
\newcommand{\prop}{\text{prop}}
\newcommand{\comput}{\text{comp}}
\newcommand{\subspace}{\mathcal{L}}
\newcommand{\wt}{\widetilde}
\newcommand{\tl}{\tilde}
\newcommand{\trace}{\text{Tr}}
\newcommand{\hist}{\text{hist}}
\newlength\myindent
\newcounter{mycomment}
\title{Provable Adversarial Robustness in the Quantum Model}
\author{%
  Khashayar Barooti \\
  EPFL\\
  Lausanne, Switzerland\\
  \texttt{khashayar.barooti@epfl.ch} \\
  % examples of more authors
  \And
   Grzegorz Głuch \\
   EPFL \\
   Lausanne, Switzerland \\
   \texttt{grzegorz.gluch@epfl.ch}  \\
   \AND
    Ruediger Urbanke \\
    EPFL \\
    Lausanne, Switzerland \\
    \texttt{ruediger.urbanke@epfl.ch}  \\
  % \And
  % Coauthor \\
  % Affiliation \\
  % Address \\
  % \texttt{email} \\
  % \And
  % Coauthor \\
  % Affiliation \\
  % Address \\
  % \texttt{email} \\
}
\begin{document}

\maketitle

\begin{abstract}
Modern machine learning systems have been applied successfully to a variety of tasks in recent years but making such systems robust against adversarially chosen modifications of input instances seems to be a much harder problem. It is probably fair to say that no fully satisfying solution has been found up to date and it is not clear if the standard formulation even allows for a principled solution. Hence, rather than following the classical path of bounded perturbations, we consider a model similar to the quantum PAC-learning model introduced by Bshouty and Jackson [1995]. Our first key contribution shows that in this model we can reduce adversarial robustness to the conjunction of two classical learning theory problems, namely (Problem 1) the problem of finding generative models and (Problem 2) the problem of devising classifiers that are robust with respect to distributional shifts. Our second key contribution is that the considered framework does not rely on specific (and hence also somewhat arbitrary) threat models like $\ell_p$ bounded perturbations. Instead, our reduction guarantees that in order to solve the adversarial robustness problem in our model it suffices to consider a single distance notion, i.e. the Hellinger distance. From the technical perspective our protocols are heavily based on the recent advances on delegation of quantum computation, e.g. Mahadev [2018]. 

Although the considered model is quantum and therefore not immediately applicable to ``real-world'' situations, one might hope that in the future either one can find a way to embed ``real-world'' problems into a quantum framework or that classical algorithms can be found that are capable of mimicking their powerful quantum counterparts.
\end{abstract}

\section{Introduction}\label{sec:intro}

Modern machine learning systems are known to be susceptible to small, adversarially-chosen perturbations of the inputs \citep{intriguingprop,neuralnetseasilyfooled}. Even though a considerable number of papers has been published on this subject, and much progress has been made on understanding the underlying problem, it is probably fair to say that a satisfying solution has not been found yet. Indeed, at present it is not clear if the problem, as traditionally stated, is even tractable.
We introduce a new model for adversarial robustness that crucially makes use of properties of quantum mechanics. For this model we give a protocol that provably guarantees robustness. 

The standard approach to adversarial machine learning is to consider perturbation sets. These sets describe by how much the adversary is allowed to perturbe the input according to our model. The most commonly considered such perturbation sets describe perturbations that are bounded in $\ell_p$ norms \citep{certifiableRobust1, certifiableRobust2}, but other perturbations (rotations, shifts, etc.) were also considered \citep{spatialpert}. To date there is a considerable literature on this approach. 

Whether those assumed bounds capture real-world scenarios is of course up for debate. E.g., it has been shown that defending models against one perturbation set does not necessarily improve the robustness against other perturbations and that there exist a trade-off between robustness for different perturbations \citep{trameroverfitting}.

Perhaps more importantly, it has been shown that it might be impossible to find robust classifiers, even in the restrictive model of $\ell_p$ bounded perturbations. For instance there exist tasks for which it is easy to find high-accuracy classifiers but finding robust models is infeasible. E.g., in \citet{computationalhardness} the authors describe a situation where it is information-theoretically easy to learn robustly but there is no algorithm in the statistical query model that computes a robust classifier. 
In \citet{computationalhardnessfull} an even stronger result is proven. It is shown that under a standard cryptographic assumption there exist learning tasks for which no algorithm can efficiently learn a robust classifier. Finally, in \citet{robustnessaccuracy} it was shown that there might be an inherent tension between robust and accurate classifiers. All these negative results were considered in the white-box model. The situation looks more hopeful when one considers the black-box model. For instance, in \citet{querycomplexity} it is shown that it still might be possible to construct robust classifiers in the black-box model despite the previous negative results. Unfortunately this result heavily relies on the fact that the perturbation set is fixed.

Some of the literature tries to escape the assumption of $\ell_p$ bounded perturbations or fixed perturbation sets altogether. E.g., in \citet{srebrounknown} it is shown how to defend against adversaries that are allowed to use perturbations from a set that is not known to the defender. Unfortunately, there are cases where the defense requires exponentially (in the VC-dimension) many samples. Another approach was considered in
\citet{Goldwasser}. The authors show how to get rid of the limitations on perturbations completely. This, naturally, comes at a price: e.g., the results in \citet{Goldwasser} are based on the assumption that the learner can decide not to give an answer for some inputs (selective learning \citep{selectiveclassifiers}) and that they see the test set upfront (transductive learning), thus the capabilities of the learner are enhanced. A recent work by \citet{gluchSeparation} proposes another approach for removing restrictions on perturbation sets. In this paper it is shown that an effective adversary can be used to design a faster learning algorithm. If one assumes that the learning problem is hard, this approach leads to a defense. The limitation of this approach is that the adversary is required to generate adversarial examples at random.

From the discussion above it seems that improving the state of the art of robustness for $\ell_p$ bounded perturbations might not guarantee a satisfying solution to the adversarial robustness puzzle in the long run. Instead, an exploration of new models is likely needed for a principled resolution of the problem. This is the path we follow in this work. We propose a new model and then show how to design a provable defense for it.

The model we propose crucially uses properties of quantum mechanics. Instead of the standard samples $x \sim \dist$ we assume that the adversary has access to the quantum states $\sum_{x \in \{0,1\}^n} \sqrt{\dist(x)}\ket{x}$ -- similar  to the quantum PAC-learning model by \citet{quantumPAC}. This model gives the adversary more power so, intuitively, he can trick us in more ways but is also able to execute more complicated strategies. A priori it is not clear how the two models relate. We will show that the quantum model allows to certify more properties than the classical counterpart.

It might be helpful to recall the following analogous transition.
A class of languages that can be decided by a classical verifier interacting with multiple all-powerful quantum provers sharing entanglement, i.e. $\text{MIP}^*$ and of languages that can be decided by a classical verifier interacting with multiple all-powerful classical provers, i.e. MIP. A classical result from the 90s \citep{mip=nexp} shows that $\text{MIP} = \text{NEXP}$. But for a very long time it was not known if $\text{MIP} \subseteq \text{MIP}^*$. This might sounds surprising given that now we know that $\text{MIP}^* = \text{RE}$. The reason the containment $\text{MIP} \subseteq \text{MIP}^*$ was not obvious is similar to the situation in our adversarial robustness setup.
Intuitively when we allow for entanglement between provers we let honest provers to convince us of more complicated statements but dishonest provers would have more ways of deceiving the verifier. 
%A similar situation happens for our adversarial robustness setup. We allow the adversary to be quantum, which results in him being able to trick us in more ways but we can also ask him to perform tasks that only a prover with quantum capabilities can do, thus potentially putting a ``leash on his quantum capabilities''.

We borrow heavily from a series of results closely related to, the now famous, $\text{MIP}^* = \text{RE}$ paper \citep{MIPstar=RE}, more precisely the works on the delegation of quantum computation \citep{mahadev}. These techniques allow us to put a ``leash on provers with quantum capabilities''. Using ideas from these papers, we are able to design a protocol between a classical verifier and a quantum prover that guarantees that the samples collected by the verifier at the end of interaction are i.i.d. from a distribution close to $\dist$. Our main theorem provides a reduction from the problem of adversarial robustness to learning generative models and finding a classifier resistant to distributional shifts. This reduces the problem of adversarial examples to more well known problems in the machine learning field and potentially opens a way for a solution.
 
%Our starting point is an observation that goes back to \citet{einsteinepr}. The property that we use, phrased in modern language, is that in the quantum world it is possible to certify randomness. What we mean, on a high level, is that one can design a game between many parties ....This stands in stark contrast to the situation in classical physics...

\section{Preliminaries}\label{sec:prelim}

%\begin{align*}
%\sum_i \alpha_i \|\sqrt{P} %-\sqrt{Q_i} \|_2 & =
%     \sum_i \alpha_i \sqrt{\sum_x \left(\sqrt{P(x)} -\sqrt{Q_i(x)}\right)^2 }   \\
 %    \left\|\sqrt{P} - \sqrt{\sum_i \alpha_i Q_i}\right\|_2 &=\sqrt{ \sum_x \left(\sqrt{P(x)} - \sqrt{\sum_i \alpha_i Q_i(x)}\right)^2 } 
%\end{align*}

We assume that the reader is familiar with basic notions of quantum information. For an in depth introduction we recommend \citet{nielsenbook}.
%\footnote{We would also like to thank Thomas Vidick for his recent course titled "Interactive Proofs with Quantum Devices", which was an immense help for us. We recommend this course for readers looking for more details and connections to other exciting results like the, now famous, $\text{MIP}^*$ = RE.}

For $k \in \N$ we denote by $[k]$ the set $\{1,\dots,k\}$ and by $\mathfrak{D}(n)$ the family of distributions on $n$-bit strings. For $\mathcal{P},\mathcal{Q} \in \mathfrak{D}(n)$ we denote their Hellinger distance by $\heli(\mathcal{P},\mathcal{Q}) := \frac{1}{\sqrt{2}}\sqrt{\sum_{x \in \nbits} (\sqrt{\mathcal{P}(x)} - \sqrt{\mathcal{Q}(x)})^2} = \frac{1}{\sqrt{2}}\|\sqrt{\mathcal{P}} - \sqrt{\mathcal{Q}}\|_2$. A direct calculation yields the useful identity $1 - \heli^2(\mathcal{P}, \mathcal{Q}) = \sum_{x \in \{0,1\}^n } \sqrt{\mathcal{P}(x)\mathcal{Q}(x)}$.
We denote the total variation distance of $\mathcal{P},\mathcal{Q}$ as $\tv(\mathcal{P},\mathcal{Q}) = \frac12\|\mathcal{P}-\mathcal{Q}\|_1$. The two similarity measures satisfy 
$$\heli^2(\mathcal{P}, \mathcal{Q}) \leq \tv(\mathcal{P}, \mathcal{Q}) \leq \sqrt{2}\heli(\mathcal{P}, \mathcal{Q}).
$$

For $\dist \in \mathfrak{D}(n)$ we define $\oracle(\dist)$ as an oracle that gives access to states $ \state{\psi_{\dist}} := \sum_{x \in \nbits} \sqrt{\dist(x)} \state{x}$, where by some abuse of notation we write $\state{x}$ (for ${x \in \nbits}$) to denote $\ket{x_1}\ket{x_2}\dots\ket{x_n}$. In our protocols we will be interested in an interaction between $\ver$ (Verifier) and $\prov$ (Prover). We will write $\prov^{\oracle(\dist)}$ to denote that $\prov$ has access to $\oracle(\dist)$.
%This is closely related to the quantum PAC model \citep{quantumPAC} where the learner is given access to samples in a form $\sum_{x \in \nbits} \sqrt{\dist(x)}\state{x, c(x)}$, where $c$ is a function from a hypothesis class that is being learnt.
For a quantum circuit $C$ acting on $n$-qubits via the unitary transform $U_C$, we define $\mathcal{D}_C \in \mathfrak{D}(n)$ as the distribution arising from measuring all $n$ qubits of $U_C |0^{\otimes n} \rangle$ in the computational (which we will also denote as $Z$) basis. We will sometimes abuse the notation and write $C \ket{\psi}$ to mean $U_C \ket{\psi}$. For $\ket{\psi} \in (\C^2)^{\otimes n}$ we say that $\dist \in \mathfrak{D}(n)$ defined as $\dist(x) = |\braket{x}{\psi}|^2$ for every $x \in \{0,1\}^n$ is the distribution associated with $\ket{\psi}$.

Lastly we note that, often, for the sake of convenience, we drop the identity operators tensored to projection operators. For instance, instead of $I \otimes \ket{1}\bra{1} \otimes I$ we often write $\ket{1}\bra{1}$. Hence, the reader should keep in mind that there are often $I$ matrices tensored to the projection in order to make the dimensions match. 

We will use the following version of the Chernoff bound often.
\begin{fact}[Chernoff-Hoeffding]\label{fact:chernoff}
Let $X_1, \dots, X_k$ be independent Bernoulli variables with parameter $p$. Then for every $0 < \e < 1$
$$
\Prob \left[\left| \frac{1}{k}\sum_{i=1}^k X_i - p \right| > \e \right] \leq 2e^{-\frac{\e^2 k}{2}}.
$$
\end{fact}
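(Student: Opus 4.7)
The plan is to prove Fact~\ref{fact:chernoff} via the classical moment generating function (Chernoff) method combined with a union bound over the two tails. Splitting
$$\Pr\!\left[\left|\tfrac{1}{k}\sum_i X_i - p\right| > \epsilon\right] \leq \Pr\!\left[\tfrac{1}{k}\sum_i X_i - p > \epsilon\right] + \Pr\!\left[\tfrac{1}{k}\sum_i X_i - p < -\epsilon\right],$$
it suffices to bound each one-sided tail by $e^{-\epsilon^2 k/2}$. The lower tail reduces to the upper tail by applying the same argument to the independent Bernoulli variables $1 - X_i$ (which have parameter $1-p$), so I focus only on the upper deviation.

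For a free parameter $t > 0$, I would apply Markov's inequality to the nonnegative random variable $\exp\bigl(t\sum_i (X_i - p)\bigr)$ to get
$$\Pr\!\left[\sum_i (X_i - p) > \epsilon k\right] \leq e^{-t\epsilon k}\prod_{i=1}^{k} \mathbb{E}[e^{t(X_i - p)}],$$
where independence lets the moment generating function factor across $i$. The next step is Hoeffding's lemma: since $X_i - p$ is a mean-zero random variable supported in the interval $[-p,\, 1-p]$ of length $1$, we have $\mathbb{E}[e^{t(X_i - p)}] \leq e^{t^2/8}$. Substituting gives the bound $\exp(-t\epsilon k + t^2 k / 8)$, and optimizing at $t = 4\epsilon$ yields $\exp(-2\epsilon^2 k)$, which is in fact strictly stronger than the stated $e^{-\epsilon^2 k/2}$. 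The factor of $2$ in front of the stated bound therefore absorbs the union bound on the two tails with plenty of slack.

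There is no real ``main obstacle'' here — this is why the paper is content to flag the statement as a \emph{Fact} rather than a Lemma. The only non-mechanical ingredient is Hoeffding's lemma, which one establishes by convex interpolation: for $X \in [a,b]$ with $\mathbb{E}[X] = 0$, one writes $e^{tX} \leq \tfrac{b - X}{b-a}e^{ta} + \tfrac{X - a}{b-a}e^{tb}$, takes expectations, and bounds the resulting cumulant $\log \mathbb{E}[e^{tX}]$ by $t^2(b-a)^2/8$ through a second-order Taylor estimate. Beyond that, everything is a routine computation, and essentially identical proofs appear in any standard concentration-of-measure reference.
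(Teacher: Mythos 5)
Your proof is correct. The paper states this as a Fact without providing any proof (it is a standard result cited for later use), so there is nothing to compare against; your argument is the classical exponential-moment proof via Hoeffding's lemma, and it in fact establishes the stronger bound $2e^{-2\epsilon^2 k}$, from which the stated $2e^{-\epsilon^2 k/2}$ follows immediately.
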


\section{Overview}\label{sec:overview}

In this section we give an overview of the main ideas. We first explain the model of adversarial robustness we consider and then present a protocol that guarantees provable robustness in this model.

The standard (``test time'') adversarial robustness problem is defined as follows. There is a true distribution over examples $\mathcal{D}$ on an input space $\mathcal{X}$ and a ground truth function $g : \mathcal{X} \xrightarrow{} \{-1,+1\}$. During the training phase the learning algorithm generates a classifier $f = f(S) : \mathcal{X} \xrightarrow{} \{-1,+1\}$, given a sample set $S \sim_{i.i.d.} \mathcal{D}^m$. At test time the following process is repeated. A sample $\mathbf{x} \sim \mathcal{D}$ is generated and, before it is passed to $f$ for classification, an adversary $\mathbf{A}$ changes $\x$ to produce an adversarial sample $\x' = \adv(\x,f)$. In the standard setting the adversary can add to $\x$ a perturbation from a predefined ``perturbation set''. As we discussed in the introduction, there is no consensus on what perturbation sets appropriately capture real-world threat situations. We circumvent this problem and instead ask the following question
\begin{center}
    \textit{Is it possible to design an interactive protocol in which $\adv$ certifies to the learner that $\x' \sim \mathcal{D}$?}
\end{center}
In other words, can we design a procedure in which the adversary convinces the learner that the samples that are sent by $\adv$ for classification come from the proper distribution? 

\begin{wrapfigure}{r}{5.5cm}
\centering
\begin{tikzpicture}
\tikzstyle{process}=[draw,rectangle, rounded corners, fill=yellow!20,
minimum width=3cm, minimum height=1cm]
\tikzstyle{adversary}=[draw,rectangle, rounded corners, fill=red!20,
minimum width=3cm, minimum height=1cm]
\tikzstyle{nature}=[draw,rectangle, rounded corners, fill=green!20,
minimum width=3cm, minimum height=1cm]
\tikzstyle{decision}=[draw, diamond, fill=red!20, minimum width=4cm, aspect=2]
\tikzstyle{a}=[thick,->,>=stealth]
\node[nature] (nature) at (0,0) {Nature};
\node[adversary] (adversary) at (0,-3) {Adversary / Prover};
\node[process] (learner) at (0,-6) {Learner / Verifier};
%\node[process] (sleep) at (5,-3) {Go back to sleep};
\draw[a] (nature) -- node[anchor=east] {$\x \sim \mathcal{D} \text{ or } \ket{\psi_\dist}$} (adversary);
\draw[a] (adversary) -- node[anchor=east] {$\x' \text{ or } \ket{\psi_{\dist^A}}$} (learner);
%\draw[a] (day) --node[anchor=south] {No} (sleep);
%\draw[a] (sleep) |- (waking);
\end{tikzpicture}
\caption{Model} \label{fig:model}
\end{wrapfigure}

If such a protocol existed it would guarantee that the probability of misclassification in the presence of the adversary were equal to the standard risk, i.e.,
$$\Prob_{\x \sim \mathcal{D}, \x' = \adv(\x,f)}[f(\x') \neq g(\x')] = \Prob_{\x \sim \mathcal{D}}[f(\x) \neq g(\x)].$$ Compare this to the present situation where the adversarial risk can be exponentially larger than the standard risk! 
%Hence, the protocol guarantees that the adversary has no advantage compared to not applying any perturbation at all!

Does such protocol solve all problems associated to the adversarial setting? Not quite! For instance, it does not guarantee any connection between $\x$ and $\x'$. Imagine for instance a self-driving car setting. A camera takes a picture $\x \sim \dist$ of a road. Assume that $\x$ represents a road that curves to the right. Next, the adversary $\adv$, who gained access to the sensor, discards $\x$ and generates a completely independent $\x' \sim \dist$. The image $\x'$ might depict a road that curves to the left. Even if $\x'$ is later on classified correctly by $f$ (as a road curving to the left) the path chosen by the car is incorrect! In such a setting $\adv$ acts as an artificial simulator of the real world. There is no way for the learner to distinguish these two cases as in both the samples come from the same distribution. Depending on the context, it might therefore be necessary to add further safeguards to ensure that the samples are connected to ``reality.'' Despite this caveat it seems that designing a protocol that certifies $\x' \sim \dist$ would be a valuable step in providing a principled solution to the adversarial robustness problem.

Is such a protocol possible even in principle? Imagine that $\dist = U(\nbits)$, i.e., $\dist$ is a uniform distribution over $n$-bit strings, and that this fact is known to both parties (in the general case neither party knows $\dist$). In this case $\adv$ should be sending uniformly random strings to the learner. But assume that $\adv$ instead prepares a string $s$ of his choosing, for instance $s = 1^n$, and tries to convince the learner that $s \sim \dist$. There seems to be no way for the learner to distinguish between an honest and a cheating $\adv$ as all strings in $\nbits$ have the same probability. If the interaction is repeated the learner can collect statistics over the runs of the interaction to try to distinguish truly random samples from malicious ones. However, usually the number of rounds of interaction is $\poly(n)$ while the size of the input space is $2^n$. Thus only an $2^{-\Omega(n)}$ fraction of the input space is explored during the interaction while $f$ usually misclassifies a constant fraction of the distribution. It hence appears that distinguishing random strings from samples that are maliciously chosen to be misclassified by $f$ is inherently hard. 

The crux of the problem seems to be the following question. How can $\adv$ convince the learner that he used true randomness when generating $s$? This is the place where quantum capabilities become useful as it is possible to answer this question in the affirmative using quantum resources! Quantum information theory tools that will help us solve this problem go back to the ideas in \citet{einsteinepr}. The implications of these ideas, phrased in modern language, are that there exist games which in order to win require the use of quantum strategies. Quantum strategies in turn imply the use of true randomness. This is the connection to our problem. 

The question of provably generating random samples from a known distribution has strong connections to a well-studied topic, namely certified randomness. In this problem a classical party, call it $\ver$ as in verifier, interacts with a quantum device (or devices), call it $\prov$ as in prover. At the end of the interaction $\ver$ should generate a string of bits that are certifiably random. This interaction should be device-independent, which means that even if $\prov$ tried to cheat in the interaction the protocol still is guaranteed to return random bits conditioned that $\ver$ accepted the interaction. This problem was studied in a variety of models. In some models it is assumed that there are several provers that share entanglement but are forbidden to communicate [\citet{Colbeck2009QuantumAR}, \citet{vaziranirandomness}, \citet{certrandtwoproversbook}]. In others there is only one prover but it has limited computational power \citep{vidickrandomness}. The key in most of these constructions is to design the protocol in such a way that the provers can only win if they use anticommutative measurements. Then the laws of quantum mechanics guarantee that the outcomes of these measurements contain true randomness.

How does the problem of certifiable randomness relate to our goal of a protocol for adversarial robustness? Recall the running example where $\dist = U(\nbits)$. Roughly spaking, a protocol for certifiable randomness between $\ver$ and $\prov$ guarantees that $\prov$ sends strings of random bits to $\ver$.
Our goal however requires us to generalize the protocol to cover all distributions and not only $\dist = U(\nbits)$. Further, as we briefly discussed earlier, neither the adversary nor the learner know $\dist$. We can therefore not simply take a black box approach but we need to start from the beginning and define the model of interaction more carefully. To follow conventions in the quantum literature, from now on we will use the term prover $\prov$ (instead of adversary $\adv$) and instead of referring to a learner we will use the term verifier $\ver$.

\paragraph{Model.}  As we discussed before, in the standard setup $\prov$ receives a sample $\x \sim \dist$, which we think was generated by nature, perturbes the sample $\x' =  \prov(\x,f)$ and sends $\x'$ to the verifier $\ver$. We refer the reader to Figure~\ref{fig:model} for a visual representation of the process. We focus on the interaction during test time. We assume that $\ver$, instead of receiving a sample $\x \sim \dist$, receives a quantum state $\ket{\psi_\dist} = \sum_{x \in \nbits} \sqrt{\dist(x)}\state{x}$, where we assumed that $\mathcal{X} = \nbits$. This model is closely related to the model that is considered in the quantum PAC-learning literature \citep{quantumPAC} where quantum samples are states of the form $\sum_{x \in \nbits} \sqrt{\dist(x)}\state{x, g(x)}$. Here, $g$ is the ground truth. Our results likely carry over to this quantum PAC-model, but as always, the details need to be verified. 
%But, as we will see, the knowledge of $g$ is orthogonal to our problem of certifying that $\x' \sim \dist$. This is why we stick to the model where $\prov$ receives $\ket{\psi_\dist}$. \comm{GG}{There can be some stupid case where $g$ is a function that helps inverting the claw free function...}\comm{RU}{not sure I understand this comment}\comm{GG}{Be less definite}
%At the first sight our model looks weaker as it gives $\prov$ access to quantum samples with one register (the $g(x)$ one) removed. For our purposes however, the two models are equivalent in the following sense. If we assume that $\prov$ has access to a classical circuit computing $g$ then he can create $\sum_{x \in \nbits} \sqrt{\dist(x)}\state{x, g(x)}$ from $\ket{\psi_\dist}$. To do that $\prov$ can just evaluate $g$ on $\ket{\psi_\dist}$ in superposition where the output is evaluated in an ancilla qubit.

Upon receiving $\ket{\psi_\dist}$ the adversary/prover $\prov$ performs an arbitrary computation (quantum or classical) and starts an interaction with the learner/verifier $\ver$. We will consider different models depending on what type of messages can be exchanged in the protocol and what power $\ver$ has. Ultimately, we will aim for a model where $\ver$ is fully classical and the messages exchanged are classical also. However, for pedagocical reasons our presentation will proceed in three steps, where in each step we will refine the model of the verifier $\ver$. In the first step the verifier will be quantum and the messages will be allowed to contain quantum states also. In the second step, $\ver$ will have access to only a constant size quantum computer but the messages will still be allowed to be quantum. Finally, in the third step $\ver$ and all messages will be classical. As we will see, the scenarios are increasingly more complex and each builds on the previous one. This progression is also consistent with the progression of models in the research literature on certified randomness and delegation of quantum computation - fields which we heavily borrow from.

\paragraph{Protocol. } Observe that if you measure $\ket{\psi_\dist} = \sum_{x \in \nbits} \sqrt{\dist(x)}\state{x}$ in the computational basis then the outcome of the measurement is distributed according to $\dist$. Hence, if we can design a protocol that forces $\prov$ to measure $\ket{\psi_\dist}$ in the $Z$ basis and send the result to $\ver$ then this will fulfill our goal.

In particular, we want the protocol to satisfy two properties: completeness and soundness. The completeness property guarantees that if $\prov$ acts honestly then the interaction will be accepted by $\ver$ with high probability and the outcome of the protocol will be ``equivalent'' to $\prov$ measuring $\ket{\psi_\dist}$ in the $Z$ basis and sending the results to $\ver$. The soundness property guarantees that ``the only'' way to succeed in the protocol is to obtain a state $\ket{\psi_\dist}$, measure it in the computational basis, and send the result to $\ver$. 
%This, for instance, allows $\prov$ to measure the state he received, discard the result, create a new quantum state from scratch by himself (not taking it from nature), measure it and send the result to $\ver$. This echoes our previous discussion where that $\x$ and $\x'$ are not necessarily related in our protocols.

We will now sketch our protocols and their guarantees in the three models.

\paragraph{Quantum Verifier. } The key component of all our protocols is a quantum circuit $\comp$, acting on three registers: out ($1$ qubit), adv ($n$ qubits) and aux ($n$ qubits), depicted in Figure~\ref{fig:CS1}. $\comp$ is parametrized by a quantum circuit $C$ with the associated distribution $\dist_C$. Recall that the result of applying $U_C$ to $0^{\otimes n}$ is the state $\ket{\psi_{\dist_C}}$. The circuit is designed so that it measures the similarity between $\dist^A$ and $\dist_C$, where $\dist^A$ is the distribution corresponding to the state $\ket{\psi_{\dist^A}}_\adver$. More precisely, the closer $\dist^A$ and $\dist_C$ are in terms of the Hellinger distance the higher the probability that $\comp$ outputs $1$ in the out register.

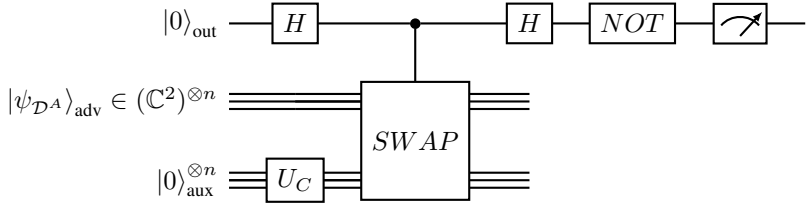
\begin{figure}
\centering
\begin{quantikz}
\lstick{$\ket{0}_{\text{out}}$} & \gate{H}  & \ctrl{1} & \gate{H} & \gate{NOT} & \meter{} & \qw  \\
\lstick{$\ket{\psi_{\dist^A}}_{\text{adv}} \in (\mathbb{C}^2)^{\otimes n }$} &  \qwbundle[alternate]{} & \gate[wires=2]{SWAP} \qwbundle[alternate]{} & \qwbundle[alternate]{} &&& \\
\lstick{$\ket{0}^{\otimes n}_{\text{aux}}$} & \gate{U_C} \qwbundle[alternate]{} & \qwbundle[alternate]{} & \qwbundle[alternate]{} &&&
\end{quantikz}
\caption{Comparison Circuit} \label{fig:CS1}
\end{figure}

Equipped with such a comparison circuit we are ready to design a protocol in a model where $\ver$ has quantum capabilities. For now we assume that $\prov$ acts i.i.d. in every round of the protocol. We will come back and discuss this assumption at the end of this overview. In the $i$-th round of the interaction $\prov$ sends an $n$-qubit quantum state $\ket{\psi_{\dist^A}}$ to $\ver$, $\ver$ samples a bit $b_i \in \{0,1\}$ uniformly at random. If $b_i=0$ then $\ver$ inserts $\ket{\psi_{\dist^A}}$ as an input to $\comp$, computes $\comp$, measures the output bit in the $Z$ basis and records the result as $\gamma_i$. If $b=1$ then $\ver$ measures $\ket{\psi_{\dist^A}}$ in the $Z$ basis and records the outcome as $\x_i \in \nbits$. After a certain number of rounds (dependent  on the desired accuracy and probability of success) $\ver$ computes an average $\gamma_{\text{avg}}$ of the set $\{\gamma_i: b_i=0\}$. If $\gamma_{\text{avg}}$ is bigger than a certain (do be determined) threshold $\ver$ accepts the interaction and returns the set $\{\x_i: b_i=1\}$.

Let us now consider the properties of this protocol. Completeness of the protocol is straightforward. An honest $\prov$ can forward the state $\ket{\psi_\dist}$ he receives to $\ver$ (see Figure~\ref{fig:model} for the visualization of the model). For soundness of the protocol note the following facts: i) $\gamma_{\text{avg}}$ is a good approximation for the probability that $\comp$ outputs $1$ on $\ket{\psi_{\dist^A}}$, ii) this probability is monotonically related to $\heli(\dist^A,\dist_C)$ by the properties of $\comp$, iii) the samples $\{\x_i: b_i=1\}$ are 
 i.i.d. from $\dist^A$, iv) we assumed that $d_H(\dist, \dist_C) $ is small. Moreover we assume that $\heli(\dist, \dist_C) \approx \eta$ and that $\eta$ is known to $\ver$. Combining these facts we arrive at the following conclusion. If $\ver$ accepts the interaction then the samples it returns are i.i.d. from a distribution $\dist^A$ such that 
\begin{equation}\label{eq:mainguarantee}
\heli(\dist^A, \dist) < O(\heli(\dist_C, \dist)).
\end{equation}

The reason the above holds is because we can set the threshold in the protocol over which $\ver$ accepts $\gamma_{\text{avg}}$ to be such that the interaction is accepted when $\heli(\dist^A, \dist_C) \lesssim \eta$. Then using a triangle-like inequality we arrive at \eqref{eq:mainguarantee}. We can summarize this guarantee by the following, informal theorem.

\begin{theorem}[Informal]\label{thm:informal}
For every circuit $C$ there exists a protocol between a verifier $\ver$ and a prover $\prov$ such that for every $\dist \in \mathfrak{D}(n)$ the protocol guarantees that, conditioned on $\ver$ accepting the interaction, the samples returned by $\ver$ are i.i.d. from a distribution $\dist^A \in \mathfrak{D}(n)$ such that
$$
\heli(\dist^A, \dist) <  O(\heli(\dist_C, \dist)).
$$
\end{theorem}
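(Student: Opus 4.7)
The plan is to unpack the three steps sketched in the overview: analyze the comparison circuit, concentrate over test rounds via Chernoff, and close with a Hellinger triangle inequality.

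\textbf{Step 1 (analysis of $\comp$).} Figure~\ref{fig:CS1} is a SWAP test between the prover's register $\ket{\phi}$ and $\ket{\psi_{\dist_C}} = U_C\ket{0^{\otimes n}}$, conjugated by a NOT on the output qubit. A direct computation gives
\[
\Pr[\comp \text{ outputs } 1 \mid \text{adv}=\ket{\phi}] \;=\; \tfrac12 + \tfrac12 |\braket{\psi_{\dist_C}}{\phi}|^2.
\]
Writing $\dist^A(x):=|\braket{x}{\phi}|^2$ for the $Z$-basis distribution of $\ket{\phi}$, the triangle inequality for complex sums together with the preliminaries identity gives
\[
|\braket{\psi_{\dist_C}}{\phi}| \;\le\; \sum_x \sqrt{\dist_C(x)}\,|\braket{x}{\phi}| \;=\; \sum_x \sqrt{\dist_C(x)\dist^A(x)} \;=\; 1-\heli^2(\dist_C,\dist^A),
\]
with equality at $\ket{\phi}=\ket{\psi_{\dist^A}}$. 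Hence acceptance is monotone in $\heli(\dist^A,\dist_C)$ in exactly the direction we need for soundness; completeness also follows, as the honest prover forwarding $\ket{\psi_\dist}$ is accepted per round with probability $p_{\text{hon}} := \tfrac12 + \tfrac12(1-\heli^2(\dist,\dist_C))^2$.

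\textbf{Step 2 (concentration, threshold, and soundness).} Under the stated i.i.d.\ assumption on $\prov$, the test indicators $\gamma_i$ are i.i.d.\ Bernoulli with parameter $p := \Pr[\comp=1\mid\ket{\phi}]$, so Fact~\ref{fact:chernoff} gives $|\gamma_{\mathrm{avg}}-p|\le \e$ with probability $\ge 1-\delta$ using $k=\Theta(\e^{-2}\log(1/\delta))$ test rounds. Setting the accept threshold at $p_{\text{hon}}-\e$ makes the honest prover pass w.h.p.\ and forces any accepted prover to satisfy $|\braket{\psi_{\dist_C}}{\phi}|^2 \ge (1-\heli^2(\dist,\dist_C))^2 - 4\e$. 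Combining with the Cauchy--Schwarz bound of Step~1 and rearranging yields $\heli(\dist^A,\dist_C) \le \heli(\dist,\dist_C) + O(\sqrt{\e})$. Since $\heli$ is a metric,
\[
\heli(\dist^A,\dist) \;\le\; \heli(\dist^A,\dist_C) + \heli(\dist_C,\dist) \;\le\; 2\,\heli(\dist_C,\dist) + O(\sqrt{\e}) \;=\; O(\heli(\dist_C,\dist))
\]
once $\e$ is chosen small relative to $\heli(\dist_C,\dist)^2$. The samples $\{\x_i : b_i=1\}$ remain i.i.d.\ from $\dist^A$ after conditioning on acceptance, since the coins $b_i$ are independent of $\prov$'s messages and the accept decision depends only on the test rounds.

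The step I expect to be the most delicate is the constant-chasing at the end of Step~2: the Cauchy--Schwarz bound in Step~1 is one-sided, so converting a lower bound on $|\braket{\psi_{\dist_C}}{\phi}|^2$ into an upper bound on $\heli(\dist^A,\dist_C)$ comparable to $\heli(\dist,\dist_C)$ (rather than its square root) forces $\e = O(\eta^2)$, and hence a number of test rounds polynomial in $1/\eta$ rather than logarithmic. The secondary caveat, which the more elaborate verifier models of later sections are designed to address, is that the i.i.d.-ness of $\prov$ is essential here: without it the parameter $p$ of Step~2 is not well defined and the Chernoff bound must be replaced by a more delicate argument that handles adversarial correlations across rounds.
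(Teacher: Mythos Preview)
Your proposal is correct and follows essentially the same route as the paper: SWAP-test analysis of $\comp$ (Lemma~\ref{lem:similaritymeasure}), Chernoff concentration over the i.i.d.\ test rounds, and the Hellinger triangle inequality to pass from $\heli(\dist^A,\dist_C)$ to $\heli(\dist^A,\dist)$. The one place you are slightly more careful than the paper is in Step~1, where you allow the prover's state $\ket{\phi}$ to carry arbitrary phases and use the triangle inequality to get the one-sided bound $|\braket{\psi_{\dist_C}}{\phi}|\le 1-\heli^2(\dist_C,\dist^A)$; the paper's Corollary~\ref{cor:prelationtohellinger} instead writes the prover's state as $\ket{\psi_{\dist^A}}$ and claims equality, which tacitly assumes nonnegative real amplitudes---your version is the sound one, and since the inequality points in the direction needed for soundness, nothing changes in the conclusion.
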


We can rephrase this result by saying

\begin{center}
\textbf{Adversarial robustness in the quantum world is no harder than \\ learning generative models + robust wrt distributional shifts.} 
\end{center}
Let us explain this rephrasing in more detail.
Imagine that we were able to learn a generative model $C$ that captures the true distribution well, i.e., $\heli(\dist_C, \dist) = \eta \ll 1$. Imagine further that we were also able to learn a classifier that is robust with respect to small changes in the distribution (i.e., it is robust to distributional shifts). This means that for all $\dist^A \in \mathfrak{D}(n)$ such that $\heli(\dist^A, \dist) \leq O(\eta)$ we would have $R_{\dist^A}(f) \approx R_{\dist}(f)$ (where $R$ denotes the standard risk). Then using the protocol guaranteed by Theorem~\ref{thm:informal} we would solve the adversarial robustness problem in our quantum model.

Is it realistic to assume that we can find a generative model such that $d_H(\dist_C, \dist) \ll 1$ and a classifier robust wrt distributional shifts? 

First we address the assumption about learning generative models. Up until now we focused on the learning/inference pipeline during test time. The circuit $C$ however, might, and usually will, depend on the learning phase. As we discussed before, neither $\prov$ nor $\ver$ know $\dist$ at the beginning of the learning process.\footnote{$\ver$ and $\prov$ could have some previous knowledge of $\dist$ before learning starts but it would not affect our results.} We assume however that throughout the learning phase $\ver$ ``learns'' an approximation to $\dist$. More precisely, as is standard for classical machine learning, during the learning phase $\ver$ receives classical samples from $\dist$ and based on them computes a description of a circuit $C$ (that might be a classical\footnote{A classical circuit that generates samples from a distribution uses external randomness. We can realize such a circuit by a quantum circuit by appending it with an ancilla bit initialized to $\ket{+}$ for every bit of randomness used.} or a quantum circuit) such that $d_H(\dist ,\dist_C) \ll 1$. We note that we could have also considered a model where $\ver$ receives quantum samples $\left(\sum_{x \in \nbits} \sqrt{\dist(x)}\state{x} \right)$ during learning. Since  in this paper we are interested mostly in the test phase, we will not focus on this distinction further. Let us emphasize that the assumption about finding good generative models is natural in many situations. E.g., generative neural networks give rise to circuits with associated distributions that approximate the true distribution.\footnote{For the case of generative neural networks it is enough to represent a neural network as a circuit, which is possible as one can implement basic arithmetic operations with a standard set of gates.} 

The task of learning generative models is of course different from the standard supervised learning task, where the goal is to learn a good predictor from labeled samples. In general these two tasks are not comparable. It is reasonable to assume that in some settings learning a generative model is harder (in terms of sample complexity) than learning a predictor. This means that, as per Theorem~\ref{thm:informal}, more samples are needed to run our protocol than to run the standard supervised learning. But some increase in the sample complexity might be unavoidable as learning in the presence of an adversary is a harder problem than the standard supervised learning. 

What about the assumption that we are able to learn a classifier that is robust wrt to distributional shifts? At first glance this looks similar to the original adversarial robustness problem, namely constructing classifiers that are robust to a given perturbation set, but now the perturbation set is replaced by a constraint on the Hellinger distance between the original and the resulting distribution. Have we made progress with our protocol? We claim that yes! Recall the standard setup where the perturbations of the adversary $\adv$ are bounded in the $\ell_2$ norm by $\e$. Moreover assume that we have a classifier $f : \inspace \rightarrow \outspace$ with an error set $E$ and a standard risk $R_\dist(f) = \Prob_{x \sim \dist}[x \in E] \ll 1$. In general $\Prob_{x \sim \dist}[x \in E] \ll \Prob_{x \sim \dist}[x \in E + B_\e] $, where $B_\e$ denotes the $\ell_2$ ball of radius $\e$ and $+$ denotes the Minkowski sum. In fact, the right-hand side can be of order $1$.
We can look at the actions of $\adv$ as generating a distribution $\dist'$. More specifically, let $p : \inspace \rightarrow \inspace$ be the perturbation function that $\adv$ applies, e.g. $p(E + B_\e) \subseteq E$. We define $\dist'$ as the result of the following process (i) $x \sim \dist$ (ii) $x' = p(x)$. In that language we can rephrase the adversarial robustness problem (under $\ell_2$ perturbations) as finding classifiers resistant to distributional shifts of this form. But observe that $d_H(\dist, \dist')$ is high in this case as a considerable amount of mass (at least $\Prob_{x \sim \dist}[x \in E + B_\e] - \Prob_{x \sim \dist}[x \in E]$) was moved. This shows that the class of distributions being close to $\dist$ in the Hellinger distance is smaller than those considered in the standard adversarial robustness setups. This makes the task of learning a classifier robust to distributional shifts (under the Hellinger distance) easier. We note that the robustsness to distributional shifts assumption is also natural in some situations. There is a considerable literature about finding such classifiers, see for instance \citet{shiftsbook}, \citet{carlinishifts}, \citet{imagenettoimagenet} or the PhD thesis of Martin Arjovsky for a survey \citep{arjovskythesis}.
 
The first key contribution of our result is that it reduces adversarial robustness to two well-studied problems in the field of machine learning, namely
the problem of finding generative models and the problem of devising classifiers that are robust wrt distributional shifts.
Of course, these two problems might be harder than the standard supervised learning setup. But reductions of this sort embed the problem of adversarial robustness firmly into the learning theory framework. The second key contribution is that in our framework does not rely on specific (and hence also somewhat arbitrary) threat models like bounded perturbations. Instead we identify a single distance notion (the Hellinger distance) that suffices to consider to solve the adversarial robustness problem in our model. %This paves the way for defenses against adversarial attacks that don't assume any particular structure on the adversary.

%This reduction also points to a separation between the quantum and classical world from the point of view of adversarial robustness \comm{GG}{Not the best sentence, not sure if it should be here} . Our result paves a way for other (stronger) reductions of this flavor.
%We address these concerns at the end of this overview in the Discussion paragraph.

This protocol might not be fully satisfactory as we require the verifier to have quantum capabilities. Our aim is to prove a similar result to Theorem~\ref{thm:informal} in the case where the verifier is fully classical. As a first step towards this goal we describe a protocol for a model where $\ver$ has access to a constant-size quantum computer. In the final step we then generalize the result to a model where $\ver$ is classical.

\begin{figure}
\centering
\begin{quantikz}
\lstick{$\ket{0}_{\text{out}}$} \slice{$t = 0$} & \qw \slice{$T$} & \gate{H} \slice{$T+1$}  &  \ctrl{1} \slice{$T+n+1$} & \gate{H} \slice{} & \gate{NOT} \slice{$T+n+3$} & \meter{} & \qw  \\
\lstick{$\ket{\psi_\dist}_{\text{adv}} \in (\mathbb{C}^2)^{\otimes n }$} & \qwbundle[alternate]{} & \qwbundle[alternate]{} & \gate[wires=2]{SWAP} \qwbundle[alternate]{} & \qwbundle[alternate]{} & \qwbundle[alternate]{} & \qwbundle[alternate]{} & \qwbundle[alternate]{} \\
\lstick{$\ket{0}^{\otimes n}_{\text{aux}}$} & \gate{U_C} \qwbundle[alternate]{} & \qwbundle[alternate]{} & \qwbundle[alternate]{} &  \qwbundle[alternate]{} & \qwbundle[alternate]{} & \qwbundle[alternate]{}  & \qwbundle[alternate]{}
\end{quantikz}
\caption{Comparison Circuit with Time Slices} \label{fig:CS2}
\end{figure}
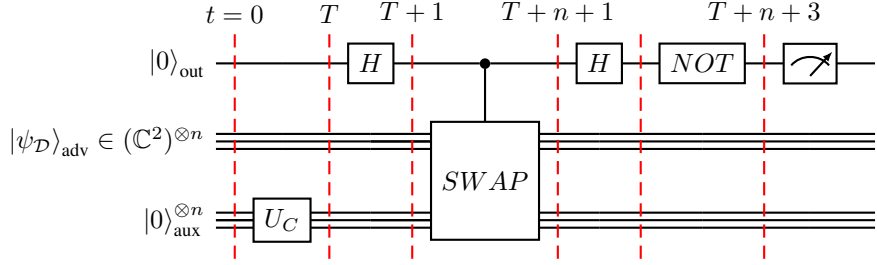

\paragraph{Constant Memory Quantum Verifier. } In this model the messages in the protocol can still be quantum. (But we will see that in our protocol only $\prov$ will send quantum states and $\ver$ will send only classical messages.) But $\ver$ now only has access to a constant-size quantum computer and can store only a constant number of qubits at each point in time $\ver$. The only operation that will be required from $\ver$ is measuring the qubits sent by $\prov$ in either the $Z$ or the $X$ basis. Protocols of this form are called receive-and-measure protocols and were already previously considered in the literature, see e.g. \cite{fitzsimons}.

Our goal is to emulate the protocol that we designed in the previous step in this more restrictive constant-quantum memory model. The idea is the following. We let $\prov$ choose an $n$-qubit state $\ket{\psi_{\dist^A}}$ and then force her to create a state $\ket{\phi}$ that depends on $\ket{\psi_{\dist^A}}$ and to send this state to $\ver$.\footnote{By sending the state to $\ver$ we mean sending the state one qubit at a time. Whenever a qubit arrives to $\ver$ he has a choice whether to keep it or discard it. At all times the number of qubits $\ver$ stores cannot exceed the constant predefined number.} The state $\ket{\phi}$ should satisfy the following properties. When $\ver$ measures $\ket{\phi}$ in the $Z$ basis then (i) with probability $\Omega(1/T)$ the distribution of outcomes of measuring one of the qubits is close to the distribution of measuring the output qubit of $\comp \ket{0}\ket{\psi_{\dist^A}}\ket{0^{\otimes n}}$ in the $Z$ basis (ii) with probability $\Omega(1/T)$ $\ver$ can obtain $\x' \sim \dist^A$.\footnote{Note that $\x' \in \nbits$ but $\ver$ has only a constant quantum memory. But it is possible to realize a protocol with these properties. Imagine that while the qubits come to $\ver$ one by one he measures a qubit, records the result and discards the qubit making room for the next ones. In total he collects many measurement outcomes out of which he can create $\x'$.} These two operations emulate the steps $\ver$ performed in the previous protocol for $b = 0$ and $b  = 1$, respectively. Note that the operations succeed only with probability $\Omega(1/T)$ but this suffice for our purpose. The main question is how to force $\prov$ to create $\ket{\phi}$ with these properties?

To solve this problem we use the well-known circuit-to-Hamiltonian reduction introduced in \citet{kitaevbook}. This reduction was originally used to show that a local Hamiltonian problem is QMA-complete. Later on it was a crucial component in the the delegation of quantum computation in the constant quantum memory model \citep{fitzsimons} and in the delegation of quantum computation with a classical verifier in \citet{mahadev}. The properties that we require from the reduction are stronger than the ones usually needed. Due to these stringent constraints we cannot use the reduction in a black-box form but need to analyze it in detail.

What is the purpose of this reduction in our context? The circuit-to-Hamiltonian reduction allows to reduce the computation of a quantum circuit $\comp$ to estimating an energy of a state $\ket{\rho}$ with respect to a local Hamiltonian $H_\comp$. In particular, it allows us to build a protocol that forces $\prov$ to prepare a so-called \textit{history state} $\ket{\phi}$ of $\comp$. Assume that $\prov$ chooses to evaluate $\comp$ on a state $\ket{\psi_{\dist^A}}$. Assume further that the circuit $C$ has $T$ gates and denote $T+n+3$ by $T'$. Then denote by $\ket{\xi_0}, \ket{\xi_1}, \dots, \ket{\xi_{T+n+3}}$ the $(2n+1)$-qubit states, where $\ket{\xi_i}$ is the state after $\comp$ is computed for $i$ steps on $\ket{\psi_{\dist^A}}$. We refer the reader to Figure~\ref{fig:CS2}, where the $\xi_i$'s are depicted as time slices in $\comp$. With this notation the history state is defined as:
$$
\ket{\phi} := \frac{1}{\sqrt{T'+1}} \left( \ket{0}_{\text{clock}}\ket{\xi_0}_{\text{comp}} + \ket{1}_{\text{clock}}\ket{\xi_1}_{\text{comp}} + \ket{2}_{\text{clock}}\ket{\xi_2}_{\text{comp}} + \dots + \ket{T'}_{\text{clock}}\ket{\xi_{T'}}_{\text{comp}} \right).
$$
Hence, $\ket{\phi}$ represents a history of the evaluation of $\comp$. It is a superposition of states of the circuit after applying $0,1,2 \dots, T'$ gates of the circuit tensored with a state representing a \textit{clock}. We denoted by comp the concatenation of the three registers out, adv, aux. For instance $\ket{\xi_0}_{\text{comp}} = \ket{0}_{\text{out}}\ket{\psi_{\dist^A}}_{\text{adv}}\ket{0^{\otimes n}}_{\text{aux}}$.

Assume for now that $\prov$ sends $\ket{\phi}$ to $\ver$. We will show that with such a state it is possible to realize the two properties we were hoping for. $\ver$ measures $\ket{\eta}$ in the $Z$ basis and depending on the outcome of measuring the clock register performs further actions.

If the outcome of measuring the clock register is equal to $T'$, which by definition of $\ket{\phi}$ happens with probability $\frac{1}{T'+1}$, then the distribution of measuring the out register is exactly equal to the desired distribution. This is because $\ket{\xi_{T'}}_{\text{comp}}$ represents the last slice of the computation of $\comp$ (see Figure~\ref{fig:CS2}). 

If the outcome of measuring the clock register is equal to $0$ then the distribution of measuring the adv register is exactly equal to $\dist^A$. This is because $\ket{\xi_{0}}_{\text{comp}} = \ket{0}\ket{\psi_{\dist^A}}\ket{0^{\otimes n}}$ represents the first slice of the computation of $\comp$ (see Figure~\ref{fig:CS2}). Note that in the final protocol we also check if the outcomes of measuring the out and aux registers are all $0$. This is done for technical reasons to simplify the proof of soundness.

We realized the two properties we were looking for. Now we can emulate the protocol described in the first step (Quantum Verifier). Thus we will obtain a result similar to Theorem~\ref{thm:informal} also in this setting. Note that in each of the cases we were succeeding only with probability $\approx 1/T'$. This will influence the guarantee of Theorem~\ref{thm:informal} in this model. In particular, this will imply that we will recover $1$ sample from $\dist$ for every $T'$ states $\ket{\psi_\dist}$ provided to an honest $\prov$.

In Section~\ref{sec4.2} we will explain in more detail what it formally means that we can force $\prov$ to produce the history state. In short, the circuit-to-Hamiltonian reduction allows $\ver$ to perform local (which means involving only few qubits) checks on the state obtained from $\prov$ to check that it is in fact a history state. These local checks and the whole reduction has a flavor similar to the famous Cook-Levin proof that shows that 3-SAT is NP-complete.

\paragraph{Classical Verifier.} In the last model we consider $\ver$ that is classical and all exchanged messages are also classical. To make our protocol work we need to impose a computational restriction on $\prov$, namely we assume that $\prov$ is in QPT- Quantum Polynomial Time. 

The goal now is to adopt the protocol from the previous step to this model. The protocol can be understood as forcing $\prov$ to construct a history state by performing checks (measurements in the $X$ or the $Z$ basis) that involve only constant number of qubits. In the model where the communication is only classical we need to somehow force $\prov$ to perform the measurements chosen by $\ver$ and report the result of these measurements back to $\ver$. 

To achieve this we use an idea that was a crucial component in the delegation of quantum computation with a classical verifier in \citet{mahadev}. A similar idea was used previously in \citet{vidickrandomness} to generate certified randomness with a classical verifier. On a high level, we design a protocol that forces $\prov$ to commit to an $n$-qubit state $\ket{\phi}$, then receive instructions for measurements from $\ver$, measure $\ket{\phi}$ accordingly and report the results back to $\ver$. This procedure relies on the concept of claw-free functions, which can be understood as a post quantum cryptography scheme that allows $\ver$ to control the computation performed by $\prov$.

%To be able to generalize the protocol from the step Constant Memory Quantum Verifier to this model we need to use a stronger circuit-to-Hamiltonian reduction (a reduction to a $2$-local Hamiltonian of a specific, $XXZZ$, form). Such stronger reductions were considered in \citet{2localhamiltonian}.

We will not go into more detail in this overview and we refer the reader to Section~\ref{sec4.3} for a comprehensive treatment. We summarize the overview by saying that also in this model we arrive at a result similar to Theorem~\ref{thm:informal}. This means that if we assume that we are able to learn generative models close to the true distribution in the Hellinger distance and our classifiers are resistant to small distributional shifts in the same distance measure then adversarial robustness is solved in a model where a classical $\ver$ interacts with a quantum $\prov$ who receives quantum samples from nature.

\paragraph{Discussion. } For the most part of the paper we assume that the $\prov$ acts in an i.i.d. fashion. For the fully-quantum verifier we give a proof also for general setting where we drop the i.i.d. assumption. To keep the exposition manageable we do not provide general proofs for the other two models. 

How can we check the success probability of the prover?
Assuming that the prover behaves in an i.i.d. fashion, it suffices to run the protocol $(2/\epsilon) \log(1/\delta)$ times. If the fraction of successes is bigger than $1-\e/2$ then we know with confidence $ 1 -\delta$ that the success probability is at least $1-\e$.

%First we address the assumption that the learner can compute a circuit $C$ that satisfies $\dist \approx \dist_C$. As we mentioned before in some situations the sample complexity of learning a good generative model is higher that learning a good classifier. This means that, as per Theorem~\ref{thm:informal}, more samples are needed to run our protocol than to do the standard supervised learning. But that some increase in the sample complexity might be unavoidable as learning in the presence of an adversary is a harder problem than the standard learning. What our theorem provides is a reduction from the adversarial robustness problem to more well known problems: learning generative models and resistance to distributional shifts. Our result paves a way for other (stronger) reductions of this flavor.

%There is another interesting answer to the problem of finding $C$ that satisfies $\dist \approx \dist_C$. Imagine that the learner found $C$ such that $\dist_C$ captures only some part of structure of $\dist$. For example, assume that $\dist$ is a distribution on images of cats and dogs and the learning task is to distinguish the two cases. Moreover, assume that $\dist_C$ is a distribution on images containing two eyes and four legs that conditioned on being cats or dogs is equal to $\dist$. Then clearly $\dist \not\approx \dist_C$. We claim however that Theorem~\ref{thm:informal} still gives interesting guarantees.

\section{Adversarial Robustness Protocols}

We are now ready to define our protocols formally and analyze their correctness. We start with a protocol in a model where $\ver$ has access to a quantum computer, then move on to a model where $\ver$ only has a constant memory quantum computer, and we conclude with a model where $\ver$ is fully classical. Each of these models is presented in its own section. 

\subsection{Quantum Verifier}\label{sec4.1}

As we described in Section~\ref{sec:overview} all protocols will be based on the Comparison Circuit defined in Figure~\ref{fig:CS1}. We note that a circuit of this form, i.e., a circuit where we first apply the H gate to a control qubit, then apply a circuit conditioned on that qubit, and finally apply H to the qubit again, is a key component of many quantum algorithms \citep{quantumalgbook}. It can be viewed as a low-precision
form of the eigenvalue estimation procedure associated with Shor’s algorithms for
factoring and computing discrete logarithms. 

Recall that the high-level idea for the protocol, in the model where $\ver$ has access to a quantum computer, is as follows. $\ver$ receives $\ket{\psi_{\dist^A}}$ from $\prov$, $\ver$ flips a coin and depending on the outcome either measures $\ket{\psi_{\dist^A}}$ in the $Z$ basis, to generate a sample, or evaluates $\comp$ on $\ket{\psi_{\dist^A}}$ to estimate $\heli(\dist^A, \dist_C)$. If the estimate for the distance is sufficiently small, $\ver$ accepts the interaction and returns the collected samples. The protocol is defined in Figure~\ref{fig:quantumverifier_protocol}. We start by assuming that $\prov$ acts in an i.i.d. fashion and that the states $\prov$ sends are pure. We discuss how to remove these assumptions in Sections~\ref{sec:quantumnoniid} and \ref{sec:quantumvermixedstates}.

\begin{figure}
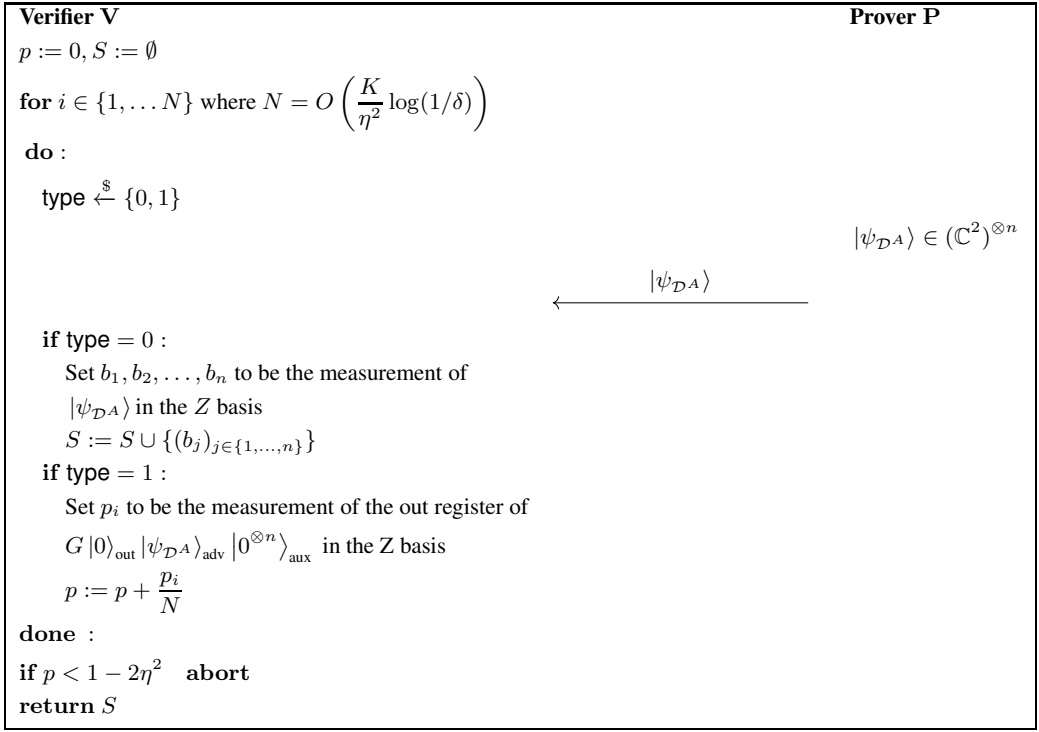

	\scalebox{0.97}{
        \fbox{
            \procedure{}{
                \textbf{Verifier } \ver \>\>  \textbf{Prover } \prov	\\
                p := 0, S := \emptyset \> \> \\
                \pcfor i \in \{1,\dots N\} \text{ where } N = O \left(\frac{K}{\eta^2} \log(1/\delta)\right) \> \>\\ 
                \pcdo:\>\> \\
                \t \textsf{type} \xleftarrow[]{\$} \{0,1\}\> \> \\
                \>\> \ket{\psi_{\dist^A}} \in
                (\mathbb{C}^2)^{\otimes n} \\
                %\ket{\psi}_{\dist^A\text{,F}} \in
                % (\mathbb{C}^2)^{\otimes n}_{\dist^A} \otimes \mathcal{H}_\text{F}\\
                \> \sendmessageleft*{\ket{\psi_{\dist^A}}} \> \\ 
                \t \pcif \textsf{type} = 0: \\ 
                \t \t \text{Set } b_1, b_2, \dots, b_n \text{ to be the measurement of } \\
                \t \t \ket{\psi_{\dist^A}} \text{in the $Z$ basis}\\
                \t \t S := S \cup \{(b_j)_{j\in\{1 ,\dots , n\}}\}\\
                \t \pcif \textsf{type} = 1: \\ 
                \t \t \text{Set } p_i \text{ to be the measurement of the out register of } \\
                \t \t \comp \ket{0}_\out \ket{\psi_{\dist^A}}_\adver \ket{0^{\otimes n}}_\aux \text{ in the Z basis} \\
                \t \t p := p + \frac{p_i}{N}\\
                \pcdone : \\
                \pcif p < 1 -2\eta^2 %\sqrt{1 - \sqrt{2p-1}} \leq 6\eta:\\
                \t \pcabort \\
                \pcreturn S
                }
        }
    }
    \caption{The interactive protocol for the model where the verifier has access to a quantum computer.}
    \label{fig:quantumverifier_protocol}
\end{figure}

Let us prove the correctness of this protocol. We start by showing how the distribution of measuring the out register of $\comp$ relates to the Hellinger distance of $\dist_C$ and $\dist^A$.

\begin{lemma}\label{lem:similaritymeasure}
For every $\ket{\psi_{\dist^A}}$ and $C$ the probability of obtaining outcome $\ket{1}$ when measuring the out register of $\comp \ket{0}_\out\ket{\psi_{\dist^A}}_\adver \ket{0^{\otimes n}}_\aux$  is equal to
$$ 
\frac12 \left(1 + \left|\braket{\psi_{\dist^A}}{\psi_{\dist_C} } \right|^2 \right).
$$
\end{lemma}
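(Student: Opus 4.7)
The circuit in Figure~\ref{fig:CS1} is essentially the standard SWAP test (with an additional $U_C$ preparing one of the states and a final NOT that flips the roles of the two measurement outcomes), so my plan is simply to track the state through the circuit gate by gate and read off $\Prob[\text{out}=1]$ as the squared norm of the $\ket{1}_\out$ component.

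First I would rewrite $U_C\ket{0^{\otimes n}}_\aux = \ket{\psi_{\dist_C}}_\aux$, so that immediately after the first gate the state is
\[
\ket{0}_\out \otimes \ket{\psi_{\dist^A}}_\adver \otimes \ket{\psi_{\dist_C}}_\aux.
\]
Next I would apply the Hadamard on the out register, giving $\tfrac{1}{\sqrt{2}}(\ket{0}+\ket{1})_\out\ket{\psi_{\dist^A}}_\adver\ket{\psi_{\dist_C}}_\aux$. Applying the controlled-SWAP (controlled on the out register) then produces
\[
\tfrac{1}{\sqrt{2}}\bigl(\ket{0}_\out\ket{\psi_{\dist^A}}_\adver\ket{\psi_{\dist_C}}_\aux + \ket{1}_\out\ket{\psi_{\dist_C}}_\adver\ket{\psi_{\dist^A}}_\aux\bigr).
\]
A second Hadamard on out turns this into
\[
\tfrac{1}{2}\Bigl[\ket{0}_\out\bigl(\ket{\psi_{\dist^A}}\ket{\psi_{\dist_C}}+\ket{\psi_{\dist_C}}\ket{\psi_{\dist^A}}\bigr)+\ket{1}_\out\bigl(\ket{\psi_{\dist^A}}\ket{\psi_{\dist_C}}-\ket{\psi_{\dist_C}}\ket{\psi_{\dist^A}}\bigr)\Bigr],
\]
and the final NOT gate simply swaps the labels of $\ket{0}_\out$ and $\ket{1}_\out$.

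Hence the probability of measuring $\ket{1}_\out$ equals the squared norm of the (symmetric) combination,
\[
\Prob[\text{out}=1] \;=\; \tfrac{1}{4}\bigl\| \ket{\psi_{\dist^A}}\ket{\psi_{\dist_C}}+\ket{\psi_{\dist_C}}\ket{\psi_{\dist^A}}\bigr\|^2.
\]
Expanding the inner product and using $\braket{\psi_{\dist^A}}{\psi_{\dist^A}}=\braket{\psi_{\dist_C}}{\psi_{\dist_C}}=1$ together with the elementary identity $\langle a\otimes b\,|\,b\otimes a\rangle = |\braket{a}{b}|^2$ gives $\tfrac{1}{4}(2+2|\braket{\psi_{\dist^A}}{\psi_{\dist_C}}|^2)$, which is the claimed expression.

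There is no real obstacle: the only thing to be careful about is the bookkeeping of which register the controlled-SWAP acts on and the effect of the final NOT (without it, one would get the probability of measuring $\ket{0}$ rather than $\ket{1}$). If desired, one can also double-check the answer by noting that this quantity is monotone in $|\braket{\psi_{\dist^A}}{\psi_{\dist_C}}|^2$ and, combined with the Hellinger identity $1-\heli^2(\mathcal{P},\mathcal{Q})=\sum_x\sqrt{\mathcal{P}(x)\mathcal{Q}(x)}$ from Section~\ref{sec:prelim}, this yields the monotone relation to $\heli(\dist^A,\dist_C)$ that Section~\ref{sec:overview} advertises.
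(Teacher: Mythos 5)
Your proposal is correct and follows essentially the same route as the paper's proof: both track the state gate by gate through $U_C$, the Hadamards, the controlled-SWAP, and the final NOT, and then read off $\Prob[\text{out}=1]$ as $\tfrac14\bigl\|\ket{\psi_{\dist^A}}\ket{\psi_{\dist_C}}+\ket{\psi_{\dist_C}}\ket{\psi_{\dist^A}}\bigr\|^2 = \tfrac12\bigl(1+|\braket{\psi_{\dist^A}}{\psi_{\dist_C}}|^2\bigr)$. No gaps.
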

\begin{proof}
We analyze the evolution of the state
\begin{align*}
&(\text{NOT} \otimes \id \otimes \id)(H \otimes \id \otimes \id)(\text{CSWAP})(H \otimes \id \otimes \id)(\id \otimes \id \otimes U_C)\ket{0}\ket{\psi_{\dist^A}}\ket{0^{\otimes n}}
 \\
& = (\text{NOT} \otimes \id \otimes \id)(H \otimes \id \otimes \id)(\text{CSWAP})(H \otimes \id \otimes \id)\ket{0}\ket{\psi_{\dist^A}}\ket{\psi_{\dist_C}}
 \\
& = (\text{NOT} \otimes \id \otimes \id)(H \otimes \id \otimes \id)(\text{CSWAP})\left(\frac{\ket{0} + \ket{1}}{\sqrt{2}} \right)\ket{\psi_{\dist^A}}\ket{\psi_{\dist_C}} \\
& = (\text{NOT} \otimes \id \otimes \id)(H \otimes \id \otimes \id) \frac{1}{\sqrt{2}}\left(\ket{0}\ket{\psi_{\dist^A}}\ket{\psi_{\dist_C}} + \ket{1}\ket{\psi_{\dist_C}}\ket{\psi_{\dist^A}} \right)\\
& = (\text{NOT} \otimes \id \otimes \id)\frac{1}{2}\left((\ket{0} + \ket{1})\ket{\psi_{\dist^A}}\ket{\psi_{\dist_C}} + (\ket{0} - \ket{1})\ket{\psi_{\dist_C}}\ket{\psi_{\dist^A}} \right)\\
& = \frac{1}{2}\left(\ket{1} \left[\ket{\psi_{\dist^A}}\ket{\psi_{\dist_C}} + \ket{\psi_{\dist_C}}\ket{\psi_{\dist^A}} \right] + \ket{0}\left[\ket{\psi_{\dist^A}}\ket{\psi_{\dist_C}} -\ket{\psi_{\dist_C}}\ket{\psi_{\dist^A}} \right] \right).
\end{align*}
The probability of obtaining outcome $\ket{1}$ when measuring the out register in the $Z$ basis is then
\begin{align*}
&\frac14 \left[ \left(\bra{\psi_{\dist^A}}\bra{\psi_{\dist_C}} + \bra{\psi_{\dist_C}}\bra{\psi_{\dist^A}}\right) \left(\ket{\psi_{\dist^A}}\ket{\psi_{\dist_C}} + \ket{\psi_{\dist_C}}\ket{\psi_{\dist^A}}\right) \right] \\
& = \frac14 [ \braket{\psi_{\dist^A}}{\psi_{\dist^A}}\braket{\psi_{\dist_C}}{\psi_{\dist_C}} + \braket{\psi_{\dist^A}}{\psi_{\dist_C}}\braket{\psi_{\dist_C}}{\psi_{\dist^A}} + \\
&+\braket{\psi_{\dist_C}}{\psi_{\dist^A}}\braket{\psi_{\dist^A}}{\psi_{\dist_C}} + \braket{\psi_{\dist_C}}{\psi_{\dist_C}}\braket{\psi_{\dist^A}}{\psi_{\dist^A}} ] \\
&=\frac14 \left[ 2 \|\psi_{\dist^A}\|^2 \|\psi_{\dist_C}\|^2 + 2 \braket{\psi_{\dist_C}}{\psi_{\dist^A}}\braket{\psi_{\dist^A}}{\psi_{\dist_C}} \right] \\
&=\frac12 \left[ 1 + |\braket{\psi_{\dist^A}}{\psi_{\dist_C}}|^2 \right].
\end{align*}
\end{proof}

\begin{corollary}\label{cor:prelationtohellinger}
The probability of obtaining outcome $\ket{1}$ when measuring the out register of $\compcirc$ executed on $\ket{\psi_{\dist^A}}$, which also can be expressed as $\bra{0^{\otimes n}}_\aux \bra{\psi_{\dist^A}}_\adver \bra{0}_\out G^\dagger \Pi_{\out}^{(1)} \compcirc \ket{0}_\out \ket{\psi_{\dist^A}}_\adver \ket{0^{\otimes n}}_\aux$ is equal to
$$
\frac12 \left(1 + (1 - \heli ^2(\dist^A, \dist_C))^2  \right).
$$
\end{corollary}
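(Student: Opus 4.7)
The plan is to combine Lemma~\ref{lem:similaritymeasure} with the Hellinger identity recorded in Section~\ref{sec:prelim}. Lemma~\ref{lem:similaritymeasure} already reduces the desired probability to
$$
\tfrac{1}{2}\bigl(1 + |\braket{\psi_{\dist^A}}{\psi_{\dist_C}}|^2\bigr),
$$
so the only remaining task is to evaluate $|\braket{\psi_{\dist^A}}{\psi_{\dist_C}}|^2$ in terms of the Hellinger distance between $\dist^A$ and $\dist_C$.

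First I would expand the inner product directly using the definition $\ket{\psi_{\dist}} = \sum_{x \in \nbits} \sqrt{\dist(x)}\ket{x}$ and the orthonormality of the computational basis:
$$
\braket{\psi_{\dist^A}}{\psi_{\dist_C}} = \sum_{x \in \nbits} \sqrt{\dist^A(x)\, \dist_C(x)}.
$$
Since both summands are real and non-negative, this quantity is a non-negative real, so squaring the modulus is the same as squaring the sum. Next I would invoke the useful identity $1 - \heli^2(\mathcal{P}, \mathcal{Q}) = \sum_{x \in \nbits} \sqrt{\mathcal{P}(x)\mathcal{Q}(x)}$ stated in the preliminaries with $\mathcal{P}=\dist^A$ and $\mathcal{Q}=\dist_C$, which gives $\braket{\psi_{\dist^A}}{\psi_{\dist_C}} = 1 - \heli^2(\dist^A, \dist_C)$ and hence $|\braket{\psi_{\dist^A}}{\psi_{\dist_C}}|^2 = (1 - \heli^2(\dist^A, \dist_C))^2$. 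Plugging back into the expression from Lemma~\ref{lem:similaritymeasure} yields the claimed formula.

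There is really no obstacle here; the statement is a one-line substitution once the Hellinger identity is recalled. The only mild care is to make sure the inner product is indeed non-negative real (which it is, because the amplitudes $\sqrt{\dist^A(x)}$ and $\sqrt{\dist_C(x)}$ are chosen non-negative by convention in the definition of $\ket{\psi_\dist}$), so that $|\cdot|^2$ coincides with the square of $1 - \heli^2(\dist^A,\dist_C)$ rather than with the square of its modulus picking up a phase.
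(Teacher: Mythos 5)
Your proposal is correct and follows exactly the paper's own route: the paper's proof is the one-liner ``apply Lemma~\ref{lem:similaritymeasure} and the identity $1 - \heli^2(\mathcal{P},\mathcal{Q}) = \sum_{x}\sqrt{\mathcal{P}(x)\mathcal{Q}(x)}$,'' which is precisely the substitution you carry out. Your extra remark that the inner product is a non-negative real (so no phase is lost in taking the modulus squared) is a sensible piece of care the paper leaves implicit.
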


\begin{proof}
Apply Lemma~\ref{lem:similaritymeasure} and the $1 - \heli^2(\mathcal{P}, \mathcal{Q}) = \sum_{x \in \{0,1\}^n } \sqrt{\mathcal{P}(x)\mathcal{Q}(x)}$ identity.
\end{proof}

%\begin{remark}
%Note that $\ket{\psi_{\dist^A}}$ might be a state with amplitudes that are not real numbers. However, we can still associate a distribution to this state, which is exactly the distribution of outcomes when measuring $\ket{\psi_{\dist^A}}$ in the $Z$ basis. 
%\end{remark}

Next we show that the number of times each of the types ($0$ and $1$) occurs is at least $N/4$ with high probability. This is a simple application of the Chernoff bound.

\begin{lemma}\label{lem:eventforquantumverifier}
    Let $n_0,n_1$ be the number of times each type occurs in the protocol from Figure~\ref{fig:quantumverifier_protocol}. If $N = \Omega(\log(1/\delta))$ then $\Prob[n_0,n_1 > \frac{N}{4}] \geq 1 - \delta$.
\end{lemma}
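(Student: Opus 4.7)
The plan is to apply Fact~\ref{fact:chernoff} directly. In the protocol, the variable $\textsf{type}$ is drawn uniformly from $\{0,1\}$ in each of the $N$ rounds independently, so $n_0$ is the sum of $N$ i.i.d.\ Bernoulli random variables with parameter $p = 1/2$, and $n_1 = N - n_0$.

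The first observation I would record is that the event $\{n_0 > N/4\} \cap \{n_1 > N/4\}$ is exactly the event $\{N/4 < n_0 < 3N/4\}$, i.e.\ $|n_0/N - 1/2| < 1/4$. Hence its complement is contained in $\{|n_0/N - 1/2| \ge 1/4\}$, and it suffices to upper bound the latter.

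Next I would apply Fact~\ref{fact:chernoff} with $p = 1/2$ and $\varepsilon = 1/4$ to obtain
$$
\Prob\!\left[\left|\tfrac{1}{N}\textstyle\sum_{i=1}^{N} X_i - \tfrac{1}{2}\right| \ge \tfrac{1}{4}\right] \;\le\; 2 e^{-N/32},
$$
where $X_i$ is the indicator that $\textsf{type} = 0$ in round $i$. To make this at most $\delta$ it suffices to take $N \ge 32 \log(2/\delta)$, which is precisely the hypothesis $N = \Omega(\log(1/\delta))$. Combined with the inclusion from the previous step, this yields $\Prob[n_0, n_1 > N/4] \ge 1 - \delta$.

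There is no real obstacle here; the only thing to be careful about is recognizing that the two conditions $n_0 > N/4$ and $n_1 > N/4$ are equivalent to a single two-sided deviation bound on $n_0$, so that one application of the Chernoff bound suffices rather than a union bound (which would only cost an extra factor of $2$ and would not affect the asymptotic statement anyway).
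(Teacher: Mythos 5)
Your proposal is correct and follows essentially the same route as the paper: both view $n_0$ as a sum of $N$ i.i.d.\ Bernoulli$(1/2)$ variables and apply the Chernoff--Hoeffding bound with $\e = 1/4$ to get the $2e^{-N/32}$ tail. The only (cosmetic) difference is that you observe $n_1 = N - n_0$ and so fold both conditions into one two-sided deviation event, whereas the paper applies the bound to each $n_b$ and takes a union bound; this changes nothing asymptotically.
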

\begin{proof}
For $b \in \{0,1\}$, $n_b$ can be seen as a sum of random Bernoulli variables $\{x_i\}_{i\in [N]}$ with parameter $1/2$. Then, by Fact~\ref{fact:chernoff}, we get that $\Prob[|\frac{n_b}{N} - \frac12| > \frac14] \leq 2e^{-\frac{N}{32}} \leq \frac\delta2$. We finish by applying the union bound over the error events.
%For $b \in \{0,1\}$, $T_b$ can be seen as a sum of random Bernoulli variables $\{X_i\}_{i\in [N]}$ with expectation $1/2$, hence $\mu = E[T_b] = N/2$ for all $b$. According to the lower tail Chernoff bound, 
%\begin{align*}
%Pr[T_b \leq (1-\nu) \mu ] \leq e^{-\frac{2 \nu^2 \mu^2}{N}}.
%\end{align*}
%By setting $\nu = 1/2 $ we get 
%\begin{align*}
%    Pr[T_b \leq N/4] \leq e^{-\frac{N}{8}}.
%\end{align*}
%By the union bound the probability of $\min\{T_0,T_1\} < N/4$ is less than $2 e^{- \frac{N}{8}}$. Hence by setting $N$ larger than $8 \ln(\frac{2}{\delta})$, the probability of $\mathcal{E}$ happening is at least $1-\delta$. 
\end{proof}

%\begin{fact}[Chernoff-Hoeffding]\label{fact:chernoff}
%Let $X_1, \dots, X_k$ be independent Bernoulli variables with parameter $p$. Then for every $0 < \delta < 1$
%$$
%\Prob \left[\left| %\frac{1}{k}\sum_{i=1}^k X_i - p %\right| > \delta \right] \leq %2e^{-\frac{\delta^2 k}{2}}.
%$$

%Let $x \approx_{\delta, \alpha} y$ denote $x \in [(1-\delta)y - \alpha, (1+\delta)y + \alpha]$. Then there exists a universal constant $\Gamma$ such that if $X_1, \dots, X_k$ are independent Bernoulli variables with parameter $p$ then for all $0< \delta \leq 1/2, \alpha > 0$
%$$
%\frac1k \sum_{j=1}^k X_i \approx_{\delta, \alpha} p,
%$$
%holds with probability $1 - 2e^{-\Gamma \delta \alpha k}$.

%\end{fact}

%\begin{lemma}\label{lem:chernoffestimatingp}
%Let $p \in [0,1]$ and $X_1,\dots, X_k \sim \text{Ber}(p)$ be $k$ independent Bernouli random variables with parameter $p$. Then $\Prob \left[\left|\frac{1}{k} \sum_{j=1}^k X_i - p\right| > \e \right] \leq 2e^{-\frac{\e^2 k}{3p}}$.    
%\end{lemma}

%\begin{proof}
%From the Chernoff bound we get that for $\delta < 1$ $
%\Prob \left[\left|\sum_{j=1}^k X_i - k p\right| > \delta k p \right] \leq 2e^{-\frac{\delta^2 k p }{3}}$. Choosing $\delta = \frac{\e}{p}$ we get the advertied inequality.
%\end{proof}

We are now ready to combine all the pieces and prove that the protocol from Figure~\ref{fig:quantumverifier_protocol} guarantees that if $\ver$ accepts the interaction then with high probability the samples he collected are i.i.d. from a distribution close to $\dist$.

\begin{theorem}[Quantum Verifier]\label{thm:quantumverifier}
For every circuit $C$ acting on $n$ qubits, for every $\delta \in (0,\frac13), K \in \N$ and all $\eta > 0$ sufficiently small there exists an interactive protocol between a quantum verifier $\ver$ and a quantum prover $\prov$ with the following properties. The protocol runs in $N = O\left(\frac{K}{\eta^2} \log(1/\delta)\right)$ rounds and in each round $\prov$ sends a pure quantum state on $n$ qubits to $\ver$. At the end of the protocol $\ver$ outputs $\bot$ when it rejects the interaction or it outputs $ S = \{x_1, \dots, x_{|S|} \}$, $x_i \in \nbits$, when it accepts.

\begin{itemize}
    \item (Completeness)There exists $\prov^{\oracle(*)}$ such that for every $\mathcal{D} \in \mathfrak{D}(n)$ satisfying $\heli(\mathcal{D}, \mathcal{D}_C) \leq \eta$ the following holds. With probability $1- \delta$ over the randomness in the protocol  $\prov^{\oracle(\mathcal{D})}$ succeeds, $S \sim_{\text{i.i.d.}} \mathcal{D}^{|S|}$, and $|S| \geq \Omega(K)$.
    \item (Soundness) For every $\prov$ that succeeds with probability at least $ \frac23$ we have $S \sim_{\text{i.i.d.}} (\dist^A)^{|S|}$ and $\heli(\dist_C, \dist^A) \leq O(\eta)$. \footnote{$\dist^A$ is the implicit distribution from which we collect the samples, which is the distribution corresponding to $\ket{\psi_{\dist_A}}$} 
\end{itemize}
\end{theorem}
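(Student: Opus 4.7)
The plan is to handle completeness and soundness separately, both hinging on Corollary~\ref{cor:prelationtohellinger} (which rewrites the Z-measurement probability on the out register of $\comp$ as a monotone function of $\heli(\dist^A, \dist_C)$) combined with the Chernoff-Hoeffding bound (Fact~\ref{fact:chernoff}) applied to the Bernoulli sequence $\{p_i\}$ produced in the type-$1$ rounds. Lemma~\ref{lem:eventforquantumverifier} provides the auxiliary fact that each type occurs at least $N/4$ times except with probability $\leq \delta/3$, which I will condition on in both directions so that the sample $S$ and the estimate $p$ each come from a large enough pool.

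For completeness, I take the honest prover to query $\oracle(\dist)$ and forward $\ket{\psi_\dist}$ at every round. Each type-$0$ round then performs an independent Z-basis measurement of a fresh copy of $\ket{\psi_\dist}$, producing i.i.d. draws from $\dist$; combined with Lemma~\ref{lem:eventforquantumverifier} and the stated choice $N = \Omega(K/\eta^2 \log(1/\delta))$, this yields $|S| \geq N/4 = \Omega(K)$ with probability $\geq 1 - \delta/3$. The type-$1$ outcomes are i.i.d. Bernoullis with common mean $\tfrac12\bigl(1 + (1-\heli^2(\dist, \dist_C))^2\bigr) \geq 1 - \eta^2$ by Corollary~\ref{cor:prelationtohellinger} and the hypothesis $\heli(\dist, \dist_C) \leq \eta$; applying Fact~\ref{fact:chernoff} with deviation $\eta^2$ makes their empirical mean at least $1 - 2\eta^2$, so the acceptance check passes, with probability $\geq 1 - \delta/3$. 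A union bound yields the claimed success probability $\geq 1 - \delta$.

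For soundness, the i.i.d. assumption on $\prov$ lets me take the message register to be the same pure state $\ket{\psi_{\dist^A}}$ in every round (any local ancilla retained by $\prov$ may be traced out without affecting the marginal seen by $\ver$). Hence the type-$0$ rounds still produce i.i.d. draws from $\dist^A$, and the type-$1$ Bernoullis share a common mean $\mu := \tfrac12\bigl(1 + (1-\heli^2(\dist^A, \dist_C))^2\bigr)$. If $\ver$ accepts with probability $\geq \tfrac23$, then because the Chernoff tail contributes at most $\delta/3 < \tfrac13$, the Bernoulli mean must itself satisfy $\mu \geq 1 - 3\eta^2$; otherwise the empirical mean is forced below the threshold $1 - 2\eta^2$ except on a low-probability tail. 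Unpacking $\mu$ gives $(1 - \heli^2(\dist^A, \dist_C))^2 \geq 1 - 6\eta^2$, and since $\sqrt{1-x} \geq 1 - x$ for $x \in [0,1]$, we conclude $\heli^2(\dist^A, \dist_C) \leq 6\eta^2$, i.e. $\heli(\dist^A, \dist_C) \leq O(\eta)$.

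The main bookkeeping obstacle is synchronizing three quantities: the threshold $1 - 2\eta^2$, the Chernoff deviation (which I take to be $\eta^2$), and the effective sample size $n_1 \geq N/4$ for the type-$1$ Bernoullis, so that the tail bound meets the $\delta/3$ budget at the stated $N$. Everything else---independence of the Z-measurement outcomes across rounds, the pure-state reduction on $\prov$'s outgoing marginal, and the elementary algebra turning a mean bound into a Hellinger bound---is routine once Corollary~\ref{cor:prelationtohellinger} and Lemma~\ref{lem:eventforquantumverifier} are in hand.
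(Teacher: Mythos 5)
Your proposal is correct and follows essentially the same route as the paper: honest forwarding of $\ket{\psi_\dist}$ plus Lemma~\ref{lem:eventforquantumverifier} and Fact~\ref{fact:chernoff} for completeness, and Corollary~\ref{cor:prelationtohellinger} with a Chernoff/union-bound argument to force $f(\heli(\dist^A,\dist_C)) \geq 1-3\eta^2$ and hence $\heli(\dist^A,\dist_C)=O(\eta)$ for soundness. The only (inessential) quibble is your parenthetical that a retained ancilla ``may be traced out'' while keeping the message pure --- tracing out an entangled ancilla yields a mixed state, which is exactly the case the paper defers to Theorem~\ref{thm:quantumverifiermixedstates} with a weaker $O(\eta^{1/4})$ bound; here purity is simply a hypothesis of the theorem, so no justification is needed.
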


\begin{proof}
We start with the completeness property and then move to soundness.

\paragraph{Completness.} An honest $\prov^{\oracle(\dist)}$ obtains $\ket{\psi_\dist}$ from $\oracle(\dist)$ and forwards it to $\ver$. Lemma~\ref{lem:eventforquantumverifier} guarantees that with probability $1- \frac{\delta}{2}$, $n_0,n_1 = \Omega(\frac{K}{\eta^2} \log(1/\delta))$. This automatically guarantees that $|S| \geq \Omega(K)$. 
%Moreover by Fact~\ref{fact:chernoff} 
Moreover by Fact~\ref{fact:chernoff} we have that with probability $1-\frac{\delta}{2}$ 
$$
\left| p - \bra{0^{n}}_\aux \bra{\psi_{\dist}}_\adver \bra{0}_\out G^\dagger \Pi_{\out}^{(1)} \compcirc \ket{0}_\out \ket{\psi_{\dist}}_\adver \ket{0^{n}}_\aux \right| \leq \eta^2.
$$

%$$
%\left|p - \bra{0^{\otimes n}}_\aux \bra{\psi_{\dist^A}}_\adver \bra{0}_\out G^\dagger \Pi_{\out}^{(1)} \compcirc \ket{0}_\out \ket{\psi_{\dist^A}}_\adver \ket{0^{\otimes n}}_\aux \right| \leq \eta^2.
%$$
By Corollary~\ref{cor:prelationtohellinger} we thus get that
%$$
%p \approx_{\eta^2/2,\eta^2/2} \frac12 \left(1 + (1 - \heli %^2(\dist, \dist_C))^2  \right).
%$$
%Note that $\frac12 \left(1 + (1 - \heli ^2(\dist, \dist_C))^2  \right) \in [1/2,1]$ as $\heli ^2(\dist, \dist_C)) \in [0,1]$. This allows us to conclude that also
\begin{equation}\label{eq:estimateofp}
\left|p - \frac12 \left(1 + (1 - \heli ^2(\dist, \dist_C))^2  \right) \right| \leq \eta^2,
\end{equation}
holds with probability $1 - \frac\delta2$. By assumption $\heli(\dist, \dist_C) \leq \eta$ so we get that $p \geq  1 - 2\eta^2$ (as a function $\frac{1}{2}(1+(1-x^2)^2)$ is decreasing). This means that $\prov^{\oracle(\mathcal{D})}$ succeeds with probability $1 -\delta/2$. 

\begin{comment}
Consider a function $f(x) = \frac{1}{2}(1 + (1-x^2)^2)$, it's inverse on $[1/2,1]$ is $f^{-1}(y) = \sqrt{1 - \sqrt{2y-1}}$. Moreover we have that for all $\beta \in (0,1/4)$ we have $|(f^{-1})'(y)| \leq \frac{1}{\sqrt{\beta}}$ on $(1/2 + \beta,1 - \beta)$. %Let $x|_{[1/2 + \eta,1 - \eta]} := \min(1-\eta, \max(1/2+\eta,x))$. 
Using these facts we get
\begin{align}
&\left| f^{-1}(p) - d_H(\dist, \dist_C) \right| \nonumber \\
&\leq 4\eta + |(f^{-1})'(\xi)| \cdot |p - \frac12 \left(1 + (1 - \heli ^2(\dist, \dist_C))^2 \right)| && \text{For some } \xi \in [p, d_H(\dist, \dist_C)] \nonumber \\
&\leq 4\eta + \frac{1}{\eta} \left|p - \frac12 \left(1 + (1 - \heli ^2(\dist, \dist_C))^2 \right) \right| &&  |(f^{-1})'(y)| \leq \frac{1}{\eta} \text{ on } [1/2 + \eta^2, 1 - \eta^2] \nonumber \\
&\leq 5\eta && \text{By \eqref{eq:estimateofp}}, \label{eq:hellignerapproximationerror}
\end{align}
where in the first inequality we used the following reasoning. If $p$ or $\frac12 \left(1 + (1 - \heli ^2(\dist, \dist_C))^2 \right)$ belong to $[1/2,1/2+\eta^2]$ or $[1-\eta^2,1]$ then we "move" them to $1/2+\eta^2$ and $1-\eta^2$ respectively incurring a penalty of maximally $2\eta$ (as $|f^{-1}(1/2) - f^{-1}(1/2+\eta^2)|, |f^{-1}(1) - f^{-1}(1-\eta^2)| \leq 2\eta$). By doing so the distance between the points can only decrease. Then we bound $|f^{-1}(y_1) - f^{-1}(y_2)| \leq |f^{-1}(\xi)| \cdot |y_1 - y_2|$, for some $\xi \in [y_1,y_2]$.
\end{comment}
By the union bound over the error events with probability $1 - \delta/2 - \delta/2 = 1 - \delta$ we have that $|S| \geq \Omega(K)$ and $\prov^{\oracle(\dist)}$ succeeds. The property $S \sim_{\text{i.i.d.}} \mathcal{D}^{|S|}$ holds because the state sent by $\prov$ to $\ver$ is equal to $\ket{\psi_\dist}$.
\paragraph{Soundness.} By Corollary~\ref{cor:prelationtohellinger} we get that with probability $1 - \frac\delta2$ 
$$\left|p - \frac12 \left(1 + (1 - \heli ^2(\dist^A, \dist_C))^2  \right) \right| \leq \eta^2.
$$
$\prov$ succeeds with probability $\frac23$ so by the union bound and the fact that $\frac13 + \frac\delta2 < 1$ we get that $f(\heli(\dist^A, \dist_C)) \geq p - \eta^2 \geq 1 - 3\eta^2$, where we used $f$ to denote the function $\frac{1}{2}(1 + (1-x^2)^2)$. As $f$ is a decreasing function we get that that $\heli(\dist^A, \dist_C) \leq \sqrt{1 - \sqrt{2(1-3\eta^2) - 1}} \leq 10\eta$, for sufficiently small $\eta$.
%With that we can apply Corollary~\ref{cor:circuittoham} and conclude that $\heli(\dist^A, \dist_C) \leq O(\eta^{1/4})$. 
%Similar computations to the one from the completeness paragraph guarantees that with probability $1 - \frac\delta2$
%$$
%\left| f^{-1}(p) - d_H(\dist^A, \dist_C) \right| \leq 5\eta.
%$$
%As we assumed that $\prov$ succeeds with probability $1 - \delta$ then this means that $\heli(\dist^A, \dist_C) \leq 5\eta + 6\eta = 11\eta$ and by definition the samples that we collect in $S$ come from $\dist^A$. 
\end{proof}

\subsubsection{Non i.i.d. Quantum Verifier}\label{sec:quantumnoniid}
Let us now relax the assumption that $\prov$ acts i.i.d., i.e. that $\prov$ sends the same $\ket{\psi_{\dist^A}}$ in every round. We still assume at this point that the states sent by $\prov$ are pure. For a discussion about mixed states see Section~\ref{sec:quantumvermixedstates}. First we state a slightly changed theorem.
\begin{theorem}[Quantum Verifier]\label{thm:quantumverifiernoiid}
For every circuit $C$ acting on $n$ qubits, for every $\delta \in (0,\frac13)$ and all $\eta > 0$ sufficiently small there exists an interactive protocol between a quantum verifier $\ver$ and a quantum prover $\prov$ with the following properties. The protocol runs in $N = O(\frac{1}{\eta^2} \log(1/\delta))$ rounds, in each round $\prov$ sends a pure quantum state on $n$ qubits to $\ver$. At the end of the protocol $\ver$ outputs $\bot$ when it rejects the interaction or $x \in \nbits$ when it accepts.

\begin{itemize}
    \item (Completeness) There exists $\prov^{\oracle(*)}$ such that for every $\mathcal{D} \in \mathfrak{D}(n)$ satisfying $\heli(\mathcal{D}, \mathcal{D}_C) \leq \eta$ the following holds.. With probability $1- \delta$ over the randomness in the protocol $\prov^{\oracle(\mathcal{D})}$ succeeds and $x \sim_{\text{i.i.d.}} \mathcal{D}$.
    \item (Soundness) For every $\prov$ that succeeds with probability $\geq 1- \frac\delta2$ we have that with probability $1-\delta$ over the randomness in the protocol $x \sim_{\text{i.i.d.}} \dist^A$ and $\heli(\dist_C, \dist^A) \leq O\left(\eta^{1/4}\right)$. 
\end{itemize}
\end{theorem}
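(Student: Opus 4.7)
The plan is to adapt the analysis of Theorem~\ref{thm:quantumverifier} to the setting where $\prov$ may send different states in different rounds. The key structural change to the protocol of Figure~\ref{fig:quantumverifier_protocol} is that, instead of collecting and returning an entire set $S$, $\ver$ picks a single round $i^{*} \in [N]$ uniformly at random to act as the sample round, uses the remaining $N-1$ rounds for comparison tests, and on acceptance outputs the $Z$-basis measurement of $\ket{\psi_{i^{*}}}$. Crucially, $\prov$ never learns $i^{*}$, so the sequence $\ket{\psi_{1}},\dots,\ket{\psi_{N}}$ of states it emits is independent of which round is sampled. Completeness is then essentially unchanged: an honest $\prov^{\oracle(\dist)}$ forwards $\ket{\psi_{\dist}}$ in every round, each test round succeeds with probability $\geq 1-\eta^{2}$ by Corollary~\ref{cor:prelationtohellinger}, Fact~\ref{fact:chernoff} shows the empirical test-success fraction exceeds $1-2\eta^{2}$ with probability $\geq 1-\delta$ when $N=\Theta(\eta^{-2}\log(1/\delta))$, and the sample round produces $x\sim\dist$ by construction.

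For soundness, let $\dist_{1},\dots,\dist_{N}$ be the distributions associated with the states $\ket{\psi_{1}},\dots,\ket{\psi_{N}}$ sent by an arbitrary prover, and let $q_{i} := \tfrac{1}{2}(1+(1-\heli^{2}(\dist_{i},\dist_{C}))^{2})$ be the test-success probability in round $i$ (Corollary~\ref{cor:prelationtohellinger}). Applying a bounded-mean version of Hoeffding's inequality to the $N-1$ independent test outcomes and union-bounding with the hypothesis $\Pr[\text{accept}]\geq 1-\delta/2$, with probability $\geq 1-\delta$ over the protocol randomness the empirical fraction lies within $\eta$ of $\bar q := \tfrac{1}{N}\sum_{i} q_{i}$ and simultaneously exceeds $1-2\eta^{2}$; hence $\bar q \geq 1 - O(\eta)$. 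Plugging in the definition of $q_{i}$ and using the elementary inequality $(1-a)^{2}\leq 1-a$ valid for $a\in[0,1]$ converts this into the average bound
\[
\frac{1}{N}\sum_{i=1}^{N} \heli^{2}(\dist_{i},\dist_{C}) \;\leq\; O(\eta).
\]
Since the output distribution is $\dist^{A} = \dist_{i^{*}}$ for a uniformly random round, Markov's inequality gives $\heli^{2}(\dist_{i^{*}},\dist_{C}) \leq O(\sqrt{\eta})$ except with probability $O(\sqrt{\eta})$, i.e.\ $\heli(\dist_{C},\dist^{A}) \leq O(\eta^{1/4})$; absorbing this additional $O(\sqrt{\eta})$ failure into the $\delta$-budget (for $\eta$ sufficiently small) yields the stated bound.

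The main obstacle is precisely this last Markov step: against an i.i.d.\ prover every round contributes to the same $\dist^{A}$ and the average Hellinger bound transfers directly (as in Theorem~\ref{thm:quantumverifier}), but an adversarial prover can concentrate its ``bad'' states into a small fraction of the $N$ rounds, so we can only guarantee an \emph{average} bound on $\heli^{2}(\dist_{i},\dist_{C})$, and converting an average into a per-round guarantee via Markov costs a square root. This is exactly what degrades the Hellinger exponent from $\eta$ to $\eta^{1/4}$ in the non-i.i.d.\ setting. A secondary subtlety is that $\prov$ could entangle its outputs across the $N$ rounds; this is handled by replacing $\ket{\psi_{i}}\bra{\psi_{i}}$ with the per-round reduced density matrix $\rho_{i}$, after which Corollary~\ref{cor:prelationtohellinger} still controls $q_{i}$ through the associated distribution $\dist_{i}$ and the rest of the argument goes through verbatim.
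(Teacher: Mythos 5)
Your protocol and the first half of your soundness analysis track the paper's proof closely: the paper likewise has $\ver$ output a single sample chosen uniformly from the non-test rounds, and likewise reduces soundness to a lower bound on the average of $f(\heli(\dist_C,\dist_i^A))$ over the test rounds (this is the role of its Lemma~\ref{lem:technicaleta1/4}). Your derivation of $\frac{1}{N}\sum_i \heli^2(\dist_i,\dist_C)\leq O(\eta)$ via $(1-a)^2\leq 1-a$ is correct and arguably cleaner than the paper's indicator-splitting argument.

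The gap is in the final step. You set $\dist^A=\dist_{i^*}$ for the randomly chosen sample round and apply Markov's inequality, which leaves an additional failure event of probability $O(\sqrt{\eta})$ that you propose to absorb into the $\delta$-budget. But $\delta$ and $\eta$ are independent parameters: $\delta$ may be far smaller than $\sqrt{\eta}$ (say $\delta=2^{-n}$ with $\eta$ a constant), so the absorption fails unless the admissible range of $\eta$ is allowed to depend on $\delta$, which would blow up $N$ and is not what the theorem asserts. The correct move --- and what the paper does --- is to define $\dist^A$ as the \emph{mixture} $\frac{1}{N-k}\sum_{i\notin I}\dist_i^A$ over the non-test rounds: the returned $x$ is exactly distributed according to this mixture, so one must bound $\heli(\dist^A,\dist_C)$ for the mixture itself, and then no extra failure probability over the choice of sample round is incurred. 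Your own average bound already closes this gap: since $\heli^2(\cdot\,,\dist_C)$ is convex in its first argument, $\heli^2\bigl(\frac{1}{N}\sum_i\dist_i,\dist_C\bigr)\leq\frac{1}{N}\sum_i\heli^2(\dist_i,\dist_C)\leq O(\eta)$, giving $\heli(\dist^A,\dist_C)\leq O(\sqrt{\eta})\leq O(\eta^{1/4})$ directly (the paper instead detours through total variation and its convexity, arriving at the same $\eta^{1/4}$). A secondary caveat: your closing remark that cross-round entanglement is handled ``verbatim'' by passing to reduced densities glosses over the facts that Corollary~\ref{cor:prelationtohellinger} is stated for pure states and that the per-round test outcomes would then no longer be independent, so Fact~\ref{fact:chernoff} would need a martingale replacement; the theorem restricts to pure states precisely to sidestep this, and the mixed-state case is deferred to Theorem~\ref{thm:quantumverifiermixedstates}.
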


Before going to the proof of the theorem we first state a technical lemma.
\newpage
\begin{lemma}\label{lem:technicaleta1/4}
For every $\eta > 0, k \in \N$, every set of distributions $\dist, \dist_C, \dist_1^A, \dots, \dist_k^A \in \mathfrak{D}(n)$ and every $q_1, \dots, q_k \in [0,1]$ such that $\sum_{i=1}^k q_1 = 1$ the following holds. Let $f(x) = \frac12 (1 + (1-x^2)^2)$. If $\sum_{i =1}^k q_i f(\heli(\dist_C,\dist_{i}^A)) \geq 1 - 50\eta^2$ then 
$$
\heli \left( \sum_{i = 1}^k q_i \dist_i^A, \dist_C  \right) \leq O\left(\eta^{1/4}\right).
$$
\end{lemma}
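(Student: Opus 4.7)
The plan is to convert the hypothesis into a bound on the $q$-weighted average of squared Hellinger distances $\sum_i q_i \heli^2(\dist_C, \dist_i^A)$, and then use concavity of $\sqrt{\cdot}$ to pass from this average to the Hellinger distance of the mixture $\dist^\star := \sum_i q_i \dist_i^A$.

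First, I would unpack the hypothesis. Setting $x_i := \heli(\dist_C, \dist_i^A)$ and $y_i := 1 - x_i^2 \in [0,1]$, the assumption $\sum_i q_i f(x_i) \geq 1 - 50\eta^2$ becomes
\[
\sum_i q_i \tfrac{1}{2}\bigl(1 + y_i^2\bigr) \geq 1 - 50\eta^2,
\quad\text{i.e.,}\quad
\sum_i q_i y_i^2 \geq 1 - 100\eta^2.
\]
Since $y_i \in [0,1]$ we have $y_i \geq y_i^2$, so $\sum_i q_i y_i \geq \sum_i q_i y_i^2 \geq 1 - 100\eta^2$, which rearranges to $\sum_i q_i \heli^2(\dist_C, \dist_i^A) \leq 100\eta^2$.

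Second, I would apply concavity of $\sqrt{\cdot}$ pointwise: $\sqrt{\dist^\star(x)} = \sqrt{\sum_i q_i \dist_i^A(x)} \geq \sum_i q_i \sqrt{\dist_i^A(x)}$ by Jensen's inequality. Multiplying by $\sqrt{\dist_C(x)}$, summing over $x \in \nbits$, and using the identity $1 - \heli^2(\mathcal{P}, \mathcal{Q}) = \sum_x \sqrt{\mathcal{P}(x)\mathcal{Q}(x)}$ stated in the Preliminaries, I obtain
\[
1 - \heli^2(\dist^\star, \dist_C) \;\geq\; \sum_i q_i\bigl(1 - \heli^2(\dist_C, \dist_i^A)\bigr) \;=\; 1 - \sum_i q_i \heli^2(\dist_C, \dist_i^A) \;\geq\; 1 - 100\eta^2.
\]
Hence $\heli(\dist^\star, \dist_C) \leq 10\eta$, which for sufficiently small $\eta$ is comfortably $O(\eta^{1/4})$ since $\eta \leq \eta^{1/4}$ on $(0,1)$.

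I don't anticipate a real obstacle: the two ingredients (the elementary inequality $y \geq y^2$ on $[0,1]$ and the concavity of $\sqrt{\cdot}$ applied to the mixture) each do exactly one job. If anything, the approach actually yields the stronger bound $O(\eta)$ rather than the stated $O(\eta^{1/4})$, so the claimed conclusion follows a fortiori; the looser exponent in the statement presumably just reflects what is needed downstream in the non-i.i.d. soundness argument of Theorem~\ref{thm:quantumverifiernoiid}.
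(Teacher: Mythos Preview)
Your argument is correct and in fact delivers a stronger conclusion, $\heli(\dist^\star,\dist_C)\le 10\eta$, than the $O(\eta^{1/4})$ claimed. The paper takes a different and lossier route: it first uses a Markov-type splitting on the indices according to whether $\heli(\dist_C,\dist_i^A)\le\sqrt{\eta}$, combined with the crude bound $f(x)\le 1-x^2/2$, to obtain $\sum_i q_i\,\heli(\dist_C,\dist_i^A)\le O(\sqrt{\eta})$; it then passes to total variation via $\tv\le\sqrt{2}\,\heli$, uses the triangle inequality for $\tv$ to bound $\tv(\dist_C,\dist^\star)$, and finally converts back with $\heli\le\sqrt{\tv}$, which is where the extra square root (and hence the $\eta^{1/4}$) enters. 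Your approach avoids both the splitting and the $\tv$ detour by exploiting directly the concavity of the Bhattacharyya affinity $1-\heli^2(\cdot,\dist_C)$ in its first argument, which is exactly the right structural fact here. The paper's method is more ``generic'' (it would work for any distance that relates to $\tv$ in the stated way), but yours is shorter, uses only facts already recorded in the Preliminaries, and is quantitatively tighter by a full power of $\eta$.
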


\begin{proof}
We bound the quantity
\begin{align}
&\sum_{i = 1}^k q_i \heli(\dist_C,\dist_{i}^A ) \nonumber \\
&\leq \sum_{i = 1}^k q_i \left( \heli(\dist_C, \dist_{i}^A) \mathbbm{1}_{\{\heli(\dist_C, \dist_{i}^A) \leq \sqrt{\eta} \}} + \mathbbm{1}_{\{\heli(\dist_C, \dist_{i}^A) > \sqrt{\eta} \}} \right) \nonumber \\
&\leq \sqrt{\eta} + \sum_{i = 1}^k q_i \mathbbm{1}_{\{\heli(\dist_C, \dist_{i}^A) > \sqrt{\eta} \}} \label{eq:funnybound3}  
\end{align}
Let $l = \sum_{i = 1}^k q_i \mathbbm{1}_{\{\heli(\dist_C, \dist_{i}^A) > \sqrt{\eta} \}}$. By definition and the fact that $f(x) \leq 1 - x^2/2$ we have
$$
\left(1-\frac\eta2 \right) l + (1-l) \geq \sum_{i = 1}^k q_i f(\heli(\dist_C,\dist_{i}^A)).
$$
Using the assumption $\sum_{i =1}^k q_i f(\heli(\dist_C,\dist_{i}^A)) \geq 1 - 50\eta^2$ we get $l \leq 100\eta$. Plugging it in \eqref{eq:funnybound3} we get 
%\begin{equation}\label{eq:finalcompprob2}
%\sum_{i = 1}^k q_i \heli(\dist_C,\dist_{i}^A ) \leq \sqrt{\eta} + 146\eta \leq 147\sqrt{\eta}.
%\end{equation}
%With this bound we continue
\begin{align}
101\sqrt{\eta} &\geq \sqrt{\eta} + 100\eta \nonumber \\
&\geq \sum_{i = 1}^k q_i \heli(\dist_C,\dist_{i}^A ) \nonumber \\
&\geq \sqrt{2} \sum_{i = 1}^k q_i \tv(\dist_C,\dist_{i}^A )  && \text{As } \heli(\mathcal{P}, \mathcal{Q}) \geq \sqrt{2} \tv(\mathcal{P}, \mathcal{Q})\nonumber \\
&\geq \sqrt{2} \tv \left(\dist_C, \sum_{i = 1}^k q_i \dist_i^A \right) && \text{Triangle inequality and identity } \tv(\mathcal{P},\mathcal{Q}) = \frac12 \|\mathcal{P}-\mathcal{Q}\|_1. \label{eq:totalvarbnd3}
 \end{align}
Now we can bound the quantity of interest
\begin{align*}
&\heli \left( \sum_{i = 1}^k q_i \dist_i^A, \dist_C  \right) \\
&\leq
\sqrt{\tv \left(\sum_{i = 1}^k q_i \dist_i^A, \dist_C  \right)} &&\text{By } \heli(\mathcal{P},\mathcal{Q}) \leq \sqrt{\tv(\mathcal{P},\mathcal{Q})} \\
%&\leq \sqrt{\tv \left(\sum_{i = 1}^k q_i \dist_i^A, \dist_C  \right) + \tv \left(\dist_C, \dist  \right)} && \text{triangle inequality} \\
&\leq O\left(\eta^{1/4}\right) && \text{By \eqref{eq:totalvarbnd3}} 
%&\leq \sqrt{ O(\sqrt{\eta}) + \sqrt{2}\heli(\dist_C, \dist) }\\
%&\leq \sqrt{O(\sqrt{\eta}) + O(\eta)} && \text{By assumption on } \dist_C \\ 
%&= O(\eta^{1/4}).
\end{align*}

\end{proof}

\begin{proof}[Proof of Theorem~\ref{thm:quantumverifiernoiid}]
The modified protocol is given in Figure~\ref{fig:quantumverifier_protocol_noiid}. In each run at most one sample is generated. The number of iterations is changed from $O\left(\frac{K}{\eta^2} \log(1/\delta)\right)$ to  $O\left(\frac{1}{\eta^2} \log(1/\delta)\right)$. The biggest change is in the very last step of the protocol, where instead of returning the whole set $S$ we return a random element from $S$. The reason behind this change will hopefully become clear at the end of the proof. 
%What we can say now is that the set of collected samples $S$ comes from a distribution that is not necessarily close to $\dist$ but if we pick a random element of $S$ then the corresponding distribution is close to $\dist$.

\begin{figure}
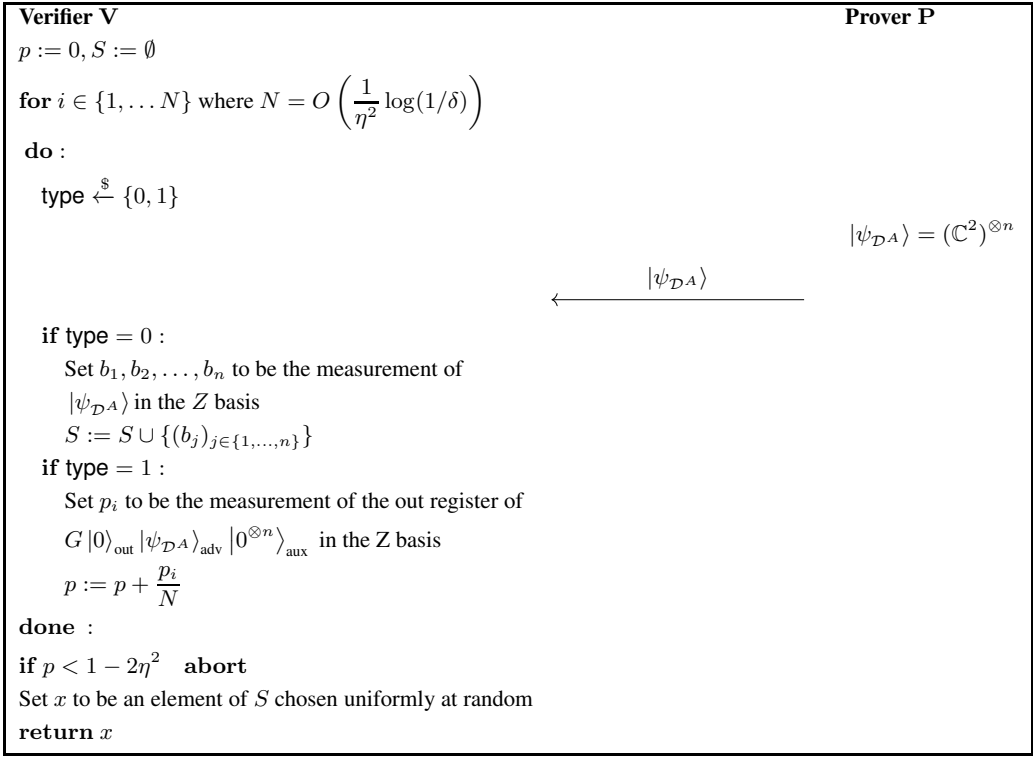

	\scalebox{0.96}{
        \fbox{
            \procedure{}{
                \textbf{Verifier } \ver \>\>  \textbf{Prover } \prov	\\
                p := 0, S := \emptyset \> \> \\
                \pcfor i \in \{1,\dots N\} \text{ where } N = O \left(\frac{1}{\eta^2} \log(1/\delta)\right) \> \>\\ 
                \pcdo:\>\> \\
                \t \textsf{type} \xleftarrow[]{\$} \{0,1\}\> \> \\
                \>\> \ket{\psi_{\dist^A}}  = (\mathbb{C}^2)^{\otimes n} \\
                \> \sendmessageleft*{\ket{\psi_{\dist^A}}} \> \\ 
                \t \pcif \textsf{type} = 0: \\ 
                \t \t \text{Set } b_1, b_2, \dots, b_n \text{ to be the measurement of } \\
                \t \t \ket{\psi_{\dist^A}} \text{in the $Z$ basis}\\
                \t \t S := S \cup \{(b_j)_{j\in\{1 ,\dots , n\}}\}\\
                \t \pcif \textsf{type} = 1: \\ 
                \t \t \text{Set } p_i \text{ to be the measurement of the out register of } \\
                \t \t \comp \ket{0}_\out \ket{\psi_{\dist^A}}_\adver \ket{0^{\otimes n}}_\aux \text{ in the Z basis} \\
                \t \t p := p + \frac{p_i}{N}\\
                \pcdone : \\
                \pcif p < 1 - 2\eta^2 %\sqrt{1 - \sqrt{2p-1}} \leq 6\eta:\\
                \t \pcabort \\
                \text{Set } x \text{ to be an element of } S \text{ chosen uniformly at random} \\
                \pcreturn x
                }
        }
    }
    \caption{The interactive protocol for the model where the verifier has access to a quantum computer and the prover doesn't need to act in an i.i.d. fashion.}
    \label{fig:quantumverifier_protocol_noiid}
\end{figure} 

It suffices to prove the soundness as the completeness proof is analogous to the proof of Theorem~\ref{thm:quantumverifier}.

Assume that $\prov$ sends the states $\ket{\psi_{\dist_1^A}}, \ket{\psi_{\dist_2^A}}, \dots, \ket{\psi_{\dist_N^A}}$ to $\ver$. Let the rounds in which the type is $1$ be $I \subseteq [N]$ and denote $|I|$ by $k$. Then for every $i \in I$ we have that $\ver$ gets a sample according to a Bernoulli variable with parameter $\bra{0^{n}}_\aux \bra{\psi_{\dist_i^A}}_\adver \bra{0}_\out G^\dagger \Pi_{\out}^{(1)} \compcirc \ket{0}_\out \ket{\psi_{\dist_i^A}}_\adver \ket{0^{n}}_\aux$. Thus by Fact~\ref{fact:chernoff} and Corollary~\ref{cor:prelationtohellinger} we have that with probability $1 - \frac\delta2$
\begin{equation}\label{eq:estimatesdf}
\left|p - \frac{1}{k}\sum_{i \in I} f(\heli(\dist_C,\dist_{i}^A)) \right| \leq \eta^2,
\end{equation}
where $f(x) = \frac12(1+ (1-x^2)^2)$. 

%Now we want to show that if $p \geq 1 -2\eta^2$ (this is the check performed in the protocol) then $\frac{1}{k}\sum_{i \in I} \heli(\dist_C,\dist_{i}^A ) \leq O(\sqrt{\eta})$  with high probability. 

%By Taylor expanding $f$ around $0$ we get $f(x) = 1 - x^2 + O(x^4)$. Thus, for sufficiently small $\eta$, $f^{-1}(p) \leq 6\eta$ implies that $p \geq 1 - 2 \cdot 36\eta^2$. So 
$\prov$ succeeds with probability $1 - \frac\delta2$ so by \eqref{eq:estimatesdf} and the union bound we get that with probability $1 - \delta$ 
\begin{equation}\label{eq:fsumlowerbnd}
\frac{1}{k}\sum_{i \in I} f(\heli(\dist_C,\dist_{i}^A)) \geq 1 - 2\eta^2 - \eta^2 \geq  1 - 3\eta^2.
\end{equation}
By Lemma~\ref{lem:technicaleta1/4} we get then
%$$
%\heli \left( \frac{1}{k}\sum_{i \in I} \dist_i^A, \dist_C  \right) \leq O(\eta^{1/4}).
%$$
%Summarizing we get that
$$
\Prob_I \left[ \heli \left(\frac{1}{k} \sum_{i \in I} \dist_i^A, \dist_C  \right) \geq O(\eta^{1/4}) \right] \leq \delta.
$$
As $I$ and $[N] \setminus I$ have the same distribution we also get
$$
\Prob_I \left[ \heli \left(\frac{1}{N - k} \sum_{i \not\in I} \dist_i^A, \dist_C  \right) \geq O(\eta^{1/4}) \right] \leq \delta.
$$
Finally note that the samples we collected in $S$ came exactly from the distribution $S \sim \Pi_{i \not\in I} \dist_i^A$, so if we choose the sample to return $x$ as a uniformly random element of $S$ then $x \sim \frac{1}{N - k} \sum_{i \not\in I} \dist_i^A$. To conclude note that $\dist^A = \frac{1}{N - k} \sum_{i \not\in I} \dist_i^A$.

\end{proof}

\subsubsection{Prover sending mixed states}\label{sec:quantumvermixedstates}

In this section we explain what happens when instead of sending a pure state $\ket{\psi_{\dist^A}}$, $\prov$ is allowed to send a mixed state $\rho_A$. This means that $\prov$ can prepare a state $\ket{\psi}_{\text{E,F}}$ in a bigger space $(\C^2)^{\otimes n}_{\text{E}} \otimes H_\text{F}$ and send only the $E$ part of the system to $\ver$. We still assume here that $\prov$ acts in an i.i.d. fashion. In this setting the guarantee for soundness will deteriorate (as in Theorem~\ref{thm:quantumverifiernoiid}) to $\heli(\dist_C, \dist^A) \leq O(\eta^{1/4})$ in comparison to $\heli(\dist_C, \dist^A) \leq O(\eta)$ as in Theorem~\ref{thm:quantumverifier}. The slightly changed theorem becomes

\begin{theorem}[Quantum Verifier with Mixed States]\label{thm:quantumverifiermixedstates}
For every circuit $C$ acting on $n$ qubits, for every $\delta \in (0,\frac13), K \in \N$ and all $\eta > 0$ sufficiently small there exists an interactive protocol between a quantum verifier $\ver$ and a quantum prover $\prov$ with the following properties. The protocol runs in $N = O(\frac{K}{\eta^2} \log(1/\delta))$ rounds and in each round $\prov$ sends a (potentially mixed) quantum state on $n$ qubits to $\ver$. At the end of the protocol $\ver$ outputs $\bot$ when it rejects the interaction or it outputs $ S = \{x_1, \dots, x_{|S|} \}$, $x_i \in \nbits$, when it accepts.

\begin{itemize}
    \item (Completeness) There exists $\prov^{\oracle(*)}$ such that for every $\mathcal{D} \in \mathfrak{D}(n)$ satisfying $\heli(\mathcal{D}, \mathcal{D}_C) \leq \eta$ the following holds. With probability $1- \delta$ over the randomness in the protocol  $\prov^{\oracle(\mathcal{D})}$ succeeds, $S \sim_{\text{i.i.d.}} \mathcal{D}^{|S|}$ and $|S| \geq \Omega(K)$.
    \item (Soundness) For every $\prov$ that succeeds with probability at least $\frac23$  we have $S \sim_{\text{i.i.d.}} (\dist^A)^{|S|}$ and $\heli(\dist_C, \dist^A) \leq O(\eta^{1/4})$. 
\end{itemize}
\end{theorem}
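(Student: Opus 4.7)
The plan is to reduce the mixed-state analysis to a pure-state ensemble decomposition, at which point the soundness argument collapses onto the same convex-combination argument used in the proof of Theorem~\ref{thm:quantumverifiernoiid} via Lemma~\ref{lem:technicaleta1/4}. Since $\prov$ is assumed to act i.i.d., the same mixed state $\rho_A$ is sent to $\ver$ in every round. Decompose $\rho_A = \sum_j q_j \ket{\psi_j}\bra{\psi_j}$ into an ensemble of pure states with $\ket{\psi_j} = \sum_x \alpha_j(x)\ket{x}$, and let $\dist_j$ be the distribution $\dist_j(x) = |\alpha_j(x)|^2$. Measuring $\rho_A$ in the $Z$ basis then produces samples from $\dist^A = \sum_j q_j \dist_j$, so every type-$0$ round contributes an independent sample from $\dist^A$. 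Combined with Lemma~\ref{lem:eventforquantumverifier}, this gives $S \sim_{\text{i.i.d.}} (\dist^A)^{|S|}$ with $|S| \ge \Omega(K)$.

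For the type-$1$ rounds, by linearity of expectation and Lemma~\ref{lem:similaritymeasure}, the probability that the comparison circuit outputs $1$ on $\rho_A$ equals $\tfrac12 \bigl(1 + \sum_j q_j |\braket{\psi_j}{\psi_{\dist_C}}|^2 \bigr)$. The key technical step, which replaces Corollary~\ref{cor:prelationtohellinger} in the mixed-state setting, is the inequality
\[
|\braket{\psi_j}{\psi_{\dist_C}}|^2 \;=\; \Big|\sum_x \alpha_j(x)^* \sqrt{\dist_C(x)}\Big|^2 \;\le\; \Big(\sum_x |\alpha_j(x)|\sqrt{\dist_C(x)}\Big)^2 \;=\; \bigl(1 - \heli^2(\dist_j,\dist_C)\bigr)^2,
\]
which follows from the triangle inequality for complex numbers (in the pure case of Theorem~\ref{thm:quantumverifier} this is an equality because the amplitudes are real and non-negative). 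Writing $f(x) = \tfrac12(1 + (1-x^2)^2)$, the type-$1$ acceptance probability is therefore upper bounded by $\sum_j q_j f(\heli(\dist_j,\dist_C))$.

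Plugging this into the same Chernoff and union-bound argument as in the proof of Theorem~\ref{thm:quantumverifiernoiid}, the acceptance condition $p \ge 1 - 2\eta^2$ together with Fact~\ref{fact:chernoff} yield $\sum_j q_j f(\heli(\dist_j,\dist_C)) \ge 1 - 3\eta^2$ with probability at least $1 - \delta/2$. Applying Lemma~\ref{lem:technicaleta1/4} to the ensemble $\{(q_j, \dist_j)\}$ then gives $\heli(\dist^A, \dist_C) = \heli\bigl(\sum_j q_j \dist_j,\, \dist_C\bigr) \le O(\eta^{1/4})$, which is the claimed soundness bound. Completeness carries over verbatim from Theorem~\ref{thm:quantumverifier}, since an honest prover may still send the pure state $\ket{\psi_\dist}$, trivially recovering $\heli(\dist,\dist_C)\le \eta$ and hence $p \ge 1 - 2\eta^2$.

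The main conceptual obstacle is noticing that the soundness proof needs no information about how $\rho_A$ was prepared: the expected comparison-circuit outcome depends only on $\rho_A$, and any ensemble decomposition $\{(q_j,\ket{\psi_j})\}$ suffices to run the earlier pointwise analysis and then average. Once this reduction is in place, the loss from $O(\eta)$ to $O(\eta^{1/4})$ is \emph{not} caused by the Cauchy--Schwarz/triangle step above (its slack is absorbed into $\sum_j q_j f(\heli(\dist_j,\dist_C))$ in the favorable direction), but rather by Lemma~\ref{lem:technicaleta1/4} itself: converting a lower bound on an average of $f(\heli(\cdot,\cdot))$ values into an upper bound on the Hellinger distance of the mixture inevitably incurs the $\eta^{1/4}$ rate, just as in the non-i.i.d. case.
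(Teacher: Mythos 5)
Your proposal is correct and follows essentially the same route as the paper's proof: decompose $\rho_A$ into a pure-state ensemble, note that $Z$-basis measurement yields samples from the mixture $\dist^A=\sum_j q_j\dist_j$, bound the type-$1$ acceptance probability by $\sum_j q_j f(\heli(\dist_j,\dist_C))$, apply Chernoff plus the acceptance threshold to lower-bound that average by $1-3\eta^2$, and invoke Lemma~\ref{lem:technicaleta1/4}; completeness is inherited from the pure-state case.

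One point where you are actually \emph{more} careful than the paper: the paper applies Corollary~\ref{cor:prelationtohellinger} to each ensemble element as an equality, writing each pure state in the decomposition as $\ket{\psi_{\dist_i^A}}$, which by the paper's notational convention has nonnegative real amplitudes $\sqrt{\dist_i^A(x)}$. A general mixed state need not admit such a decomposition, so for an arbitrary ensemble element $\ket{\psi_j}=\sum_x\alpha_j(x)\ket{x}$ with complex amplitudes one only has the inequality $|\braket{\psi_j}{\psi_{\dist_C}}|^2\le\bigl(1-\heli^2(\dist_j,\dist_C)\bigr)^2$, exactly as you observe. You correctly check that this inequality points in the direction needed for soundness (a cheating prover with complex phases can only lower its acceptance probability), so the argument closes. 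You are also right that the degradation from $O(\eta)$ to $O(\eta^{1/4})$ comes entirely from Lemma~\ref{lem:technicaleta1/4} and not from this step. No gaps.
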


\begin{proof}[Proof of Theorem~\ref{thm:quantumverifiermixedstates}]
Note that we only need to verify the soundness property as for the completeness we know that $\prov$ sends pure states. By the ensemble interpretation of density matrices we can express
\begin{equation}\label{eq:densitymatrixexpression}
\rho_A = \sum_{j=1}^k q_i \ket{\psi_{\dist_i^A}}\bra{\psi_{\dist_i^A}},   
\end{equation}
where $\ket{\psi_{\dist_i^A}} \in (\C^2)^{\otimes n}$. This expression is not unique but it will not play a role for us. We observe that measuring $\rho_A$ in the $Z$ basis and collecting a sample is equivalent to collecting a sample from a distribution $\sum_{j=1}^k q_i \dist_i^A$. By Corollary~\ref{cor:prelationtohellinger} we know that the probability of obtaining outcome $1$ when running $\compcirc$ on $\ket{\psi_{\dist_i^A}}$ and measuring out register is equal to the Bernoulli variable with parameter $f(\heli (\dist_C, \dist_i^A))$, for $f(x) = \frac12 (1 + (1-x^2)^2)$. The distribution of measuring the out register when running $\compcirc$ on $\rho_A$ is thus equal to
$$
\sum_{j=1}^k q_i \cdot f(\heli (\dist_C, \dist_i^A)).
$$
By Fact~\ref{fact:chernoff} and the setting of $N$ we have that with probability $1 - \frac\delta2$
$$
\left|p - \sum_{j=1}^k q_i \cdot f(\heli (\dist_C, \dist_i^A))\right| \leq \eta^2
$$

$\prov$ succeeds with probability $\frac23$ so by the union bound and the fact that $\frac13 + \frac\delta2 < 1$ we get that $\sum_{j=1}^k q_i \cdot f(\heli (\dist_C, \dist_i^A)) \geq p - \eta^2 \geq 1 - 3\eta^2$. Application of Lemma~\ref{lem:technicaleta1/4} finishes the proof.
\end{proof}

\subsection{Constant Memory Quantum Verifier}\label{sec4.2}
In section~\ref{sec4.1} we described a protocol in which a verifier $\ver$ can certify that
the distribution of the samples they get from the prover $\prov$ is $\eta$-close to the
distribution of the samples given by the nature. However, this protocol required $\ver$ to 
perform computation on $2n+1$-qubit states, whereas here we assume quantum memory of $\prov$ is constant. 

We proceed by describing a protocol, achieving the same goal, in which $\ver$ can perform operations only on constant number of qubits.\footnote{This protocol is based on a circuit-to-Hamiltonian reduction. The size of this constant
depends on which reduction we use} On a high level $\ver$ wants to outsource the execution of the comparison circuit $\compcirc$ to $\prov$. Intuitively we want $\prov$ to send to $\ver$ a state that certifies execution of $\compcirc$. This is possible thanks to a well-known result called circuit-to-Hamiltonian reduction.

\begin{figure}
    \scalebox{0.81}{
    \fbox{
        \procedure{}{
            \textbf{Verifier } \ver \>\> \textbf{Prover } \prov	\\
            \> \sendmessageright*{H_G}\> \\
            \gamma = 0 , p = 0 \> \> \\
            T' = n + T + 3 \\
            n' = 2n+T'+1 \\
            L = \text{ number of terms of } H_\compcirc\\ 
            \pcfor i \in \{1,\dots N\} \text{ where } N = O\left( \frac{K (n^5 + n^2T^3 +T^5)}{\eta^4}\log(1/\delta)\right) \> \>\\ 
            \pcdo : \>\> \\
            \t t \xleftarrow{\$} \textsf{Terms}(H_C)\> \> \\ 
            \t \textsf{type} \xleftarrow{\$} \{1,2,3\}\> \> \\
            \t n_1 , n_2 , n_3 = 0\\
            \>\> \ket{\phi}_{AB} \in (\C^2)^{\otimes n'}_A \otimes \mathcal{H}_B  \\
            %\>\>= \ket{\phi_{out}}\ket{\phi_{clock}}\ket{\phi_{adv}}\ket{\phi_{aux}}\\
            \> \sendmessageleft*{\ket{\phi}_A} \> \\ 
            %\> \sendmessageleft*{\ket{\phi_2}} \> \\ 
            %\>\vdots \>\\
            %\> \sendmessageleft*{\ket{\phi_{n'}}} \> \\ 
            \t \pcif \textsf{type} = 0: \\
            \t \t \textit{\# measure energy of the state with respect to the Hamiltonian}\\
            \t \t b_i = \text{ measurement of }\ket{\phi}_i \text{ in basis } \sigma_i^B, \t \forall{\sigma_i^B} \in t\\
            \t \t \gamma = \gamma - J_{t} (\Pi_{i \in t}{b_i})\\
            \t \t n_1 = n_1 +1 \\
            \t \pcif \textsf{type} = 1: \\ 
            \t \t \textit{\# obtain a sample}\\
            \t \t b = \text{ measurement of the second qubit of clock in the Z basis}\\
            \t \t  a_1,\dots,a_n = \text{ measurement of aux in the Z basis}\\
            \t\t b' = \text{ measurement of out in the Z basis}\\
            \t\t \pcif b = 0, b' = 0, a_1, \dots, a_n = 0 :\\
            \t\t \t  b_{1} ,\dots, b_{n} = \text{ measurement of adv in the Z basis } \\
            \t\t \t S = S \cup \{(b_i)_{i\in [n]}\}\\
            \t\t n_2 = n_2 + 1 \\
            \t \pcif \textsf{type} = 2: \\ 
            \t \t \textit{\# estimate the output probability}\\
            \t\t b_1 = \text{ measurement of the $T'$-th qubit of clock in the Z basis} \\
            \t\t \pcif b_1 = 1: \\
            \t\t \t r = \text{ measurement of out in the Z basis}\\
            \t\t \t p = p + r\\
            \t\t \t n_3 = n_3 + 1 \\
            \textbf{Done}: \\
            \pcif \frac{\gamma \cdot L}{n_1} > \frac{\eta^2}{2T'^2} \lor \frac{p}{n_3} < 1 - 2\eta^2:\\
            \t \textbf{Abort}
            }
    }   
    }
    \caption{The interactive protocol, in which the verifier collects samples from a distribution close to the desired one. The verifier only requires a single qubit, as they measure one qubit at a time. $H_\compcirc$ is the Hamiltonian corresponding to the comparison circuit, described in Figure~\ref{fig:CS1}. We emphasize that we send $\ket{\phi}$ one qubit a time. Note that when we measure the clock register we use the unary representation of the clock. By writing $\ket{\phi}_{AB} \in (\C^2)^{\otimes n'}_A \otimes \mathcal{H}_B$ and then sending $\ket{\phi}_A$ to $\ver$ we mean that $\prov$ might be sending a mixed state.}
    \label{fig:low_mem_protocol}
\end{figure}

\paragraph{Circuit-to-Hamiltonian reduction.} This reduction was introduced by Kitaev in the late 1990's, see \citet{kitaevbook}. This reduction allows one to reduce a computation of a quantum circuit to estimating the ground energy of a local Hamiltonian. With such a tool in hand $\ver$ can first perform the reduction to create $H_\compcirc$, send a classical description of $H_\compcirc$ to $\prov$, then $\prov$ is supposed to send a low energy state $\ket{\psi}$ of $H_\compcirc$ back to $\ver$, and finally $\ver$ estimates the energy of $\ket{\psi}$ with respect to $H_\compcirc$ to verify that it is indeed of low energy.

For our purposes we need a slight modification of the standard reduction. Due to this fact, here we give an overview of this classical result and point to the differences needed for our setup. We follow the approach from \citet{kitaevbook} and we refer the reader to this book for more details.

The starting point of the reduction is the comparison circuit $\compcirc$\footnote{The reduction can be applied to any circuit but we focus only on the comparison circuit for simplicity.}. Recall that $\compcirc$ acts on three registers: out ($1$ qubit), adv ($n$ qubits), aux ($n$ qubits) and the output of the circuit is obtained by measuring the out register in the $Z$ basis. We want to find an object called a local Hamiltonian $H_\compcirc$. 
\begin{definition}
We say that an operator $H : (\C^2)^{\otimes N} \rightarrow (\C^2)^{\otimes N}$ on $N$ qubits is a $k$-local Hamiltonian if $H$ is expressible as $H = \sum_{r=1}^j H_j$, where each $H_j$ is a Hermitian operator acting on $k$ qubits.
\end{definition}
%In the standard reduction we want $H_\compcirc$ to satisfy the following properties:
%\begin{itemize}
%    \item If $\compcirc$ gives outcome $1$ with probability $\geq 1-\e$ on some state $\ket{\psi}_{\text{adv}}$ then there exists $\ket{\eta}$ such that $\bra{\eta}H_\compcirc \ket{\eta} \leq O(\e)$,
%    \item If $\compcirc$ gives outcome $1$ with probability $\leq \e$ for all states $\ket{\psi}_{\text{adv}}$ then for all $\ket{\eta}$
%    $\bra{\eta}H_\compcirc \ket{\eta} \geq \Omega(1 -\e)$.
%\end{itemize}
Our goal will be to define a Hamiltonian that is $5$-local. As mentioned in Section~\ref{sec:overview} $H_\compcirc$ acts on a bigger number of qubits than $\compcirc$ does. More precisely it acts on four registers \textit{clock}, comp = (out, adv, aux) - that is there is an additional register called clock in comparison to registers of $\compcirc$.
The standard reduction defines
\begin{align*}
H_\compcirc = H_{\text{in}} + H_{\text{out}} + H_{\text{prop}} + H_{\text{clock}}.
\end{align*}
The high level idea is to define the terms $H_{\text{in}}, H_{\text{out}}, H_{\text{prop}}, H_{\text{clock}}$ such that $\compcirc$ outputs $1$ with high probability if and only if $H_\compcirc$ has a small eigenvalue. In this case the minimizing vector $\ket{\phi}$ is the so called \textit{history state}
\begin{equation}\label{eq:historystate}
\frac{1}{\sqrt{T'+1}}\sum_{j=0}^{T'} \ket{j}_{\text{clock}} \otimes \compcirc_j \dots \compcirc_1 \ket{0}_{\text{out}}\ket{\psi}_{\text{adv}}\ket{0^{n}}_{\text{aux}},
\end{equation}
where, for every $j$, $\compcirc_j$ is the unitary transformation corresponding to the $j$-th gate in $\compcirc$ and $\ket{j}_{\text{clock}}$ is a state in the clock state space that we will define in detail later. The terms are defined so that they impose penalties to $\bra{\phi} H_\compcirc \ket{\phi}$ whenever $\ket{\phi}$ is far from the history state.

For our purposes we change the reduction by removing the $H_{\text{out}}$ term. By doing that we will be able to say that for every $\ket{\phi}$ such that $\bra{\phi} H_\compcirc \ket{\phi}$ is small there exists $\ket{\psi_{\dist^A}}_{\text{adv}}$ such that $\ket{\phi}$ is close to the history state for $\ket{\psi_{\dist^A}}_{\text{adv}}$. With that property in hand we can then say that if we measure $\ket{\phi}$ in the $Z$ basis then (i) with probability $\Omega(1/T')$ the clock register is equal to $\ket{0}_{\text{clock}}$, the out is equal to $\ket{0}_\out$, the aux register is equal to $\ket{0^n}_\aux$ and the adv register contains a sample from ${\dist^A}$ (ii) with probability $\Omega(1/T')$ the clock register is equal to $\ket{T'}_{\text{clock}}$ and the out register contains a sample from a Bernoulli variable with parameter $p$ such that $p$ is close to the probability of $\compcirc$ outputting $1$ on $\ket{0}_{\text{out}}\ket{\psi_{\dist^A}}_{\text{adv}}\ket{0^{n}}_{\text{aux}}$. Note that we can also write this probability as $ \bra{0^{n}} \bra{\psi_{\dist^A}}_{\text{adv}} \bra{0}_\text{out} G^\dagger \Pi_1^{(1)} \compcirc \ket{0}_{\text{out}}\ket{\psi_{\dist^A}}_{\text{adv}}\ket{0^{n}}_{\text{aux}}$, where $\Pi_s^{(\alpha)}$ is the projection onto the subspace of vectors for which the $s$-th qubit equals $\alpha$. This notation will be useful later.

\paragraph{Overview of the Protocol.} Assuming that the above properties hold we give a high level idea of the protocol defined in Figure~\ref{fig:low_mem_protocol}. In each round of the protocol we perform one of the three types of operations, where the type is chosen uniformly at random (i) we estimate the energy $\bra{\phi} H_\compcirc \ket{\phi}$ (ii) we measure $\ket{\phi}$ in the $Z$ basis and if the clock register is equal to $\ket{0}_\text{clock}$, the out register is equal to $\ket{0}_\out$ and the aux register is equal to $\ket{0^n}_\aux$  then we collect a sample (iii) we measure $\ket{\phi}$ in the $Z$ basis and if the clock register is equal to $\ket{T'}_{\text{clock}}$ we update the estimate for $p$. We run the protocol $\Theta(T)$ rounds thus each of the types will occur $\Omega(T)$ times with high probability and our reduction guarantees that for (ii) we successfully $\Omega(1)$ samples and for (iii) we update the estimate $\Omega(1)$ times. Overall this guarantees that the estimate for $\bra{\phi} H_\compcirc \ket{\phi}$ and $p$ will be accurate and the number of samples collected will be in $\Omega(1)$. Our reduction guarantees moreover that if $\ket{\phi}$ is in fact a low energy state of $H_\compcirc$ then $p$ is close to the probability of $\compcirc$ outputting $1$ on $\ket{0}_{\text{out}}\ket{\psi_{\dist^A}}_{\text{adv}}\ket{0^{n}}_{\text{aux}}$ and the samples we collect come i.i.d. from distribution $\mathcal{D}^A$ that corresponds to $\ket{\psi_{\dist^A}}$. Moreover using Lemma~\ref{lem:similaritymeasure} from $p$ we can estimate $|\langle \psi_{\dist^A} | \psi_{ \dist_C} \rangle |$, recall that $\dist_C$ is the distribution generated by $C$ on $\ket{0^{n}}$ of which we think as being close to $\dist$. As explained in Section~\ref{sec4.1} estimating $|\langle \psi_{\dist^A} | \psi_{ \dist_C} \rangle |$ is enough to guarantee that the distribution from which we collected the samples is close to $\dist$.

%To give a high level idea of the protocol, one can consider the verification protocol having 3
%possible phases; a phase to check whether the energy of the state is low with respect to the hamiltonian,
%i.e. the prover actually performed the circuit, a phase to estimate the probability of the output of the circuit
%being equal to 1, i.e. estimating $1+ |\langle \psi^{\mathcal{A}} | \psi^{C} \rangle|^2 $, and at last 
%a stage to collect actual samples. Hence, in each iteration the verifier uniformly decides which check it will
%perform. 

For the remainder of this section we first explain the details of the circuit-to-Hamiltonian reduction and then formalize the correctness and soundness requirements and prove the desired properties. 

\subsubsection{Circuit-to-Hamiltonian Reduction}

We start with a quantum circuit $\compcirc$ and want to create a Hamiltonian $H_\compcirc$ with the properties mentioned in Section~\ref{sec4.2}. First we make our goal formal.

\begin{lemma}[Circuit-to-Hamiltonian Reduction]\label{lem:circuittoham}
For every comparison circuit $\compcirc$, for all, sufficiently small, $\e > 0$ there exists an efficiently computable description of a $5$-local Hamiltonian $H_\compcirc$ with $L = O(n+T')$ many terms such that the following conditions hold. Let $\dist^A$ be the distribution of the content of the adv register when measuring $\ket{\phi}$ in the $Z$ basis conditioned on the clock, out and aux registers being all $0$ after measurement. For every $\ket{\phi}$ such that $\bra{\phi} H_\compcirc \ket{\phi} \leq \frac{\e}{T'}$ if we measure $\ket{\phi}$ in the $Z$ basis then
\begin{itemize}
    \item with probability $\in \left[\frac{1-5\e}{T'+1}, \frac{1+5\e}{T'+1}\right]$ the clock register is equal to $\ket{0}_\clock$, the out register is equal to $\ket{0}_\out$, the aux register is equal to $\ket{0^n}_\aux$,
    %$$
    %\bra{\phi} \Pi_\clock^{(0)} \Pi_\out^{(0)} \Pi_\aux^{(0^n)} \ket{\phi} \in \left[\frac{1-5\e}{T'+1}, \frac{1+5\e}{T'+1}\right].
    %$$
    \item with probability $\in \left[\frac{1-5\e}{T'+1}, \frac{1+5\e}{T'+1}\right]$ the clock register is equal to $\ket{T'}_\clock$ 
    and conditioned on this event the distribution of the out register is a Bernoulli variable with parameter $p$ such that $|p - \bra{0^{n}}_\aux \bra{\psi_{\dist^A}}_\adver \bra{0}_\out G^\dagger \Pi_{\out}^{(1)} \compcirc \ket{0}_\out \ket{\psi_{\dist^A}}_\adver \ket{0^{n}}_\aux| \leq 5 \e T'$.
    %\item
    %$$\left| \frac{\bra{\phi}\Pi_\clock^{(T')} \Pi_\out^{(1)} \Pi_\clock^{(T')}\ket{\phi}}{\bra{\phi}\Pi_\clock^{(T')}\ket{\phi}} - \bra{0^{n}}_\aux \bra{\psi_{\dist^A}}_\adver \bra{0}_\out G^\dagger \Pi_{\out}^{(1)} \compcirc \ket{0}_\out \ket{\psi_{\dist^A}}_\adver \ket{0^{n}}_\aux  \right| \leq 5\e T'$$
    %$$
    %\bra{\phi}\Pi_\clock^{(T')}\ket{\phi} \in \left[\frac{1-5\e}{T'+1}, \frac{1+5\e}{T'+1}\right]
    %$$
\end{itemize}
\end{lemma}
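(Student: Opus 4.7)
The plan is to follow Kitaev's circuit-to-Hamiltonian reduction while deliberately omitting the $H_{\out}$ term; this single structural change enlarges the nullspace from a single ray to the set of histories parameterized by the adv-register input. Concretely, I would use the standard unary encoding of the clock on $T'$ qubits and set $H_{\comp} = H_{\inp} + H_{\prop} + H_{\clock}$, where $H_{\clock}$ is a sum of $T'-1$ two-local projectors penalizing non-unary strings, $H_{\inp}$ is $n+1$ two-local projectors forbidding $\out = 1$ or $\aux_i = 1$ at $\clock = |0\rangle$, and $H_{\prop}$ is the standard sum of $T'$ terms $\tfrac12\bigl(|j\rangle\langle j|_{\clock} + |j{+}1\rangle\langle j{+}1|_{\clock} - \comp_{j+1}\otimes |j{+}1\rangle\langle j|_{\clock} - \comp_{j+1}^\dagger\otimes |j\rangle\langle j{+}1|_{\clock}\bigr)$. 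Counting gives $L = O(n + T')$ terms, each touching at most five qubits (three for the gate plus two clock qubits), and the description is classical and computable in $O(n+T')$ time from $\comp$.

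The structural heart of the argument is that the nullspace of $H_{\comp}$ is exactly the span of history states $\{|\phi_\psi\rangle : |\psi\rangle \in (\C^2)^{\otimes n}\}$, where
$$|\phi_\psi\rangle = \frac{1}{\sqrt{T'+1}}\sum_{j=0}^{T'} |j\rangle_{\clock}\otimes\comp_j\cdots \comp_1|0\rangle_{\out}|\psi\rangle_{\adver}|0^n\rangle_{\aux},$$
which is a standard calculation once one conjugates $H_{\prop}$ by the history-basis unitary and recognizes the result as a path-graph Laplacian, with $H_{\inp}$ pinning only the out and aux factors of $|\xi_0\rangle$. I would then invoke Kitaev's geometric lemma applied to $\ker(H_{\inp}+H_{\clock})$ and $\ker(H_{\prop})$ to obtain a spectral gap $\Delta = \Omega(1/\poly(T'))$ above the nullspace. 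The hypothesis $\langle \phi | H_{\comp} | \phi \rangle \leq \e/T'$ then gives a decomposition $|\phi\rangle = \sqrt{1-\beta}|\phi_\psi\rangle + \sqrt{\beta}|\phi^\perp\rangle$ with $\beta$ polynomially small in $\e$; the ``sufficiently small $\e$'' clause absorbs the $\poly(T')$ factors needed to push the resulting perturbation below $5\e/(T'+1)$.

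Given the closeness $|\phi\rangle \approx |\phi_\psi\rangle$, both bullets reduce to direct calculations on the history state followed by a triangle-inequality transfer. A $Z$-basis measurement of $|\phi_\psi\rangle$ returns clock value $j$ with probability exactly $1/(T'+1)$; conditioned on $\clock = 0$ the $(\out,\adver,\aux)$ registers are deterministically $|0\rangle_{\out}|\psi\rangle_{\adver}|0^n\rangle_{\aux}$, so the joint event $\{\clock=0,\out=0,\aux=0^n\}$ has probability $1/(T'+1)$, and the perturbation yields the stated window $[(1-5\e)/(T'+1), (1+5\e)/(T'+1)]$. For the second bullet, conditioning $|\phi_\psi\rangle$ on $\clock = T'$ produces the $(\out,\adver,\aux)$ state $\comp|0\rangle_{\out}|\psi\rangle_{\adver}|0^n\rangle_{\aux}$, whose $\out$-marginal is Bernoulli with parameter $\langle 0^n|_{\aux}\langle \psi|_{\adver}\langle 0|_{\out}\comp^\dagger\Pi_{\out}^{(1)}\comp|0\rangle_{\out}|\psi\rangle_{\adver}|0^n\rangle_{\aux}$; propagating this to $|\phi\rangle$ costs an extra factor of $T'$ because we are conditioning on a $\Theta(1/T')$-probability event, which is what produces the $5\e T'$ slack.

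The main obstacle is the substitution of $|\psi\rangle$ by $|\psi_{\dist^A}\rangle$ in the output-probability expression. By definition $\dist^A(x) = |\langle x|\psi\rangle|^2$, so the two states agree in moduli but may differ in phases. Applying Lemma~\ref{lem:similaritymeasure} gives the exact output probability on $|\psi\rangle$ as $\tfrac12\bigl(1+|\langle\psi|\psi_{\dist_C}\rangle|^2\bigr)$, and since $|\psi_{\dist_C}\rangle$ has non-negative real amplitudes the triangle inequality term-by-term yields $|\langle\psi|\psi_{\dist_C}\rangle| \leq \langle\psi_{\dist^A}|\psi_{\dist_C}\rangle$, which is the one-sided estimate actually used by the soundness argument of Section~\ref{sec4.2}. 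The matching lower bound follows because a low-energy $|\phi\rangle$ forces the phases of $|\psi\rangle$ to approximately align with those of $|\psi_{\dist_C}\rangle$ (any misalignment depresses the output probability and thus inflates the energy), the alignment defect being absorbed into the $5\e T'$ slack. I expect the careful bookkeeping of $\beta(\e, T')$ through this phase-alignment argument, together with the $1/\poly(T')$ dependence of the gap, to be the most error-prone part of the write-up.
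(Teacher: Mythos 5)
Your construction of $H_\compcirc = H_\inp + H_\prop + H_\clock$ (dropping $H_\out$, unary clock, $L=O(n+T')$ five-local terms) and your identification of the nullspace with the span of history states both match the paper. The route you take from there, however, cannot deliver the stated quantitative bounds. Invoking the geometric lemma gives a gap $\Delta = \Omega(1/\poly(T'))$ and hence a decomposition $\ket{\phi}=\sqrt{1-\beta}\ket{\phi_\psi}+\sqrt{\beta}\ket{\phi^\perp}$ with $\beta \leq (\e/T')/\Delta = \poly(T')\,\e$; the perturbation this induces on any outcome probability is of order $\sqrt{\beta}=\poly(T')\sqrt{\e}$. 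The lemma demands a deviation of at most $5\e/(T'+1)$, and the ``sufficiently small $\e$'' clause cannot absorb the discrepancy: as $\e\to 0$ the ratio of $\poly(T')\sqrt{\e}$ to $\e/(T'+1)$ diverges, so shrinking $\e$ makes things worse, not better. The paper avoids the gap argument entirely for exactly this reason. It conjugates by $W$, writes $\ket{\tl{\phi}}=\sum_j \alpha_j\ket{j}_\clock\ket{\xi_j}_\comput$, and extracts from $\bra{\tl{\phi}}\wt{H}_\prop\ket{\tl{\phi}}\leq \e/T'$ two separate consequences --- near-flatness of the clock amplitudes $\alpha_j$ and the cumulative drift bound $\|\ket{\xi_0}-\ket{\xi_{T'}}\|^2\leq 4\e T'$ --- together with the $H_\inp$ bound confining the weight of $\ket{\xi_0}$ outside $\ket{0}_\out\ket{\cdot}_\adver\ket{0^n}_\aux$ to $\e/T'$; it never needs $\ket{\phi}$ to be globally norm-close to a history state. (Relatedly, the $5\e T'$ slack in the second bullet is this accumulated drift over $T'$ propagation steps, not a cost of conditioning on a $\Theta(1/T')$-probability event as you suggest.)

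The second gap is your phase-alignment claim. You assert that misalignment between the phases of $\ket{\psi}$ and those of $\ket{\psi_{\dist_C}}$ ``depresses the output probability and thus inflates the energy.'' But the whole point of the modified reduction is that $H_\out$ has been removed: the history state of an arbitrary $\ket{\psi}$, with arbitrary phases, has energy exactly zero, so low energy places no constraint on the phases of the adv register and no lower bound on $p$ in terms of $\ket{\psi_{\dist^A}}$ can follow from it. You are right that only the one-sided inequality $|\braket{\psi}{\psi_{\dist_C}}|\leq \braket{\psi_{\dist^A}}{\psi_{\dist_C}}$ is what the soundness argument of Section~\ref{sec4.2} actually consumes (the paper itself silently identifies $\ket{\psi}$ with $\ket{\psi_{\dist^A}}$ at this step), but your proposed proof of the matching direction rests on a false premise and should be dropped rather than written up.
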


\begin{proof}
As we discussed we want to base our reduction on the standard circuit-to-Hamiltonian reduction but drop the $H_\text{out}$ term. We define
\begin{equation}\label{eq:hamiltonian}
H_\compcirc = H_\inp + H_\prop + H_\clock.    
\end{equation}
The term $H_\inp$ corresponds to the condition that, at step $0$, the qubits are in the right state. Formally
\begin{equation}\label{eq:hin}
H_\inp = \ket{0}\bra{0}_\clock \otimes \left( \sum_{j \in \out, \aux} \Pi^{(1)}_j \right) ,
\end{equation}
where by $j \in \out, \aux$ we mean iterating over all the qubits in these registers. Informally speaking, we add a penalty whenever a qubit in registers out or aux is in state $\ket{1}$ while the clock is in state $\ket{0}_\clock$.

The term $H_\prop$ guarantees the propagation of quantum states through the circuit. Formally
\begin{equation}\label{eq:hprop}
H_\prop = \sum_{j=1}^{T'} H_j, 
\end{equation}
$$
H_j = -\frac12 \ket{j}\bra{j-1}_\clock \otimes G_j - \frac12 \ket{j-1}\bra{j}_\clock \otimes G_j^\dagger + \frac12 (\ket{j}\bra{j}_\clock + \ket{j-1}\bra{j-1}_\clock) \otimes I.
$$
We will define $H_\clock$ later. We could realize it with $O(\log(T')$ qubits but then our Hamiltonian would be $O(\log(T'))$-local. But we aim for a $5$-local Hamiltonian. We explain how to address this issue towards the end of this section. Because of this we will assume for now that $H_\clock$ does not appear in \eqref{eq:hamiltonian}.

For the analysis we follow \citet{kitaevbook}. It will be useful to consider a change of basis given by
$$
W = \sum_{j=0}^{T'} \ket{j}\bra{j}_\clock \otimes G_j \dots G_1.
$$
What we mean is that we represent the vector $\ket{\phi}$ in the form $\ket{\phi} = W \ket{\tilde{\phi}}$. Under this change the Hamiltonian is transformed into its conjugate $\widetilde{H}_\compcirc = W^\dagger H_\compcirc W$. Simple calculation verifies that $\widetilde{H}_\inp = H_\inp$ and $\widetilde{H}_\prop = E \otimes I$, where 
$$
E = \begin{pmatrix}
\frac12 & -\frac12 & 0 & 0 & 0 & \\
-\frac12 & 1 & -\frac12 & 0 & 0 &  \\
0 & -\frac12 & 1 & -\frac12 & 0 &  \\
0 & 0 & -\frac12 & 1 & -\frac12 & \\
0 & 0 & 0 & -\frac12 & 1 \\
 & & & & & \ddots
\end{pmatrix}
$$
%\begin{fact}\label{fact:eeigenvalue}
%One useful property of matrix $E$ is that the smallest non-zero eigenvalue of $E$ is $\geq c/T'$, for some constant $c$. It was proven in \citet{kitaevbook}.
%\end{fact}

Let $\ket{\tilde{\phi}}$ be such that $\bra{\tilde{\phi}} \widetilde{H}_\compcirc \ket{\tilde{\phi}} \leq \frac{\e}{T'}$. We will show that it is close to a history state of $\ket{\psi_{\dist^A}}_\adver$. 
Let's write $\tl{\phi} = \sum_{j=0}^{T'} \alpha_j \ket{j}_\clock \ket{\xi_j}_
\comput$, for $\alpha_j \in \R_{\geq 0}$ and $\ket{\xi_0}_\comput = \sum_{s \in \{0,1\}^{n+1}} \beta_s \ket{s[1]}_\out \ket{\psi_s}_\adver \ket{s[2,n+1]}_\aux$, where $s[i,j]$ denotes the substring of $s$ from $i$ to $j$. Then by the fact that $\wt{H}_\prop = E \otimes I$ we have
\begin{align}
\bra{\tl{\phi}} \wt{H}_\prop \ket{\tl{\phi}} 
&=\frac12 \sum_{j=1}^{T'} \|\alpha_{j-1}\ket{\xi_{j-1}}_\comput - \alpha_{j}\ket{\xi_{j}}_\comput\|^2 \nonumber \\
&\geq \frac12 \sum_{j=1}^{T'} |\alpha_{j-1} - \alpha_j|^2, \frac12 \sum_{j=1}^{T'} \min(|\alpha_{j-1})|^2, |\alpha_j|^2) \cdot  \|\ket{\xi_{j-1}}_\comput - \ket{\xi_j}_\comput\|^2  \label{eq:hproppenalty}.  
\end{align}
Note that the bound above gives two inequalities. Thus we get that $\max_{j \in [T']} |\alpha_{j-1} - \alpha_j|^2 \leq \frac{2\e}{T'}$, which combined with the fact that $\sum_{j = 0}^{T'} |\alpha_j|^2 = 1$ gives us that 
\begin{equation}\label{eq:alphascloseto1overt}
\max_{j \in \{0,\dots,T'\}} \left||\alpha_j|^2 - \frac{1}{T'+1} \right| \leq \frac{2\e}{T'}. 
\end{equation}
Using \eqref{eq:alphascloseto1overt} and the bound for $\frac12 \sum_{j=1}^{T'} \min(|\alpha_{j-1})|^2, |\alpha_j|^2) \cdot  \|\ket{\xi_{j-1}}_\comput - \ket{\xi_j}_\comput\|^2$ from \eqref{eq:hproppenalty} we get that for $\e \leq 1$
$$
\sum_{j=1}^{T'} \|\ket{\xi_{j-1}}_\comput - \ket{\xi_j}_\comput\|^2 \leq \frac{\frac{2\e}{T'}}{\frac{1}{T'+1} - \frac{2\e}{T'}} \leq 4\e,
$$
which implies that
\begin{equation}\label{eq:distancebetweenxis}
\|\ket{\xi_{0}}_\comput - \ket{\xi_{T'}}_\comput\|^2 \leq 4\e T'.
\end{equation}

Using the second term from $H_\compcirc$ we also have 
\begin{align}
\bra{\tl{\phi}} \wt{H}_\inp \ket{\tl{\phi}} = \sum_{j=1}^n \sum_{s \in \{0,1\}^n : s[j] = 1} \beta_s^2 \leq \frac{\e}{T'}. \label{eq:hinpenalty}
\end{align}
Note that the distribution corresponding to $\ket{\psi_{0^n}}$ is $\dist^A$. Observe moreover that \eqref{eq:alphascloseto1overt} guarantees that for small enough $\e$ if we measure $\ket{\tl{\phi}}$ in the $Z$ basis then with probability $\in \left[\frac{1-3\e}{T'+1}, \frac{1+3\e}{T'+1}\right]$ the clock register is equal to $\ket{0}_\clock$ and with probability $\in \left[\frac{1-3\e}{T'+1}, \frac{1+3\e}{T'+1} \right]$ the clock register is equal to $\ket{T'}_\clock$. Moreover conditioned on the clock register being $\ket{0}_\clock$ probability of out and aux register being $\ket{0}_\out, \ket{0^n}_\aux$ respectively is, by \eqref{eq:hinpenalty}, lower bounded by $1 - \frac{\e}{T'}$. Thus 
%the first condition of the lemma is satisfied, as if the registers clock, out, aux are all zero then the adv register contains a sample from $\dist^A$. 
we collect a sample from $\dist^A$ with probability $\in \left[\frac{1 - 3\e}{T'+1} (1 - \frac{\e}{T'}),\frac{1 - 3\e}{T'+1}\right] \subseteq  \left[\frac{1-5\e}{T'+1}, \frac{1+5\e}{T'+1}\right]$.

For the second condition observe that
\begin{align*}
&|p - \bra{0^{n}}_\aux \bra{\psi_{\dist^A}}_\adver \bra{0}_\out G^\dagger \Pi_{\out}^{(1)} \compcirc \ket{0}_\out \ket{\psi_{\dist^A}}_\adver \ket{0^{n}}_\aux| \\
&= |\bra{\xi_{T'}}_\comput W^\dagger \Pi^{(1)}_\out W \ket{\xi_{T'}}_\comput - \bra{0^{n}}_\aux \bra{\psi_{\dist^A}}_\adver \bra{0}_\out G^\dagger \Pi_{\out}^{(1)} \compcirc \ket{0}_\out \ket{\psi_{\dist^A}}_\adver \ket{0^{n}}_\aux| \\
&\leq |\bra{\xi_{0}}_\comput W^\dagger \Pi^{(1)}_\out W \ket{\xi_{0}}_\comput - \bra{0^{n}}_\aux \bra{\psi_{\dist^A}}_\adver \bra{0}_\out G^\dagger \Pi_{\out}^{(1)} \compcirc \ket{0}_\out \ket{\psi_{\dist^A}}_\adver \ket{0^{n}}_\aux| + 4\e T' \\
&\leq \frac{\e}{T'} + 4\e T' \leq 5\e T',
\end{align*}
where in the first inequality we used \eqref{eq:distancebetweenxis} and the fact that the largest eigenvalue of $W^\dagger \Pi^{(1)}_\out W$ is at most of norm $1$ and in the second inequality we used \eqref{eq:hinpenalty} and again the fact that the largest eigenvalue of $W^\dagger \Pi^{(1)}_\out W$ is at most of norm $1$.

\paragraph{Realizing the clock.} As we mentioned we also need to specify how to realize the clock register. The naive implementation would result in a $O(\log(T'))$-local Hamiltonian. To obtain a $5$-local Hamiltonian we use a unary representation. That is we embed the counter space in a larger space in the following way
$$
\ket{j}_\clock \mapsto |\underbrace{1,\dots,1}_j, \underbrace{0,\dots,0}_{T' -j}\rangle.
$$
We need to now change $H_\inp$ and $H_\prop$ to be consistent with this change. But more importantly we need to also penalize incorrect configurations in the clock register. This is what the $H_\clock$ term is responsible for. We refer the readeer to \citet{kitaevbook} for details. The proof of Lemma~\ref{lem:circuittoham} extends naturally to this case. 
\end{proof}

\begin{comment}
Let $\subspace$ be the null space of $\widetilde{H}_\compcirc$, $\subspace_\inp$ be the null space of $\widetilde{H}_\compcirc$ and $\subspace_\prop$ be the null space of $\widetilde{H}_\compcirc$. Note that $\subspace = \subspace_\inp \cap \subspace_\prop$ as the eigenvalues of $\wt{H}_\inp$ and $\wt{H}_\prop$ are non-negative, which follows from Fact~\ref{fact:eeigenvalue} and an observation that $\wt{H}_\inp$ is a sum of orthogonal projections. Using that we can decompose $\ket{\tilde{\phi}} = u_1 \ket{\tilde{\phi}^{\parallel}} + u_2 \ket{\tilde{\phi}^\perp}$, where $\ket{\tilde{\phi}^\parallel} \in \subspace, \ket{\tilde{\phi}^\perp} \in \subspace^\perp$ and $u_1^2 + u_2^2 = 1$. Then we can bound
\begin{align*}
\bra{\tilde{\phi}} \wt{H}_\compcirc \ket{\tilde{\phi}}
&= \bra{\tilde{\phi}} \wt{H}_\inp \ket{\tilde{\phi}} + \bra{\tilde{\phi}} \wt{H}_\prop \ket{\tilde{\phi}} \\
&= u_2^2\bra{\tilde{\phi}^\perp} \wt{H}_\inp \ket{\tl{\phi}^\perp} + u_2^2\bra{\tilde{\phi}^\perp} \wt{H}_\prop \ket{\tl{\phi}^\perp} && \text{As } \subspace_\inp,\subspace_\prop \subseteq \subspace 
\end{align*}
\end{comment}

We will need a slight extension of Lemma~\ref{lem:circuittoham} to the case where $\prov$ sends mixed states. For the standard use cases of the reduction this extension is trivial but our purposes require more careful treatment. The difference of our setup in comparison to the standard reduction is that we also collect samples that need to satisfy a specific requirement and this is the reason why the analysis is more involved.

\begin{corollary}[Circuit-to-Hamiltonian Reduction for Mixed States]\label{cor:circuittoham}
For every comparison circuit $\compcirc$, if $\heli(\dist, \dist_C) = \eta$ is sufficiently small then there exists an efficiently computable description of a $5$-local Hamiltonian $H_\compcirc$ with $L = O(n + T')$ many terms such that the following conditions hold. Let $\dist^A$ be the distribution of the content of the adv register when measuring $\rho_A$ in the $Z$ basis conditioned on the clock, out and aux registers being all $0$ after measurement. For every density matrix $\rho_A$ such that $\trace(H_\compcirc \rho_A) \leq \frac{\eta^2}{T'^3}$ if we measure $\rho_A$ in the $Z$ basis then
\begin{itemize}
    \item with probability $\in \left[\frac{1-7\eta}{T'+1}, \frac{1+7\eta}{T'+1}\right]$ the clock register is equal to $\ket{0}_\clock$, the out register is equal to $\ket{0}_\out$, the aux register is equal to $\ket{0^n}_\aux$,
    \item with probability $\in \left[\frac{1-7\eta}{T'+1}, \frac{1+7\eta}{T'+1}\right]$ the clock register is equal to $\ket{T'}_\clock$ and if conditioned on this event the distribution of the out register is a Bernoulli variable with parameter $p \geq 1 - 3\eta^2$ then $\heli(\dist_C, \dist^A) \leq O(\eta^{1/4})$.
    %$|p - \bra{0^{n}}_\aux \bra{\psi_{\dist^A}}_\adver \bra{0}_\out G^\dagger \Pi_{\out}^{(1)} \compcirc \ket{0}_\out \ket{\psi_{\dist^A}}_\adver \ket{0^{n}}_\aux| \leq 5 \e T'$.
\end{itemize}
\end{corollary}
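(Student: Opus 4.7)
The strategy is to reduce the mixed-state claim to the pure-state Lemma~\ref{lem:circuittoham} by spectrally decomposing $\rho_A = \sum_i \lambda_i \ket{\phi_i}\bra{\phi_i}$, applying Lemma~\ref{lem:circuittoham} to each eigenstate $\ket{\phi_i}$ with its own tight value $\e_i := T'\bra{\phi_i} H_\compcirc \ket{\phi_i}$, aggregating the per-component estimates by convex combinations, and finally invoking Lemma~\ref{lem:technicaleta1/4} to turn the scalar lower bound on the Bernoulli parameter $p$ into a Hellinger bound on the induced mixture. The key budget identity is $\sum_i \lambda_i \e_i = T' \cdot \trace(H_\compcirc \rho_A) \leq \eta^2/T'^2$; splitting indices into $S = \{i : \e_i \leq \e_0\}$ (good) and its complement (bad) via Markov's inequality gives $W_{\text{bad}} := \sum_{i \notin S} \lambda_i \leq \eta^2/(T'^2 \e_0)$, with $\e_0$ a sufficiently small constant from Lemma~\ref{lem:circuittoham}.

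For the two probability bullets, each per-component probability lies in $\bigl[\tfrac{1-5\e_i}{T'+1}, \tfrac{1+5\e_i}{T'+1}\bigr]$ for $i \in S$ and trivially in $[0,1]$ for $i \notin S$. Taking the $\lambda$-weighted average and using the budget, the aggregate probability differs from $1/(T'+1)$ by at most $\tfrac{5 \sum_i \lambda_i \e_i}{T'+1} + W_{\text{bad}} = O(\eta^2/T'^2)$, comfortably inside the target window $\bigl[\tfrac{1-7\eta}{T'+1}, \tfrac{1+7\eta}{T'+1}\bigr]$ for small $\eta$. For the Hellinger bullet, I would write the conditional Bernoulli parameter as $p = \sum_i \tilde q_i p_i$ and the sample distribution as the mixture $\dist^A = \sum_i \tilde q'_i \dist^A_i$, where $\tilde q_i$ and $\tilde q'_i$ are the normalized weights induced by conditioning on $\{\clock = T'\}$ and on $\{\clock = 0, \out = 0, \aux = 0\}$ respectively. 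Substituting the per-component estimate $p_i \geq f(\heli(\dist_C,\dist^A_i)) - 5\e_i T'$, with $f(x) = \tfrac12(1+(1-x^2)^2)$, from Lemma~\ref{lem:circuittoham} into $p \geq 1 - 3\eta^2$, and using the budget to bound the aggregated slack $\sum_i \tilde q_i \cdot 5\e_i T' = O(T'^2 \sum_i \lambda_i \e_i) = O(\eta^2)$, yields $\sum_i \tilde q'_i f(\heli(\dist_C, \dist^A_i)) \geq 1 - O(\eta^2)$; Lemma~\ref{lem:technicaleta1/4} then delivers $\heli(\dist_C, \dist^A) \leq O(\eta^{1/4})$.

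The main obstacle is the Hellinger step: the weights $\tilde q_i$ (for $p$) and $\tilde q'_i$ (for the sample distribution) arise from conditioning on different events and only agree with $\lambda_i(T'+1)$ up to a $(1 \pm O(\e_i))$ factor on $S$, so carrying the $1 - O(\eta^2)$ lower bound across from the $\tilde q$-mixture to the $\tilde q'$-mixture requires careful tracking of these discrepancies together with the bad mass $W_{\text{bad}}$. The tight energy hypothesis $\eta^2/T'^3$ is precisely what makes $T'^2 \sum_i \lambda_i \e_i \leq \eta^2$, keeping all aggregated slacks inside the $O(\eta^2)$ tolerance that Lemma~\ref{lem:technicaleta1/4} demands.
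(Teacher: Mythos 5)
Your plan follows essentially the same route as the paper's proof: decompose $\rho_A$ into pure components, split them into low- and high-energy components via Markov's inequality (the paper uses the threshold $\sqrt{\e}/T'$ with $\e=\eta^2/T'^2$ rather than a fixed constant $\e_0$, which is immaterial), apply Lemma~\ref{lem:circuittoham} componentwise, aggregate the probabilities and the slack in $p$ using the energy budget, and finish with Lemma~\ref{lem:technicaleta1/4}. The ``obstacle'' you flag about the mismatch between the conditional weights $\tilde q_i$ and $\tilde q'_i$ is genuine, but the paper's own proof simply works with the unconditional ensemble weights $q_i$ throughout and does not track this discrepancy, so on that point you are being more careful than the source; the only slip in your plan is the direction of the per-component estimate (you need $p_i \leq f(\heli(\dist_C,\dist_i^A)) + 5\e_i T'$ to convert the lower bound on $p$ into a lower bound on the $f$-average, which the two-sided bound of Lemma~\ref{lem:circuittoham} supplies anyway).
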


\begin{proof}
Let $\e = \frac{\eta^2}{T'^2}$. By the ensemble interpretation of density matrices we can express
$$
\rho_A = \sum_{i=1}^k q_i \ket{\phi_i} \bra{\phi_i}_\comput.
$$
Thus we can write
$$
\sum_{i=1}^k q_i \bra{\phi_i} H_\compcirc \ket{\phi_i} \leq \frac{\e}{T'}.
$$
By Markov inequality we have 
\begin{equation}\label{eq:markov1}
\sum_{i=1}^k q_i \mathbbm{1}_{\left\{\bra{\phi_i} H_\compcirc \ket{\phi_i} > \frac{\sqrt{\e}}{T'}\right\}} \leq \frac{\sqrt{\e}}{T'}.
\end{equation}
For $i \in [k]$ let $\dist_i^A$ be the distribution of contents of adv conditioned on clock, out, and aux registers being all $0$ when measuring $\ket{\phi_i}$ in the Z basis. Note that for all $i$ such that $\bra{\phi_i} H_\compcirc \ket{\phi_i} \leq \frac{\sqrt{\e}}{T'}$ Lemma~\ref{lem:circuittoham} guarantees that $\dist_i^A$ satisfies the conditions of the reduction. 
%For $i$ such that $\bra{\phi_i} H_\compcirc \ket{\phi_i} > \frac{\sqrt{\e}}{T'}$ let $\dist_i^A$ be the distribution of the adv register conditioned on clock, out, aux registers being all $0$. We claim that the following distribution satisfies the conditions of the corollary $\dist^A = \sum_{i=1}^k q_i \dist_i^A$. 

To see the the first condition note that by \eqref{eq:markov1} we get that the probability that the clock register is $\ket{0}_\clock$ is $\in \left[\frac{1-5\e - 2\sqrt{\e}}{T'+1}, \frac{1+5\e + 2\sqrt{\e}}{T'+1}\right] \subseteq \left[\frac{1-7\sqrt{\e}}{T'+1}, \frac{1+7\sqrt{\e}}{T'+1}\right] \subseteq \left[\frac{1-7\eta}{T'+1}, \frac{1+7\eta}{T'+1}\right]$. Same bound on probability holds also for the clock register being equal to $\ket{T'}_\clock$.
%, where $Z$ is the proper normalization factor.  By \eqref{eq:markov1} we get that 
%\begin{equation}\label{eq:zbound}
%1 \geq Z \geq 1 -\frac{\sqrt{\e}}{T'}.
%\end{equation}
%For convenience later on we will also express $\dist^A = \sum_{i=1}^{k'} q_i' \mathcal{P}_i$.

For $i \in [k]$ let $p_i$ be the probability of obtaining outcome $1$ in the out register when measuring $\ket{\phi_i}$ in the $Z$ basis conditioned on clock register being in state $\ket{T'}_\clock$. Then for the second condition observe that

\begin{align}
p &= \sum_{i=1}^k q_i p_i \nonumber \\
&\leq \sum_{i=1}^k q_i p_i \mathbbm{1}_{\left\{\bra{\phi_i} H_\compcirc \ket{\phi_i} \leq \frac{\sqrt{\e}}{T'}\right\}} + \frac{\sqrt{\e}}{T'} \nonumber \\
&\leq \sum_{i=1}^k q_i \bra{0^{n}}_\aux \bra{\psi_{\dist_i^A}}_\adver \bra{0}_\out G^\dagger \Pi_{\out}^{(1)} \compcirc \ket{0}_\out \ket{\psi_{\dist_i^A}}_\adver \ket{0^{n}}_\aux \mathbbm{1}_{\left\{\bra{\phi_i} H_\compcirc \ket{\phi_i} \leq \frac{\sqrt{\e}}{T'}\right\}} + 6\sqrt{\e}T' \nonumber \\
&\leq \sum_{i=1}^k q_i \bra{0^{n}}_\aux \bra{\psi_{\dist_i^A}}_\adver \bra{0}_\out G^\dagger \Pi_{\out}^{(1)} \compcirc \ket{0}_\out \ket{\psi_{\dist_i^A}}_\adver \ket{0^{n}}_\aux  + 6\sqrt{\e}T' + \frac{\sqrt{\e}}{T'} \nonumber  \\
&\leq \sum_{i = 1}^k q_i f(\heli(\dist_C, \dist_i^A)) + 7\sqrt{\e}T' \label{eq:boundonp}
\end{align}
where in the first inequality we used \eqref{eq:markov1}, in the second inequality we used properties of $\dist_i^A$ guaranteed by Lemma~\ref{lem:circuittoham}, 
%in the third inequality we used \eqref{eq:zbound}, 
in the fourth we used Corollary~\ref{cor:prelationtohellinger}.

By \eqref{eq:boundonp} and the assumption $p \geq 1 - 3\eta^2$ we get that
$$
\sum_{i = 1}^k q_i f(\heli(\dist_C, \dist_i^A))  \geq 1 - 3\eta^2 -7\sqrt{\e}T' \geq 1 - 10\eta^2,
$$
where in the last inequality we used that $\e= \frac{\eta^2}{T'^2}$. We conclude by applying Lemma~\ref{lem:technicaleta1/4}.

\end{proof}

\subsubsection{Correctness of the Protocol}

Recall that protocol from Figure~\ref{fig:low_mem_protocol} builds upon the protocol from Figure~\ref{fig:quantumverifier_protocol}. Now $\ver$, instead of running $\compcirc$ itself, outsources its execution to $\prov$. On a high level correctness of this new protocol is a consequence of correctness of the quantum verifier protocol (Theorem~\ref{thm:quantumverifier}) and circuit-to-Hamiltonian reduction (Lemma~\ref{lem:circuittoham}). One, however, needs to be careful as the guarantees about the protocol will change slightly and some details in the proof need to be verified.

\begin{lemma}\label{lem:threets}
    Let $n_1,n_2,n_3$ be the number of times each type occurs in protocol from Figure~\ref{fig:low_mem_protocol}. If $N = \Omega(\log(1/\delta))$ then $\Prob[n_1,n_2,n_3 > \frac{N}{6}] \geq 1 - \delta$. 
    %For $N$ iterations of the protocol, we define $\mathcal{E}$ to be the event $n_1,n_2,n_3 > N/6$. For $N > 18\ln(\frac{3}{\delta})$, the probability of $\mathcal{E}$ is at least $1-\delta$.
\end{lemma}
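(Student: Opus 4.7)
The plan is to follow exactly the template used in the proof of Lemma~\ref{lem:eventforquantumverifier}, extending it from a binary to a ternary choice. Since $\textsf{type}$ is sampled uniformly from $\{1,2,3\}$ independently across rounds, for each fixed $b \in \{1,2,3\}$ the count $n_b$ is a sum of $N$ i.i.d. Bernoulli random variables with parameter $1/3$. Hence $\mathbb{E}[n_b/N] = 1/3$, and deviating below $N/6$ corresponds to deviating from the mean by more than $1/6$.

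First I would apply Fact~\ref{fact:chernoff} with $\epsilon = 1/6$ to each individual count: this yields
\[
\Prob\!\left[\left|\frac{n_b}{N} - \frac{1}{3}\right| > \frac{1}{6}\right] \leq 2 e^{-N/72}.
\]
Then I would choose $N$ large enough that this bound is at most $\delta/3$, i.e.\ $N \geq 72 \log(6/\delta) = \Omega(\log(1/\delta))$, which matches the hypothesis. Finally, a union bound over the three possible values of $b \in \{1,2,3\}$ yields
\[
\Prob\!\left[\exists\, b \in \{1,2,3\} : n_b \leq \tfrac{N}{6}\right] \leq 3 \cdot \tfrac{\delta}{3} = \delta,
\]
which is the claimed statement after complementation.

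There is no real obstacle here: the argument is a direct three-way analogue of the two-way bound already proved in Lemma~\ref{lem:eventforquantumverifier}. The only very minor care is to ensure that the $\epsilon$ chosen in Chernoff is strictly smaller than the gap between the mean $1/3$ and the threshold $1/6$; with $\epsilon = 1/6$ this is tight but sufficient, and the exponent $-\epsilon^2 N/2 = -N/72$ governs the required round count.
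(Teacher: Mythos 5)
Your proposal is correct and matches the paper's proof essentially verbatim: both view each $n_b$ as a sum of $N$ independent Bernoulli$(1/3)$ indicators, apply the Chernoff--Hoeffding bound with deviation $1/6$ to obtain the $2e^{-N/72} \leq \delta/3$ tail, and conclude with a union bound over the three types. No differences worth noting.
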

\begin{proof}
For $b \in \{1,2,3\}$, $n_b$ can be seen as sum of random Bernoulli variables $\{x_i\}_{i\in [N]}$ with parameter $1/3$. Then by Fact~\ref{fact:chernoff} we get that $\Prob[|\frac{n_b}{N} - \frac13| > \frac16] \leq 2e^{-\frac{N}{72}} \leq \frac\delta3$. We finish by applying the union bound to the error events.

%hence $\mu = E[n_b] = N/3$ for all $b$. According to the lower tail Chernoff bound, 
%\begin{center}
%    $Pr[n_b \leq (1-\nu) \mu ] \leq e^{-\frac{2 \nu^2 \mu^2}{N}}$
%\end{center}
%By setting $\nu = 1/2 $ we get 
%\begin{center}
%    $Pr[n_b \leq N/6] \leq e^{-\frac{N}{18}}$
%\end{center}
%Now probability of at least of of $n_b$ values being less than $N/6$ is less than $3 e^{- \frac{N}{18}}$. Hence setting $N = \Omega(\log(1/\delta))$ is enough.
%Hence by setting $N$ significantly larger than $18 \ln(\frac{3}{\delta})$, the probability of $\mathcal{E}$ happening is at least $1-\delta$. 
\end{proof}

\begin{lemma}\label{constant-mem-estimate}
Let $\rho_A$ be the reduced density of the first $n'$ qubits of $\ket{\phi}_{AB}$, $\gamma$, $p$, $n_1,n_2,n_3$, $S$ be as in the protocol defined in Figure~\ref{fig:low_mem_protocol}. Let $p^*, q^*$ and $\lambda$ be defined as, 
\begin{align*}
        \lambda &= \trace(H_\compcirc \rho_A),  \\
        q^* &= \trace(\ket{0} \bra{0}_{\clock} \otimes \ket{0} \bra{0}_{\out} \otimes \ket{0^n}\bra{0^n}_\aux \rho_A), \\
        p^* &= \frac{\trace(\ket{T'}\bra{T'}_{\clock} \otimes \ket{1}\bra{1}_{\out} \rho_A )}{\trace( \ket{1}\bra{1}_{\out} \rho_A)}.
\end{align*}
We define the event $\mathcal{F}$ to be $\left| \frac{\gamma \cdot L}{n_1} - \lambda \right| \leq \e, \left| \frac{|S|}{n_2} - q^* \right| \leq \e \left| \frac{p}{n_3} - p^* \right| \leq \e  $. If $N = \Omega( \frac{n^2 +T'^2}{\e^2}\log(1/\delta))$ then $\Prob[\mathcal{F}] \geq 1 - \delta$. 
\end{lemma}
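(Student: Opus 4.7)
The plan is to apply the Chernoff--Hoeffding bound (Fact~\ref{fact:chernoff}) three times, one for each of the three estimators, after first conditioning on the high-probability event, call it $\mathcal{E}$, that every round type is realized a sufficient number of times. Since the hypothesis $N = \Omega((n^2+T'^2)/\e^2 \cdot \log(1/\delta))$ is in particular $\Omega(\log(1/\delta))$, Lemma~\ref{lem:threets} (instantiated with failure probability $\delta/4$) guarantees $n_1, n_2, n_3 \geq N/6$ with probability at least $1-\delta/4$. We condition on $\mathcal{E}$ and then bound the three failure events $\{|\gamma L/n_1 - \lambda| > \e\}$, $\{||S|/n_2 - q^*| > \e\}$, $\{|p/n_3 - p^*| > \e\}$ each by $\delta/4$, so that a union bound over the four events yields the claim with probability at least $1-\delta$.

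The energy estimator is the delicate one and drives the $L^2$ factor in the sample complexity. In each type-$0$ round $\ver$ samples a term $t$ uniformly at random from the $L = O(n+T')$ terms of the $5$-local Hamiltonian $H_\compcirc$ produced by Lemma~\ref{lem:circuittoham}, measures the corresponding local Paulis on $\rho_A$, and updates $\gamma$ by $-J_t \prod_{j\in t} b_j$. Because the coefficients $J_t$ arising from $H_\inp$, $H_\prop$, and $H_\clock$ are all $O(1)$ and the measured product lies in $\{0,1\}$, each single-round contribution is bounded; taking expectation first over the measurement outcome (which reproduces the appropriate Pauli expectation of $H_t = J_t \sigma_t$ against $\rho_A$) and then over the uniform choice of $t \in [L]$ yields mean $\lambda/L$. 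Rescaling by $L$, Hoeffding applied to $n_1$ independent bounded random variables of mean $\lambda$ and range $O(L)$ requires $n_1 = \Omega(L^2 \log(1/\delta)/\e^2) = \Omega((n^2+T'^2)\log(1/\delta)/\e^2)$ to achieve accuracy $\e$, which is exactly what $n_1 \geq N/6$ delivers under our hypothesis on $N$.

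The two remaining estimators are straightforward empirical averages of Bernoulli random variables. In each type-$1$ round, the indicator that the clock, out, and aux registers collapse to all-zero under the $Z$-measurement of $\rho_A$ is, by the measurement postulate, Bernoulli with parameter exactly $q^*$, and these indicators are independent across rounds; Fact~\ref{fact:chernoff} then gives the $\delta/4$ bound as soon as $n_2 = \Omega(\log(1/\delta)/\e^2)$, which is absorbed into our bound on $N$. The estimator $p/n_3$ is handled analogously: conditional on which type-$2$ rounds trigger the inner update (equivalently, conditional on the realized value of $n_3$), the recorded bits $r$ are i.i.d.\ Bernoulli$(p^*)$, so a further application of Chernoff yields the required concentration. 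The only non-routine step in the whole argument is verifying the unbiasedness and $O(1)$-boundedness of the per-round energy contribution, which in turn rests entirely on the explicit Pauli decomposition of $H_\compcirc$ from Lemma~\ref{lem:circuittoham}; once that is in hand, the remainder is a standard triple Chernoff bound plus a union bound.
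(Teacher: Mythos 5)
Your proof is correct and follows essentially the same route as the paper's: Lemma~\ref{lem:threets} to lower-bound the per-type counts, three applications of the Chernoff--Hoeffding bound (with the $L^2 = O(n^2 + T'^2)$ blow-up coming from the range of the rescaled energy estimator), and a union bound. You are in fact slightly more explicit than the paper about the unbiasedness of the per-round energy contribution and about the conditioning needed for the $p/n_3$ estimator; the only nit is that the measured Pauli products take values in $\{-1,+1\}$ rather than $\{0,1\}$, which does not affect the boundedness argument.
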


\begin{note}
We note that as explained in Section~\ref{sec:prelim} when we write $\ket{0} \bra{0}_{\clock} \otimes \ket{0} \bra{0}_{\out} \otimes \ket{0^n}\bra{0^n}_\aux$ we really mean $\ket{0} \bra{0}_{\clock} \otimes \ket{0} \bra{0}_{\out} \otimes I_\adver \otimes \ket{0^n}\bra{0^n}_\aux$. We omit the $I$'s for simplicity of notation.
\end{note}

\begin{proof}
%Looking at the $n_1$ iterations in which the type was $0$, one can notice that, these iterations are identical to the Fidzsimons-Hadjucek protocol, and by following the exact same proof if 
Note that for every term $t \in H_\compcirc$ we have $|J_t| \leq 1$. Then if $n_1 = \Omega(\frac{L^2}{\e^2} \log{\frac{1}{\delta}})$ then Fact~\ref{fact:chernoff} guarantees that $\Prob[|\frac{\gamma \cdot L}{n_1}- \lambda| >\e] \leq \delta$. 

Next we define Bernoulli variables $\{s_i\}_{i \in [n_2]}$ to indicate whether $|S|$ increases in a given round, i.e. $|S| = \sum_{i = 1}^{n_2}s_i$. By definition $\mu = E[s_i] = \trace(\ket{0} \bra{0}_{\clock} \otimes \ket{0} \bra{0}_{\out} \otimes \ket{0^n}\bra{0^n}_\aux \rho_A)$. Using Fact~\ref{fact:chernoff} we get that if $n_2 = \Omega(\frac{1}{\e^2}\log(1/\delta))$ then $\Prob \left[\left|\frac{|S|}{n_2}- q^* \right| >\e \right] \leq \delta$. The exact same argument can be used for $\frac{p}{n_3}$.

%Due to Hoeffding's inequality we have, 
%\begin{center}
%    $\Prob[||S|- E[|S|]| > \delta_0] \leq 2e^{\frac{-2\delta_0^2}{4n_2}}$
%\end{center}
%By setting $\delta_0 = \delta \cdot n_2$ we get, 

%\begin{center}
%    $\Prob[||S|/n_2- q^*| >\delta] \leq 2 e^{-\frac{\delta^2 n_2}{2}} $
%\end{center}
To conclude we note that, by the union bound, if $n_1 = \Omega( \frac{L^2}{\e^2}\log(1/\delta))$ and $n_2,n_3 = \Omega(\frac{1}{\e^2} \log(1/\delta))$ then $\Prob[\mathcal{F}] \geq 1-\delta$. By Lemma~\ref{lem:threets} and the union bound we get that if $N = \Omega( \frac{L^2}{\e^2}\log(1/\delta))$ then $\Prob[\mathcal{F}] \geq 1-\delta$. As Lemma~\ref{lem:circuittoham} guarantees that $L = O(n + T')$ we can also set $N = \Omega( \frac{n^2 +T'^2}{\e^2}\log(1/\delta))$.

\end{proof}

Intuitively Lemma \ref{constant-mem-estimate} guaranties that with a high probability, the estimates $\frac{\gamma \cdot L}{n_1} , \frac{|S|}{n_2}, \frac{p}{n_3}$ are accurate enough. With that fact in hand we proceed by stating the main theorem of this section. 

\begin{theorem}[Constant Memory Quantum Verifier]\label{thm:constantmemoryverifier}
For every circuit $C$ acting on $n$ qubits, with $T$ gates, for every $\delta \in (0,\frac13), K \in \N$ and all $\eta > 0$ small enough there exists an interactive protocol between a verifier with constant quantum memory $\ver$ and a quantum prover $\prov$ with the following properties. The protocol runs in $N = O\left( \frac{K \cdot (n^5 + n^2T^3 +T^5)}{\eta^4}\log(1/\delta)\right)$ rounds, in each round $\prov$ sends a (potentially mixed) quantum state on $O(n+T)$ 
%n' = 2n+1+T+n+3
qubits to $\ver$. At the end of the protocol $\ver$ outputs $\bot$ when it rejects the interaction or $S = \{x_1, \dots, x_{|S|} \}$, where $x_i \in \nbits$, when it accepts.

\begin{itemize}
    \item (Completeness) There exists $\prov^{\oracle(*)}$ such that for every $\mathcal{D} \in \mathfrak{D}(n)$ satisfying $\heli(\mathcal{D}, \mathcal{D}_C) \leq \eta$ the following holds. With probability $1- \delta$ over the randomness in the protocol $\prov^{\oracle(\mathcal{D})}$ succeeds, $S \sim_{\text{i.i.d.}} \mathcal{D}^{|S|}$ and $|S| \geq \Omega(K)$.
    \item (Soundness) For every $\prov$ that succeeds with probability at least $\frac23$ we have $S \sim_{\text{i.i.d.}} (\dist^A)^{|S|}$ and $\heli(\dist_C, \dist^A) \leq O(\eta^{1/4})$. 
\end{itemize}
\end{theorem}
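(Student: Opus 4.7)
The plan is to reuse the strategy behind Theorem~\ref{thm:quantumverifier} but have $\ver$ outsource the evaluation of $\compcirc$ to $\prov$ through the circuit-to-Hamiltonian machinery of Lemma~\ref{lem:circuittoham} and Corollary~\ref{cor:circuittoham}. The estimates $\gamma\cdot L/n_1$ and $p/n_3$ will play the roles that direct measurements by $\ver$ played before, and the reduced density $\rho_A$ supplied by $\prov$ plays the role of the joint state $\ket{0}_\out\ket{\psi_{\dist^A}}_\adver\ket{0^{\otimes n}}_\aux$. I would prove completeness and soundness separately.

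For completeness, the honest prover queries $\oracle(\dist)$ to obtain $\ket{\psi_\dist}$ and prepares the history state $\ket{\phi}$ of $\compcirc$ with $\ket{\psi_\dist}$ loaded into the adv register, then streams $\ket{\phi}$ qubit by qubit to $\ver$. By construction of $H_\compcirc$ in the proof of Lemma~\ref{lem:circuittoham}, this state satisfies $\bra{\phi}H_\compcirc\ket{\phi}=0$, so $\lambda=0$. Measuring $\ket{\phi}$ in the computational basis yields clock $=\ket{0}_\clock$ with probability exactly $1/(T'+1)$, in which case out, aux are automatically $\ket{0}$ and the adv register produces a sample from $\dist$; analogously clock $=\ket{T'}_\clock$ occurs with the same probability, and conditioned on that, Lemma~\ref{lem:similaritymeasure} together with $\heli(\dist,\dist_C)\leq\eta$ gives $p^*\geq \tfrac12(1+(1-\eta^2)^2)\geq 1-\eta^2$. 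Invoking Lemma~\ref{constant-mem-estimate} with $\e$ small enough that $\e<\eta^2/(4T'^2)$, the event $\mathcal F$ holds with probability $1-\delta/2$; on $\mathcal F$ both protocol thresholds pass and $|S|\gtrsim n_2\cdot\tfrac{1}{T'+1}\gtrsim N/T'$, which exceeds $\Omega(K)$ for the stated $N$. The samples are i.i.d.\ from $\dist$ because, within a single round, the measurement structure of the history state returns a sample from $\dist$, and across rounds the prover is assumed to act i.i.d.

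For soundness, I assume $\prov$ succeeds with probability at least $2/3$. Applying Lemma~\ref{constant-mem-estimate} with the same $\e$, $\mathcal F$ holds with probability $1-\delta/2$, so by a union bound there is positive probability that $\mathcal F$ holds and $\ver$ accepts simultaneously. On this event one has $\lambda\leq \eta^2/(2T'^2)+\e$ and $p^*\geq 1-2\eta^2-\e$, which (after a mild tightening of $\e$ and of the abort threshold to match the hypothesis $\trace(H_\compcirc\rho_A)\leq\eta^2/T'^3$ of Corollary~\ref{cor:circuittoham}) puts $\rho_A$ in the regime of that corollary with $p\geq 1-3\eta^2$. The corollary then delivers $\heli(\dist_C,\dist^A)\leq O(\eta^{1/4})$. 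Because $\prov$ acts i.i.d., $\rho_A$ is the same across rounds, so the samples appended to $S$ in the type-$1$ rounds are i.i.d.\ from $\dist^A$ as defined in Corollary~\ref{cor:circuittoham}.

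The main obstacle is parameter bookkeeping rather than new ideas. The tolerance $\e$ in Lemma~\ref{constant-mem-estimate} must be chosen so that (i) the gap between the honest value $\lambda=0$ and the abort threshold is resolvable, and (ii) the inferred bound on $\lambda$ fits inside the hypothesis of Corollary~\ref{cor:circuittoham}, which is of order $\eta^2/T'^3$. This forces $\e=\Theta(\eta^2/T'^{\alpha})$ for the appropriate exponent $\alpha$, and plugging into the round bound $N=\Omega((n^2+T'^2)/\e^2\log(1/\delta))$ of Lemma~\ref{constant-mem-estimate}, together with an extra factor of $K$ to guarantee $|S|\geq\Omega(K)$ despite each type-$1$ round succeeding only with probability $\Theta(1/T')$, should reproduce the stated $N=O(K(n^5+n^2T^3+T^5)/\eta^4\log(1/\delta))$ once $T'=n+T+3$ is substituted. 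A secondary subtlety is ensuring the three threshold events (energy check, output-probability check, sample count) are combined via a union bound that absorbs into the single constant $\delta$.
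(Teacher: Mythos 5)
Your proposal follows essentially the same route as the paper's proof: the honest prover commits to the history state (giving $\lambda=0$ and exact $1/(T'+1)$ clock probabilities), Lemma~\ref{constant-mem-estimate} supplies the concentration of $\gamma\cdot L/n_1$, $|S|/n_2$ and $p/n_3$, and soundness is obtained by combining the acceptance thresholds with Corollary~\ref{cor:circuittoham} exactly as in the paper, including the observation that a positive-probability intersection of acceptance with the concentration event suffices to bound the deterministic quantities $\trace(H_\compcirc\rho_A)$ and $p^*$. The only caveat is the one you already flag, namely that choosing $\e=\Theta(\eta^2/T'^3)$ to meet the corollary's hypothesis makes the round count from Lemma~\ref{constant-mem-estimate} scale like $(n^2+T'^2)T'^6/\eta^4$ rather than the stated $n^2T'^3+T'^5$; this bookkeeping tension is present in the paper's own proof as well, so it does not distinguish your argument from theirs.
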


\begin{proof}
We first address completeness of the protocol and then move to soundness.
\paragraph{Completeness.} Recall that the $\prov$ that was guaranteed to exist in Theorem~\ref{thm:quantumverifier} was just sending state $\ket{\psi_\dist}_\adver$ to $\ver$. Recall that we denote by $T' = n + T + 3$ the number of gates in $\compcirc$ and by $n'$ the number of qubits that are sent by $\prov$ in each round. As we discussed the natural extension of this strategy to the constant memory model is for $\prov$ to prepare the history state $\ket{\phi_\dist}_\comput$ of $\ket{\psi_\dist}_\adver$ and send it to $\ver$. As $N = O\left( \frac{K \cdot (n^2T'^3 +T'^5)}{\eta^4}\log(1/\delta)\right) = O\left( \frac{K \cdot (n^5+ n^2T^3 +T^5)}{\eta^4}\log(1/\delta)\right)$ we get by Lemma~\ref{constant-mem-estimate} that with probability $1 - \delta$ 
\begin{itemize}
\item the estimate of the energy $\frac{\gamma \cdot L}{n_1} \leq \frac{\eta^2}{4T'^3}$  as $\bra{\phi_\dist} H_\compcirc \ket{\phi_\dist} = 0$, 
\item $|S| = \Omega(K)$ as in this case $\trace(\ket{0} \bra{0}_{\clock} \otimes \ket{0} \bra{0}_{\out} \otimes \ket{0^n}\bra{0^n}_\aux \rho_A)$, which is the probability of getting a sample if the type is $1$ is equal to $\bra{\phi_\dist} \Pi_\clock^{(T')} \ket{\phi_\dist} = \frac{1}{T'+1}$, 
\item $p \geq \frac{\bra{\phi_\dist} \Pi_\clock^{(0)} \Pi_\out^{(1)} \ket{\phi_\dist}}{\bra{\phi_\dist} \Pi_\clock^{(0)} \ket{\phi_\dist}} - \frac{\eta^2}{4} \geq f(\heli(\dist_C, \dist)) - \frac{\eta^2}{4} \geq 1 - 2\eta^2$, thus the two checks are verified and the interaction is accepted. By definition $S \sim_{\text{i.i.d.}} (\dist)^{|S|}$. Thus completeness is verified. 
\end{itemize}
\paragraph{Soundness.} We follow the structure of the proof of Theorem~\ref{thm:quantumverifiermixedstates}, which is the analog of this theorem for a fully quantum verifier. Let $\rho_A$ be the density matrix representing the state sent by $\prov$. By Lemma~\ref{constant-mem-estimate} we know that with probability $1 -\delta/2$ the energy estimate is within an additive error of $\frac{\eta^2}{4T'^3}$ and $p$ is estimated within an additive error of $\frac{\eta^2}{4}$. So as $\prov$ succeeds with probability $\frac23$ then by the union bound and the fact that $\frac13 + \frac\delta2 < 1$ we get that $\trace(H_\compcirc \rho_A) \leq \frac{\eta^2}{2T'^3} + \frac{\eta^2}{4T'^3} = \frac{\eta^2}{T'^3}$ and $p \geq 1 - 2\eta^2 - \frac{\eta^2}{4} \geq 1 - 3\eta^2$. With that we can apply Corollary~\ref{cor:circuittoham} and conclude that $\heli(\dist^A, \dist_C) \leq O(\eta^{1/4})$. 
\end{proof}

\begin{comment}
\begin{theorem}[Constant Memory Quantum Verifier]
    Let $(\ver , *)$ be the interactive protocol described in Figure \ref{fig:low_mem_protocol}.
    For all $\epsilon , \delta > 0$ and $N = f(\epsilon,\delta)$, conditioned on the event guaranteed by Lemma~\ref{constant-mem-estimate}, the following two properties hold.
    \begin{itemize}
        \item \textit{Completeness:} There exists a prover $\prov$
        who wins with probability 1, $|S| = \Omega(N/T)$ and for distribution $\mathcal{D}$ s.t. $S \sim \mathcal{D}_\prov^{|S|}$, $H(\mathcal{D}_\prov , \mathcal{D})< \eta$
        \item \textit{Soundness:} For any prover $\prov^*$, if $\prov^*$ wins with probablity $1$, $|S| = \Omega(N/T)$ and $S \sim \mathcal{D}_\prov^{|S|}$, $H(\mathcal{D}_\prov , \mathcal{D})< \eta$.
         
    \end{itemize}
    \begin{proof}
    
    \end{proof}
\end{theorem}
\end{comment} 

\subsection{Classical Verifier}\label{sec4.3}
Now we are ready to move to the last model we consider in this work, namely the one where $\ver$ is fully classical and the communication is also classical. Recall that in Section~\ref{sec4.2} we designed the protocol by forcing $\prov$ to send to $\ver$ a history state $\ket{\phi_{\dist^A}}_\comput$ corresponding to a distribution satisfying $\heli(\dist_C, \dist^A) \leq O(\eta^{1/4})$. To extend this protocol to the classical model we first force $\prov$ to commit to a state $\rho$, a state that will in some sense correspond to $\ket{\phi_{\dist^A}}_\comput$ and then force $\prov$ to measure this state in the basis chosen by $\ver$. By making the prover to measure his qubits honestly we get a version of constant quantum memory Protocol (Figure~\ref{fig:low_mem_protocol}) in which all the quantum computation is done on the prover side and the verifier and the communication is completely classical.

%The last thing we touch upon in this work is replacing the verifier in protocol \ref{fig:low_mem_protocol} with a completely classical verifier 

To achieve our goal we will use cryptographic tools. As the protocol will rely on hardness of computational problems, our soundness results will only address provers who are computationally bounded, namely only provers in the QPT class. Recall that the QPT class is defined as follows. There exists a classical algorithm running in $\text{poly}(\lambda)$ time that for an input size $1^{\lambda}$, generates the prover's circuit of size $\text{poly}(\lambda)$. 

Next we give a high level overview of the protocol.
An honest prover $\prov$ is given a local Hamiltonian correspoding to $G$ and computes the ground state of the Hamiltonian, i.e. the history state $\ket{\phi_{\hist}}$. Later the prover is asked to commit to this state before the protocol proceeds with the interactive stage, in which the prover is asked to measure qubits of the state he has committed to either in computational or the Hadamard basis, and send the outcomes to the verifier. At each iteration, the verifier decides to do one of the following 3,
\begin{itemize}
    \item estimate the energy of the state the prover has committed to,
    \item estimate the probability of the output of the circuit being 1,
    \item collect a sample from the distribution corresponding to the prover's state.
\end{itemize}

The description of this protocol is given in Figure \ref{fig:classical_protocol}. We note that the results presented in this section heavily rely on \citet{mahadev}. Some of the technical lemmas are not proven here. We refer the reader to \citet{mahadev} for said proofs.

\begin{note} We stress that with this protocol one can only retrieve samples from measurements done in the Z basis. The distribution of samples collected in the protocol when $\ver$ asks for the X basis measurements are \textbf{not} in general equal to the distribution of measuring the state of $\prov$ in the X basis. This means that if our protocol required samples from the distribution corresponding to the X measurements it is not clear if it could be realized in the fully classical model.
\end{note}
%he reason is the values corresponding to the measurement of the provers basis are not identically distributed with these measurement outcomes, but in fact distributed 

Similar to \cite{mahadev} we require a more refined version of the circuit-to-Hamiltonian reduction, namely we require our Hamiltonians to be $2$-local and of the form $\sum_{i,j} -\frac{J_{i,j}}{2} (\sigma_{X,i}\sigma_{X,j} + \sigma_{Z,i}\sigma_{Z,j})$. 

\begin{theorem}[\cite{2localhamiltonian}]
For any integer $n \geq 1$ there exists $n' = poly(n)$, $a(n)$ and $\delta \geq 1/poly(n)$ such that given a $T$-gate quantum circuit $\compcirc$, there exists an efficiently computable real-weighted Hamiltonian $H_\compcirc$ in $XX-ZZ$ form, such that, 
\begin{itemize}
    \item (\textit{completeness}) If $\compcirc$ accepts x with probability at least $2/3$, then $\lambda_0(H_\compcirc) \leq a$.
    \item (\textit{soundness}) If $\compcirc$ accepts x with probability at most $1/3$, then $\lambda_0(H_\compcirc) > a+ \delta$.
\end{itemize}
\end{theorem}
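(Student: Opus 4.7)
The plan is to chain together three well-known reductions, each preserving a $1/\poly(n)$ promise gap: (i) Kitaev's circuit-to-Hamiltonian reduction from $\compcirc$ to a $5$-local Hamiltonian, (ii) the Kempe--Kitaev--Regev perturbation-gadget reduction from $5$-local to $2$-local, and (iii) the Biamonte--Love interaction-restriction gadgets to bring the $2$-local Hamiltonian into the $XX$--$ZZ$ form $\sum_{i,j} -\tfrac{J_{i,j}}{2}(\sigma_{X,i}\sigma_{X,j} + \sigma_{Z,i}\sigma_{Z,j})$. The parameter $n'$ grows polynomially at each step, and the gap $\delta$ shrinks by at most an inverse-polynomial factor per step, so the final $\delta \geq 1/\poly(n)$ guarantee is maintained.

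First I would apply Kitaev's construction (as already invoked in Lemma~\ref{lem:circuittoham}) to produce a $5$-local Hamiltonian $H^{(5)}_\compcirc = H_\inp + H_\prop + H_\out + H_\clock$ on $O(n+T)$ qubits, with the standard gap analysis showing that $\lambda_0(H^{(5)}_\compcirc) \leq a_5$ if $\compcirc$ accepts with probability $\geq 2/3$ and $\lambda_0(H^{(5)}_\compcirc) \geq a_5 + 1/\poly(n,T)$ if $\compcirc$ accepts with probability $\leq 1/3$; the history-state argument supplies the low-energy witness in the completeness case, and the change-of-basis bound on $E \otimes I$ supplies the soundness gap. Next, I would invoke the perturbation-gadget reduction of Kempe--Kitaev--Regev: each $k$-local term for $k \in \{3,4,5\}$ is simulated, up to error $\epsilon$ in the low-energy subspace, by $2$-local interactions on a polynomially larger Hilbert space using ancilla mediator qubits weighted appropriately. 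Setting $\epsilon$ to be a sufficiently small inverse polynomial in the original gap yields a $2$-local Hamiltonian $H^{(2)}_\compcirc$ whose ground energy tracks $\lambda_0(H^{(5)}_\compcirc)$ within $\epsilon$, so the promise gap survives with a polynomial loss.

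The third step is the key restriction to $XX$--$ZZ$ form, and this is the content of the cited \cite{2localhamiltonian} result. At this stage, $H^{(2)}_\compcirc$ is a general real-weighted $2$-local Hamiltonian, i.e.\ a sum of weighted products $\sigma_{P,i}\sigma_{Q,j}$ with $P,Q \in \{X,Y,Z\}$. The Biamonte--Love gadgets simulate each undesired Pauli interaction (for example $\sigma_{X,i}\sigma_{Z,j}$ or the $Y$-containing ones) by attaching a mediator qubit and choosing the coupling weights so that a second-order perturbation-theory expansion around a heavily weighted unperturbed Hamiltonian reproduces the target interaction in the low-energy subspace, while all surviving couplings are of $XX$ or $ZZ$ type. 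Aggregating these gadgets term-by-term, and again choosing the perturbation strength to be a small inverse polynomial in the current gap, produces the desired $H_\compcirc$ in the claimed form. I would then collect the parameters: $n'$ is the sum of the original $O(n+T)$ qubits and the polynomially many mediator qubits introduced by steps (ii) and (iii), the threshold $a = a(n)$ is the propagated ground-energy benchmark, and $\delta$ is the product of the initial $1/\poly(n,T)$ Kitaev gap with the inverse-polynomial attenuation factors from the two gadget stages, still $\geq 1/\poly(n)$.

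The hard part will be the bookkeeping of the gap across the gadget reductions: each perturbative simulation is valid only in a low-energy window whose width, together with the simulation error, must be strictly smaller than the promise gap of the Hamiltonian being simulated. Concretely, one must argue, via the self-energy expansion (Schrieffer--Wolff / Bloch's formula), that the projected effective Hamiltonian on the ground subspace of the unperturbed mediator Hamiltonian approximates the target $k$-local or general $2$-local Hamiltonian up to operator-norm error $\epsilon$, and then verify that $\epsilon$ can be taken polynomially small without blowing up the weights $J_{i,j}$ beyond polynomial size (so that the final Hamiltonian remains efficiently describable). This is the only nontrivial calculation; the rest is an assembly of standard ingredients, and the overall completeness/soundness statement follows by chaining the three guarantees and absorbing the constants into $a(n)$.
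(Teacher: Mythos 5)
The paper does not actually prove this theorem: it is imported verbatim as a black box from the cited reference (the $XX$--$ZZ$ QMA-completeness result of Biamonte--Love), so there is no in-paper argument to compare yours against. Your outline is, in structure, exactly the standard literature derivation of that cited result: Kitaev's circuit-to-Hamiltonian construction to get a $5$-local Hamiltonian with a $1/\poly$ promise gap, the Kempe--Kitaev--Regev perturbation gadgets to reduce locality to $2$ (in the actual KKR chain this goes via $3$-local first, a minor bookkeeping point), and then the Biamonte--Love mediator-qubit gadgets to eliminate all Pauli couplings other than $XX$ and $ZZ$. You also correctly identify the genuinely delicate part: the self-energy/Schrieffer--Wolff analysis showing each gadget reproduces the target term up to inverse-polynomial operator-norm error in the low-energy subspace while keeping the weights $J_{i,j}$ polynomially bounded. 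As a proof sketch of the stated theorem, this is sound.

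One caveat worth flagging, since it affects how the theorem is used rather than the theorem itself: the bare completeness/soundness statement about $\lambda_0(H_\compcirc)$ is not what the paper ultimately needs. The surrounding protocol relies on the stronger ``stability'' property of Lemma~\ref{lem:circuittoham} --- that \emph{every} low-energy state is close to a history state and therefore yields the right measurement statistics on the clock, out, adv and aux registers --- and the authors explicitly say they ``skip reproving this statement for the 2-local Hamiltonian.'' Your perturbative chain would have to be upgraded to show that the gadget reductions preserve this property, i.e.\ that the low-energy subspace of the final $XX$--$ZZ$ Hamiltonian is mapped onto that of the original $5$-local one by an isometry that acts trivially (up to small error) on the logical qubits being measured. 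That does follow from the standard gadget analysis, but it is an additional argument beyond the ground-energy gap statement you set out to prove, and it is the part that neither the paper nor your proposal actually carries out.
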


As proved in Section~\ref{sec4.2}, by modifying the standard circuit-to-Hamiltonian reduction, we can show that "for any $\ket{\phi}$ such that $\bra{\phi} H_\compcirc \ket{\phi} < \e$ the distribution of the measurement outcome of the first qubit of $\ket{\phi}$ (conditioned on the clock register being $T'$) is $\e$ close to the distribution of what $G$ would output". For the sake of simplicity we skip reproving this statement for the 2-local Hamiltonian. 

\begin{figure}
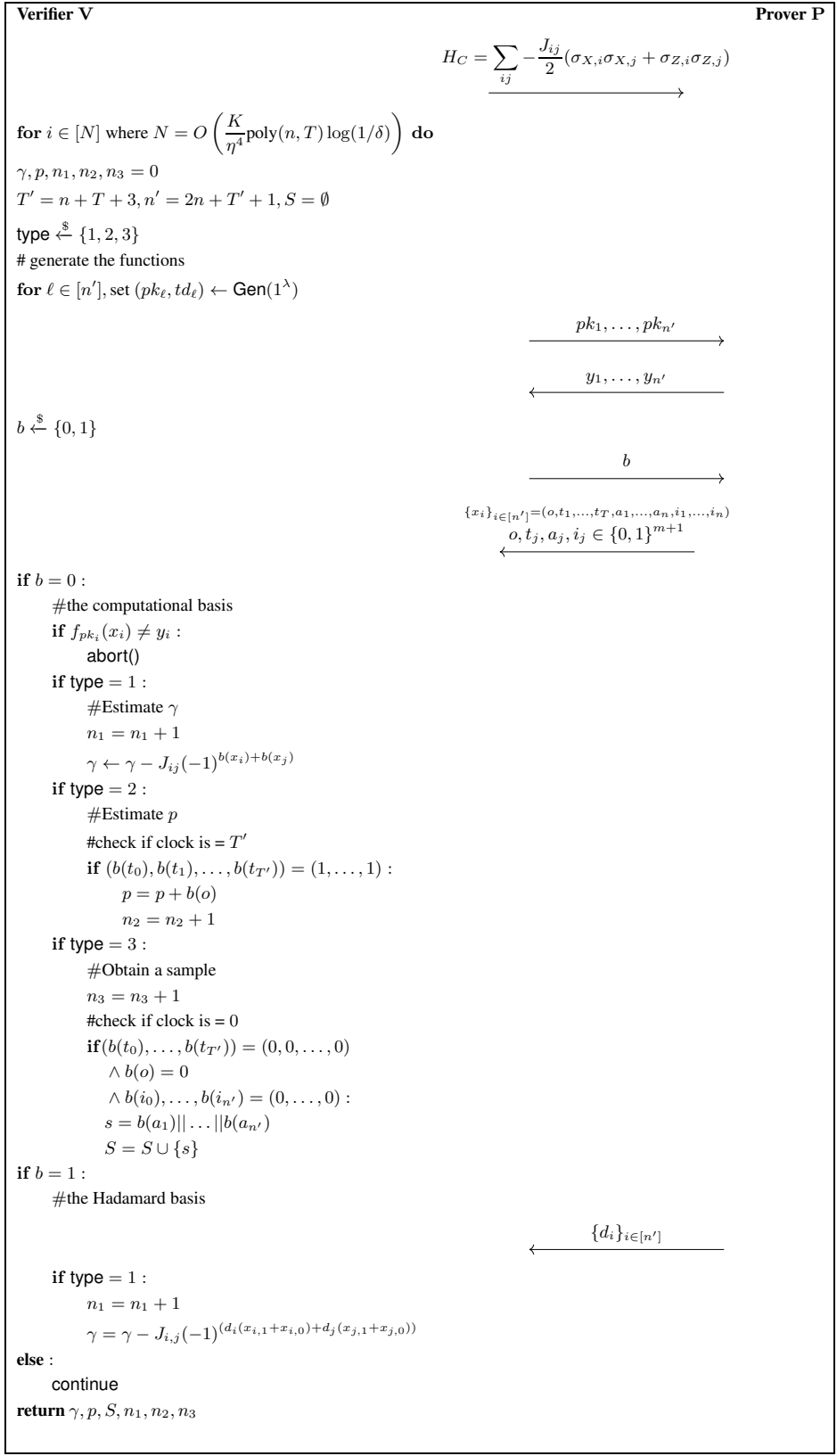

    \centering
        \scalebox{0.83}{
        \fbox{
        \procedure{}{
            \textbf{Verifier } \ver \>\> \textbf{Prover } \prov	\\
            \> \sendmessageright*{H_{C} = \sum_{ij} -\frac{J_{ij}}{2} (\sigma_{X,i} \sigma_{X,j} + \sigma_{Z,i} \sigma_{Z,j})}\> \\
            \pcfor i \in [N] \text{ where } N= O \left(\frac{K}{\eta^4}\poly(n,T)\log(1/\delta)\right) \pcdo \>\> \\
            \gamma , p , n_1,n_2,n_3  = 0 \\
            T' = n+T+3, n' = 2n+T'+1, S = \emptyset\\
            \textsf{type} \xleftarrow{\$} \{1,2,3\}\>\>\\ 
            \text{\# generate the functions}\>\> \\
            \pcfor \ell \in [n'], \text{set } (pk_\ell , td_\ell) \gets \textsf{Gen}(1^\lambda)\\
           \> \sendmessageright*{pk_1,\dots,pk_{n'}}\\
           \> \sendmessageleft*{y_1,\dots,y_{n'}}\> \\
           b \xleftarrow{\$} \{0,1\}\\ 
           \> \sendmessageright*{b} \\
           \> \sendmessageleft*{\stackrel{\{x_i\}_{i \in [n']} = (o , t_1,\dots,t_T, a_1,\dots , a_{n}, i_1 ,\dots, i_n)}{o,t_j,a_j,i_j \in \{0,1\}^{m+1}}}\\
           \pcif b = 0: \\
           \t \t \#\text{the computational basis}\\
           \t \t \pcif f_{pk_i}(x_i) \neq y_i:\\
           \t \t \t \t \textsf{abort()}\\
           \t \t \pcif \textsf{type} = 1: \\
           \t \t \t \t \#\text{Estimate } \gamma\\
           \t \t \t \t n_1 = n_1 + 1 \\
           \t \t \t \t \gamma \gets \gamma - J_{ij}(-1)^{b(x_i) + b(x_j)}\\
           \t \t \pcif \textsf{type} = 2: \\
           \t \t \t \t \#\text{Estimate } p \\
           \t \t \t \t \text{\#check if clock is = }T'\\
           \t \t \t \t \pcif (b(t_0),b(t_1),\dots,b(t_{T'})) = (1,\dots,1):\\
           \t \t \t \t \t \t p = p+ b(o)\\
           \t \t \t \t \t \t  n_2 = n_2 + 1 \\
            \t \t \pcif \textsf{type} = 3:\\
            \t \t \t \t \#\text{Obtain a sample}\\
            \t \t \t \t n_3 = n_3 + 1 \\
            \t \t \t \t \text{\#check if clock is = }0\\
            \t \t \t \t \pcif [(b(t_0),\dots ,b(t_{T'})) = (0,0,\dots,0)\\
            \t \t \t \t \t \land b(o) = 0 \\
            \t \t \t \t \t \land b(i_0),\dots,b(i_{n'}) = (0,\dots,0)]:\\
            \t \t \t \t \t s = b(a_1) ||\dots ||b(a_{n'})\\
            \t \t \t \t \t S = S \cup \{s\}\\
            \pcif b = 1: \\
            \t \t \#\text{the Hadamard basis}\\
           \> \sendmessageleft*{\{d_i\}_{i \in [n']}}\\
           \t \t \pcif \textsf{type} = 1: \\
           \t \t \t \t n_1 = n_1 + 1 \\
           \t \t \t \t \gamma = \gamma - J_{i,j} (-1)^{(d_i(x_{i,1}+x_{i,0}) + d_j(x_{j,1}+x_{j,0}))}\\
           \textbf{else}:\\
           \t \t \textsf{continue}\\
          \textbf{return }\gamma , p , S, n_1 , n_2 , n_3\\
            }}}
             \caption{The description of the classical verifier protocol. Notice that $x_i$ values are $m+1$ bits long each, e.g. $o$ contains the measurement outcome of the output register, plus the remaining $m$ bits of the input to $f_{pk_1}$.} 
    \label{fig:classical_protocol}
\end{figure}

We proceed by stating the completeness and soundness properties of this protocol and providing a proof sketch.

\subsubsection{Cryptographic Assumptions, Claw-Free Functions}
In this section we review the cryptographic assumptions that the soundness of our protocol relies on. 

The protocol starts with $\ver$ sending description of $n'$ functions $f_{pk_1},\dots,f_{pk_{n'}}$ to $\prov$. These functions are a family of 2-to-1 functions called \textit{claw-free}. Intuitively the prover is asked to commit to an image $y_i$ for each $f_{pk_i}$. Let us denote the two preimages of $y_i$ under $f_{pk_i}$ by $x_{i,0}$ and $x_{i,1}$. Based on the challenge bit, the prover is asked to either reveal one of $x_{i,b}$ values or reveal $d_i$ such that $d_i(x_{i,0} + x_{i,1}) = 0$. We also assume that for all $i$ there exists an efficiently computable function $b_i$, which given $x_{i,b}$ outputs the bit $b$. The existence function $b$ is inherent in most claw-free family constructions. 

For an efficiently computable 2-to-1 function $f$, $\prov$ can do the aforementioned task as follows, 
\begin{enumerate}
    \item $\prov$ prepares the state $\frac{1}{\sqrt{2^{m+1}}}\sum_{x\in \{0,1\}^{m+1}} (\ket{x})$
    \item evaluates $f_{pk}$ on the super-position to get the state, $\frac{1}{\sqrt{2^{m+1}}}\sum_{y\in \{0,1\}^{m}} (\ket{x_{0,y}} + \ket{x_{1,y}})\ket{y}$ 
    \item measures the last register to get $y^*$, the state after this step would be $\frac{1}{\sqrt{2}} (\ket{x_{0,y^*}} + \ket{x_{1,y^*}})$ and send $y^*$ to the verifier
    \item \textsf{if} the challenge is $0$, measures the state in the computational basis to obtain one of the preimages $x_{b,y^*}$, and if the challenge is $1$, measures in the Hadamard basis to obtain $d$ s.t. $d(x_{0,y^*} + x_{1,y^*}) = 0$
\end{enumerate}

The cryptographic property that we want to capture with our family of functions is that, although based on the challenge bit, $\prov$ can either return one preimage $x_b$ or $d$ s.t. $d(x_1+x_2) = 0$ for $b\in \{0,1\}$ and $f^{-1}(y) = \{x_0,x_1\}$, but they should not succeed in both tasks simultaneously.
 
\begin{definition}[Adaptive hardcore bit property]\label{def:hdb}
For parameter $\lambda$, a family of functions $\{f_{pk}\}_{pk}:\{0,1\}^{m(\lambda)} \to \{0,1\}^{m(\lambda)-1}$, is called an adaptive hardcore family if, \footnote{$m$ is a polynomial} 
\begin{enumerate}
    \item for all $pk$,  $f_{pk}$ is 2-to-1,
    \item there exists a classically, efficiently computable bijective function, $b_{pk,y}:f^{-1}(y) \to \{0,1\}$
    \item $\forall pk$, $\forall \prov \in \textsf{QPT}$, if $(y,x^*,d) \gets \prov(pk)$, let $\{x_0 , x_1\} = f_{pk}^{-1}(y)$ then $|1/2 - \Prob[f(x^*) = y \land d(x_0 + x_1) = 0]| \leq ngl(\lambda)$.
\end{enumerate}
\end{definition}

For the sake of convenience we only consider a specific construction of claw-free familes. We change the second requirement of definition~\ref{def:hdb} to,

\; $2'.$ For any $pk$ and any $y \in \text{Img}(f_{pk})$, the preimages of $y$ take the form $(b,x_b)$, meaning that $b(x)$ is actually the first bit of $x$. 

\subsubsection{The Canonical Isometry}\label{sec:canonical}
In order to prove $\gamma$ is an accurate estimate of the energy using a similar argument to Lemma~\ref{constant-mem-estimate}, we have to prove the $\mathbb{E}[\gamma] = \trace(H_\compcirc \rho)$, where in a sense $\rho$ is the prover's state. Letting $(X_i,Z_i)$ be the observables of $\prov$ which determin the value of the $i^{th}$ response, we require an isometry which \textit{teleports} these observables to $\sigma_{X,i} , \sigma_{Z,i}$ as the Hamiltonian is penalizing the bad configurations of the state with respect to $\sigma_{X,i} , \sigma_{Z,i}$. Let us assume the prover's state is in a hilbert space $H \otimes H_{\text{env}}$, where he might share some entanglement with the enviroment.

 Based on how the estimates are updated in the protocol the natural way to define the observables that $\prov$ measures, would be,\footnote{here we will use $b$ by absuing the notation instead of $(b_i(x_i))_{i \in [n]}$}
 
 \begin{center}
     $Z(a) = \sum_{x_1, \dots , x_{n'}\in \{0,1\}^{m}} (-1)^{b(x)\cdot a}\ket{x_1}\bra{x_1}\otimes \dots \ket{x_{n'}}\bra{x_{n'}}\otimes I_P$\\
     $X(a) = \sum_{d_1,\dots,d_{n'} } (-1)^{\sum a_i(d_i (x_{i,0}+ x_{i,1}))} U^{\dagger} (\ket{d_1}\bra{d_1} \otimes \dots \ket{d_{n'}}\bra{d_{n'}} \otimes I_P) U$
 \end{center}
Basically in our modelling of actions of $\prov$, if the challenge bit is $b=0$, $\prov$ measures his state in the computational basis in order to get the preimages, and if $b=1$, he applies an arbitrary unitary $U$, followed by a Hadamard measurement to retrieve the $d$ values. 

As mentioned before we would like to see these obserables as $\sigma_X$ and $\sigma_Z$ up to some isometry. The canonical choice of isometry here is $V: H \to (\mathbb{C}^2)^{\otimes n'} \otimes (\mathbb{C}^2)^{\otimes n'} \otimes H$ defined as,
\begin{center}
    $V\ket{\psi} = (\frac{1}{2^{n'}} \sum_{a,b} I\otimes \sigma_X(a)\sigma_Z(b) \otimes X(a)Z(b))\ket{\phi^+}^{\otimes n'}\ket{\psi}$
\end{center}
Where $\ket{\phi^+}$ is an EPR pair. 
\begin{definition}[Extracted Qubits]
Let $\prov$ be a prover playing in Protocol~\ref{fig:classical_protocol}, $X,Z$ defined and let $V$ be the canonical isometry sending $(X,Z)$ to $(\sigma_X , \sigma_Z)$. Let $\ket{\phi} \in H \otimes H_{\text{env}}$ be the state of the prover after sending the $y_i$ values. We call the reduced density of $V\ket{\phi}$ on $H$ the \textit{extracted qubits} of the prover, and we denote it by $\rho$. 
\end{definition}

\begin{fact}[\cite{mahadev}]
Let $\prov$ be any $\textsf{QPT}$ prover, $\rho$ be their extracted qubit. We have, \footnote{$\sigma_W(a) = \Pi_{i \text{ s.t. }a_i = 1} \sigma_{W,i}$ the $X$ or $Z$ measurement of indices such that $a_i = 1$}
\begin{itemize}
    \item $\forall b \in \{0,1\}^{n'} $, \; $\trace(\sigma_Z(b) \rho) = \bra{\psi} Z(b)\ket{\psi}$
    \item $\forall b \in \{0,1\}^{n'} $, \; $\trace(\sigma_X(b) \rho) = \frac{1}{2^{n'}}\sum_a (-1)^{a\cdot b}\bra{\psi} Z(b)X(a) Z(b)\ket{\psi}$
\end{itemize}
\end{fact}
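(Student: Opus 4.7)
The plan is to derive both identities from a single structural calculation: compute the operators $V^\dagger(\sigma_W(c) \otimes I \otimes I)V$ for $W \in \{X, Z\}$ and any $c \in \{0,1\}^{n'}$, after which each claim follows immediately from $\trace(\sigma_W(c) \rho) = \langle \psi | V^\dagger (\sigma_W(c) \otimes I \otimes I) V | \psi \rangle$. The central technical ingredient is the mirror identity for the maximally entangled state,
\begin{equation*}
(M_A \otimes I_B)\ket{\phi^+}^{\otimes n'} = (I_A \otimes M_B^T)\ket{\phi^+}^{\otimes n'},
\end{equation*}
together with $\sigma_Z^T = \sigma_Z$, $\sigma_X^T = \sigma_X$, the Pauli commutation relation $\sigma_Z(b)\sigma_X(a) = (-1)^{a\cdot b}\sigma_X(a)\sigma_Z(b)$, and Pauli trace orthogonality $\trace\bigl(\sigma_X(a')\sigma_Z(b')\sigma_X(a)\sigma_Z(b)\bigr) = \pm 2^{n'} \mathbbm{1}_{a=a',\,b=b'}$.

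As a warm-up I would first check that $V$ is an isometry, so that $\rho$ is genuinely a density matrix. Expanding $\langle V\psi \,|\, V\psi\rangle$ converts EPR inner products into normalized Pauli traces via $\bra{\phi^+}^{\otimes n'}(I \otimes M)\ket{\phi^+}^{\otimes n'} = 2^{-n'}\trace(M)$; Pauli orthogonality then collapses the $(a,b,a',b')$ quadruple sum to the diagonal $a'=a,\,b'=b$, leaving $\bra{\psi} Z(b) X(a) X(a) Z(b) \ket{\psi} = 1$ by the projective property $X(a)^2 = Z(b)^2 = I$ inherited from the definitions in Section~\ref{sec:canonical}.

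For the $Z$-identity, I would apply the mirror identity to shift $\sigma_Z(c)$ from register $A$ into register $B$, and then use Pauli commutation to absorb it into the $\sigma_X(a)\sigma_Z(b)$ factor already present, at the cost of a phase $(-1)^{a\cdot c}$, yielding $\sigma_X(a)\sigma_Z(b \oplus c)$. After the change of summation variable $b' = b \oplus c$, the phase combines with the prover-side operator $Z(b)=Z(b'\oplus c)$, and the elementary identity $Z(b')Z(c) = Z(b' \oplus c)$ (which holds because $Z(\cdot)$ is diagonal in the $x$-basis with bit-wise phases) lets the whole sum refactor as $V \cdot Z(c)\ket{\psi}$. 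Consequently $V^\dagger (\sigma_Z(c) \otimes I \otimes I) V = Z(c)$, proving the first bullet. For the $X$-identity the same procedure applied to $\sigma_X(c)$ produces, after the mirror move and one Pauli commutation, a factor $(-1)^{b \cdot c}\sigma_X(a \oplus c)\sigma_Z(b)$ on register $B$; the inner product against $V^\dagger$ then selects via Pauli orthogonality the diagonal pair $b'=b$, $a'=a \oplus c$ with an additional sign from commuting $\sigma_X$ past $\sigma_Z$ inside the trace. In the surviving prover-side factor the two $X$-operators collapse via $X(a)X(a \oplus c) = X(c)$, an identity that holds because $X(\cdot)$ is simultaneously diagonalized by the unitary $U$ with phases that add modulo $2$. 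The remaining free sum and overall normalization then yield the claimed weighted average.

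The main obstacle is sign bookkeeping: phases from Pauli anticommutation appear at several stages (once from commuting $\sigma_Z$ past $\sigma_X$ in register $B$, and once more during the trace reduction), and they must combine coherently with the mirror move for the double sum to collapse. The crucial algebraic input that makes this simplification possible is the composition law $X(a)X(a') = X(a \oplus a')$ (and its analog for $Z$), which is special to this family of observables and is precisely what reduces the double sum in the $\sigma_X$ case to the single sum stated in the Fact.
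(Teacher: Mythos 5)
Your overall strategy --- computing $V^\dagger(\sigma_W(c)\otimes I\otimes I)V$ via the mirror identity on the EPR pairs, Pauli trace orthogonality, and the composition laws $X(a)X(a')=X(a\oplus a')$, $Z(b)Z(b')=Z(b\oplus b')$ --- is the right one. Note that the paper itself gives no proof of this Fact (it is imported from Mahadev with a citation), and your plan is essentially the argument in that source. However, the sign bookkeeping, which you yourself identify as the main obstacle, does not work as described in the $Z$-case. Since $\sigma_Z(c)$ acts on register $A$ and commutes with the operator already applied to register $B$, the mirror identity deposits it as the \emph{rightmost} factor on register $B$, giving $\sigma_X(a)\sigma_Z(b)\sigma_Z(c)=\sigma_X(a)\sigma_Z(b\oplus c)$ with no phase at all. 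The phase $(-1)^{a\cdot c}$ you claim to pick up by commuting past $\sigma_X(a)$ must not survive: if it did, the free sum over $a$ would annihilate the whole expression for $c\neq 0$ and you would obtain $0$ rather than $Z(c)$. Similarly, in the $X$-case the ``additional sign from commuting inside the trace'' is $(-1)^{b'\cdot(a'\oplus a\oplus c)}$, which equals $+1$ on the surviving diagonal $a'=a\oplus c$; the only nontrivial phase is the $(-1)^{b\cdot c}$ you already isolated before applying $V^\dagger$.

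Second, and more substantively: carried out correctly, your calculation yields
$\trace(\sigma_X(b)\rho)=\frac{1}{2^{n'}}\sum_a(-1)^{a\cdot b}\bra{\psi}Z(a)X(b)Z(a)\ket{\psi}$,
i.e.\ the sum ranges over the argument of the $Z$-conjugation while $X$ carries the \emph{fixed} string $b$ (the $Z$-twirl of $X(b)$, consistent with the honest single-qubit check $\frac12(\sigma_X-\sigma_Z\sigma_X\sigma_Z\cdot(-1))=\sigma_X$). This is not literally the formula printed in the Fact, which reads $Z(b)X(a)Z(b)$ with the roles of the fixed and summed indices interchanged inside the bra-ket; the printed version appears to be a typo. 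You should flag this discrepancy explicitly rather than assert that the computation ``yields the claimed weighted average'' --- as written, your plan proves a (correct) statement that differs from the one you were asked to prove.
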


Previously we mentioned that one can retrieve samples by asking the prover measure their state in computational basis, the first bullet exactly corresponds to this scenario. Intuitively what it tells us is that the distribution of the $b(x)$, for $x$ values returned by $\prov$, is identical to the distribution of the measurement outcomes of the extracted qubit $\rho$ in the computational basis. However, in the case of the Hadamard basis, the matter is more subtle as the distribution is "twirled". As long as we only care about collecting samples via Z measurements, the twirl operator does not cause us any issues, as we will show that it would not affect the energy estimate in the protocol. 

In order to follow the proofs done in \cite{mahadev} we require our function family $\mathcal{F}$ to have an stronger property than the adaptive hardcore bit property; namely for it to be a \textit{collapsing} family, defined in \cite{DBLP:conf/eurocrypt/Unruh16}. 

Let $\mathcal{F} = \{f_{pk}\}$ be a family of functions and consider the following game,
\begin{enumerate}
    \item The challenger picks $pk \gets Gen(1^\lambda)$ and sends it to the adversary
    \item The adversary prepares a state $\ket{\phi} = \sum_x \alpha_x\ket{x}$ and sends it to the challenger
    \item The challenger evaluates $f_{pk}$ in super position of $\ket{\phi}$
    \item The challenger measures the image register and obtains $y$ and a state $\ket{\phi'}= (\sum_{x:f_{pk}(x) = y} \alpha_x \ket{x})\ket{y}$
    \item The challenger flips a bit $b$ and based on that either measures the first register of $\ket{\phi'}$ in the computational basis or does not
    \item The challenger sends the state after step 5 to the prover (either $\ket{\phi'}$ or a (classical) probabilistic mixture $\sum_{x, f_{pk}(x) = y} |\alpha_x|^2 \ket{x}\bra{x}$) 
    \item The adversary outputs a bit $b'$ based on the state he has received. 
    \item If $b = b'$ the challenger \textsf{accepts}
\end{enumerate}
$\mathcal{F}$ is called a collapsing family if for any $QPT$ adversary $\adv$, the probability of $\adv$ winning in the aforementioned game is at most $1/2 + ngl(\lambda)$. 

We proceed by stating and proving the completeness of the protocol. 

\subsubsection{Completeness}
In this section we describe an honest prover strategy. We describe a prover $\prov^{\oracle(\mathcal{D})}$ that wins in protocol~\ref{fig:classical_protocol} with probability $1$, and provides us with samples from $\mathcal{D}$. Recall that in the constant quantum memory protocol, $\prov$ first creates a history state $\ket{\phi_\dist}$ for $\ket{\psi_\dist}$ and then sends $\ket{\phi_\dist}$ to $\ver$. This prover satisfies the completeness property. For the classical model we will show that a prover who commits to the same history state also satisfies completeness.

\begin{theorem}[Completeness]
There exists a QPT prover $\prov^{\oracle(\mathcal{*})}$, such that for any distribution $\mathcal{D} \in \mathfrak{D}(n)$, any $\lambda$-collapsing claw-free family $\mathcal{F}$, $\prov^{\oracle(\mathcal{D})}$ wins in Protocol~\ref{fig:classical_protocol} with probability 1 and we have: 
\begin{itemize}
    \item $S \sim \mathcal{D}^{|S|}$
\end{itemize}
\end{theorem}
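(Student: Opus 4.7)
The plan is to mimic the completeness argument of Theorem~\ref{thm:constantmemoryverifier}, replacing the direct transmission of the history state by a claw-free commitment to it, and then verifying that on the honest prover's committed state all three branches of the protocol behave exactly as they did in the constant-memory model. Concretely, the honest $\prov^{\oracle(\dist)}$ proceeds as follows. It queries $\oracle(\dist)$ to obtain $\ket{\psi_\dist}_\adver$, computes the history state $\ket{\phi_\dist}_\comput$ of $\comp$ on $\ket{\psi_\dist}_\adver$ exactly as in Section~\ref{sec4.2}, and then, for each of the $n'$ qubits of $\ket{\phi_\dist}$, runs the four-step commitment procedure with $f_{pk_\ell}$ described in Section~4.3: prepare a uniform superposition, evaluate $f_{pk_\ell}$ in superposition, measure the image register to obtain $y_\ell$, and send $(y_1,\dots,y_{n'})$ to $\ver$. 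Upon receiving the challenge bit $b$, $\prov$ measures each of the preimage registers in the $Z$ basis if $b=0$ or in the Hadamard basis if $b=1$, and reports the outcomes.

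First I would show that the commitment check $f_{pk_\ell}(x_\ell)=y_\ell$ never fails: by construction the preimage register is supported on $f_{pk_\ell}^{-1}(y_\ell)$, so any $Z$-measurement yields a valid preimage with probability $1$. Next I would invoke the canonical isometry from Section~\ref{sec:canonical} to identify the state held by $\prov$ after committing with the state $\ket{\phi_\dist}_\comput$ in the teleported qubits: the distribution of the bits $(b(x_1),\dots,b(x_{n'}))$ is exactly the distribution of measuring $\ket{\phi_\dist}$ in the $Z$ basis, and the twirled Hadamard distribution used to evaluate the $XX$ terms has the same expectation as the corresponding $\sigma_X\sigma_X$ observable on $\ket{\phi_\dist}$.

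With those two facts in hand the three branches of the protocol can be analyzed in parallel to the constant-memory case. For type~$1$, the estimator $\gamma$ is an unbiased average of the Hamiltonian terms evaluated on $\ket{\phi_\dist}$, and since $\ket{\phi_\dist}$ is a (ground) history state we have $\bra{\phi_\dist} H_\comp \ket{\phi_\dist} = 0$, so by Fact~\ref{fact:chernoff} with $N = \Omega(K \cdot \poly(n,T)\log(1/\delta)/\eta^4)$ the estimate $\gamma L/n_1$ stays below the threshold $\eta^2/(2T'^2)$ (or its analog) with probability $1-\delta/3$. For type~$2$, when the clock register measured in $Z$ equals $T'$ the remaining state is (up to the isometry) the final-slice state $\comp\ket{0}_\out\ket{\psi_\dist}_\adver\ket{0^n}_\aux$, so the empirical average $p/n_3$ concentrates around $\frac{1}{2}(1+(1-\heli^2(\dist,\dist_C))^2) \ge 1-2\eta^2$ by Corollary~\ref{cor:prelationtohellinger} and the assumption $\heli(\dist,\dist_C)\le\eta$, satisfying the acceptance threshold with probability $1-\delta/3$. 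For type~$3$, conditioning on the clock, out, and aux outcomes being $0$ isolates the time-$0$ slice $\ket{0}_\out\ket{\psi_\dist}_\adver\ket{0^n}_\aux$, so the bits collected into $S$ are i.i.d.\ samples from $\dist$, i.e.\ $S \sim \dist^{|S|}$. A union bound over the three error events and Lemma~\ref{lem:threets} then yields overall acceptance with probability $1-\delta$.

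The main subtlety, and the place that requires the most care, is the $XX$ branch in the energy estimation: one must verify that the Hadamard-basis answers returned by the honest committer give an estimator whose expectation is genuinely $\bra{\phi_\dist} \sigma_{X,i}\sigma_{X,j}\ket{\phi_\dist}$ rather than merely a twirled analog. This is exactly the content of the canonical-isometry fact quoted from \citet{mahadev}, namely that $\trace(\sigma_X(b)\rho)$ equals the twirl $\frac{1}{2^{n'}}\sum_a(-1)^{a\cdot b}\bra{\psi}Z(b)X(a)Z(b)\ket{\psi}$, and for the honest prover the $Z(a)$ twirl acts as the identity on the history state's $X$-marginals. Once this is established the remainder of the completeness proof is an essentially verbatim repetition of the concentration argument already carried out for Theorem~\ref{thm:constantmemoryverifier}.
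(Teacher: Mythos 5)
Your honest-prover construction and the two facts you establish first are exactly the paper's proof: commit to the history state $\ket{\phi_\dist}$ qubit-by-qubit via the claw-free functions, observe that the preimage check $f_{pk_\ell}(x_\ell)=y_\ell$ passes with certainty because the preimage register is supported on $f_{pk_\ell}^{-1}(y_\ell)$, and observe that the bits $b_1(x_1),\dots,b_{n'}(x_{n'})$ are distributed identically to a $Z$-basis measurement of $\ket{\phi_\dist}$, so the type-$3$ branch yields $S\sim\dist^{|S|}$ by the same argument as in Theorem~\ref{thm:constantmemoryverifier}. That is all the theorem asks for, and up to that point your proof matches the paper's.

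The rest of your argument addresses a different statement. In the protocol of Figure~\ref{fig:classical_protocol} there is no abort based on the estimates $\gamma$ and $p$ — the only abort is the preimage check, and the protocol simply \emph{returns} $\gamma,p,S,n_1,n_2,n_3$ at the end. So ``wins'' here means only that the preimage check never fails, which you have already shown happens with probability exactly $1$; the concentration of $\gamma L/n_1$ below the energy threshold and of $p/n_3$ above $1-2\eta^2$ is the content of the separate ``Completeness 2'' theorem, not of this one. Your closing union bound, which concludes ``acceptance with probability $1-\delta$,'' would in fact be in tension with the claimed ``wins with probability $1$'' if those thresholds were part of winning — the paper explicitly designed the protocol this way precisely because an honest prover cannot pass statistical threshold checks with probability $1$. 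Relatedly, the care you devote to the Hadamard/$XX$ branch and the canonical isometry is unnecessary for this theorem: on challenge $b=1$ there is nothing the prover can fail, and the $S\sim\dist^{|S|}$ claim involves only $Z$-basis answers. None of this makes your proof wrong for the stated theorem; it just means most of it proves a stronger, separately stated result while slightly misreading what the acceptance condition is.
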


We note that completeness for this protocol is in some sense easier to prove than the completeness of the protocols described in the previous sections. The reason for this is that the protocol does not abort when the estimates are not sattisfying to desired bounds. We proceed by describing the strategy for the honest prover and the proof of the completeness.

\begin{proof}
Let us denote $2n+T'+1$ by $n'$. The honest prover will create a state such that each bit $b_i$ would correspond to the measurement of a qubit from the history state. They extend the state with zeros in the following way.
\begin{align}
\ket{\phi_{hist}}\ket{0^{mn'}}_{X}= \sum_{b_1,\dots,b_{n'}} \alpha_{b_1,\dots,b_{n'}} \ket{b_1}\ket{0^m}\dots\ket{b_{n'}}\ket{0^m}
\end{align}
By applying QFT on the 0 registers we get the state, 
\begin{align}
    \ket{\phi'} = \frac{1}{\sqrt{2^{mn'}}}\sum_{b_1,\dots,b_{n'}} \alpha_b (\sum_{z \in \{0,1\}^{mn'}} \ket{b_1}\ket{z_1}\dots\ket{b_{n'}}\ket{z_{n'}})
\end{align}
We add a zero register to $\ket{\psi}$ and evaluate $f_{pk_i}$ on the superpositions to get the state, 
\begin{align}
     \ket{\phi''} = \frac{1}{\sqrt{2^{mn'}}}\sum_{b_1,\dots,b_{n'}} \alpha_b (\sum_{x \in \{0,1\}^{mn'}} \ket{b_1}\ket{z_1}\ket{f_{pk_1}(b_1,z_1)}\dots\ket{b_{n'}}\ket{z_{n'}}\ket{f_{pk_{n'}}(b_{n'},z_{n'})})
\end{align}

$\prov$ proceeds by measuring the image registers to get values $y_1,\dots, y_{n'}$. The state after obtaining this measurement outcome will be, 
\begin{align}\label{eq:history-computational}
    \ket{\phi_{\prov}} = \sum_{b} \alpha_{b} \ket{b_1}\ket{x_{1,b_1}}\ket{y_1} \dots \ket{b_{n'}} \ket{x_{n',b_{n'}}}\ket{y_{n'}}
\end{align}
where $(b_i,x_{i,b_i})$ is the $b_i$-labeled preimage of $y_i$ under $f_{pk_i}$. 

Upon receiving challenge $0$, $\prov$ measure the state $\ket{\phi_\prov}$ in computational basis and sends $x_1 ,\dots,x_n = (b_1,x_{1,b_1}) , \dots, (b_{n'} , x_{n',b_{n'}})$ values to $\ver$. From equation \ref{eq:history-computational} one can deduce that $b_1(x_1),\dots,b_{n'}(x_{n'})$ as in the protocol is distributed identically to the outcome of the measurement of the history state in computational basis. By construction $\prov$ always succeeds in the preimage check, hence, wins with probability 1. 

As the outcomes of measuring the $b$ registers of state $\ket{\phi_{\prov}}$ are distributed identically to the measurement outcomes of the history state $\ket{\phi_\hist}$ we have $S \sim \mathcal{D}^{|S|}$, following the completeness proof from Theorem~\ref{thm:constantmemoryverifier}.
\end{proof}

In fact the completeness can be modified so that it captures the fact that the estimates computed in the protocol are close to the actual energy and outcome probability. We state the theorem here, but as similar statements are proven in the soundness we avoid repeating the proof here. 

\begin{theorem}[Completeness 2]
There exists a QPT prover $\prov^{\oracle(\mathcal{*})}$, such that for any distribution $\mathcal{D} \in \mathfrak{D}(n)$, any family of $\lambda$-collapsing claw-free family $\mathcal{F}$, $\prov^{\oracle(\mathcal{P})}$ wins $N = O(\frac{K}{\eta^4} \poly(n,T)\log(1/\delta))$ iterations of Protocol~\ref{fig:classical_protocol} with probability 1, we have that with probability at least $1-\delta$,
\begin{itemize}
    \item $|S| \geq \Omega(K)$,
    \item $p/n_2 \geq 1-2\heli^2(\mathcal{D}, \mathcal{D}_C)$,\footnote{This is done similar to the protocol described in Figure~\ref{fig:low_mem_protocol}}
    \item $\gamma/n_1 \binom{n'}{2} \in [q - \frac{\eta^2}{2T'^3} , q+ \frac{\eta^2}{2T'^3}]$,  where $q = \bra{\phi_\hist} H_\compcirc \ket{\phi_\hist}$,
    \item $S \sim (\mathcal{D})^{|S|}$.
\end{itemize}

\end{theorem}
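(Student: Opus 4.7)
The plan is to reuse the honest prover $\prov^{\oracle(\dist)}$ from the preceding Completeness theorem: it prepares $\ket{\phi_\hist}$ for $\ket{\psi_\dist}$, extends each qubit with an $m$-bit ancilla, applies QFT, evaluates $f_{pk_i}$ in superposition to obtain the commitment \eqref{eq:history-computational}, and measures the $(b,x_b)$ registers in $Z$ (on challenge $b=0$) or in the Hadamard basis (on $b=1$). The preimage check always passes, so $\prov$ wins with probability $1$, and the four remaining claims reduce to independent Chernoff estimates.

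First I would apply Fact~\ref{fact:chernoff} to the uniform type assignment to get $n_1,n_2,n_3 = \Omega(N)$ except on a $\delta/5$ failure event, and then bound each remaining estimate separately, union bounding at the end. For $|S| \geq \Omega(K)$: on the history state the joint event clock $=0$, out $=0$, aux $=0^n$ has probability exactly $1/(T'+1)$, and the commitment faithfully transmits $Z$-basis outcomes, so Chernoff on $n_3 = \Omega(K T' \log(1/\delta))$ trials yields $|S| = \Omega(K)$. For $p/n_2$: each increment is Bernoulli with parameter equal to the conditional out-probability given clock $=T'$, which by construction of the history state matches $\Pr[\text{out}=1]$ for $G$ on $\ket{0}\ket{\psi_\dist}\ket{0^n}$; Corollary~\ref{cor:prelationtohellinger} identifies this as $\tfrac{1}{2}(1+(1-\heli^2(\dist,\dist_C))^2) \geq 1-2\heli^2(\dist,\dist_C)$. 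For $S \sim \dist^{|S|}$: the $b$-bits of the commitment distribute like a $Z$-measurement of $\ket{\phi_\hist}$, so conditioning on the sample-acceptance event collapses the adv register to $\ket{\psi_\dist}$, whose $Z$-measurement is distributed as $\dist$.

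The central calculation is the energy estimate. In a type-$1$ round with pair $(i,j)$ uniform over the $\binom{n'}{2}$ pairs and challenge $b$ uniform in $\{0,1\}$, the commitment structure causes the honest prover's response to yield the outcome of $\sigma_{Z,i}\sigma_{Z,j}$ (for $b=0$) or of $\sigma_{X,i}\sigma_{X,j}$ (for $b=1$) on $\ket{\phi_\hist}$. Averaging gives the per-round expectation
$$
\mathbb{E}[\gamma\text{ per round}] \;=\; \frac{1}{\binom{n'}{2}}\sum_{i<j}-\frac{J_{ij}}{2}\bigl(\bra{\phi_\hist}\sigma_{Z,i}\sigma_{Z,j}\ket{\phi_\hist}+\bra{\phi_\hist}\sigma_{X,i}\sigma_{X,j}\ket{\phi_\hist}\bigr) \;=\; \frac{q}{\binom{n'}{2}},
$$
and a Chernoff bound with tolerance $\eta^2/\bigl(2T'^3\binom{n'}{2}\bigr)$ on summands in $[-1,1]$ delivers the claimed two-sided bound. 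Since $n' = \poly(n,T)$, the required $n_1$ fits within the allotted $N = O\bigl(K\poly(n,T)/\eta^4\cdot\log(1/\delta)\bigr)$.

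The main obstacle I expect is verifying that the honest prover's Hadamard-basis responses correspond exactly to $\sigma_X$ measurements on $\ket{\phi_\hist}$ rather than on some larger committed state, so that $\mathbb{E}[\gamma]$ really equals $q/\binom{n'}{2}$. For a dishonest prover this is the crux of \citet{mahadev} and requires the full canonical isometry of Section~\ref{sec:canonical}; for the honest strategy the identification should be immediate from the QFT step, but one must unwind the commitment algebra once to confirm that the observable $X(e_i+e_j)$ on the post-commitment state equals $\sigma_{X,i}\sigma_{X,j}$ on $\ket{\phi_\hist}$. Granted this, a union bound over the four failure events yields the theorem with total failure probability at most $\delta$.
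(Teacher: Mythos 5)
Your proposal is correct and takes essentially the approach the paper intends: the paper omits this proof, noting that the needed ingredients are already established in the soundness section, and indeed your argument is exactly that route — the honest prover commits to $\ket{\phi_\hist}$, the per-round expectations of $\gamma$, $p$, and the sampling event are those of Lemma~\ref{lem:classical_expectation} specialized to $\rho=\ket{\phi_\hist}\bra{\phi_\hist}$ (with the Hadamard-response identification you flag being immediate for the honest strategy via Fact~\ref{fact:mahadev_isometry}), and Chernoff plus a union bound finish. The only point worth tightening is the second bullet, where the per-increment expectation is $\tfrac12(1+(1-\heli^2(\dist,\dist_C))^2)=1-\heli^2+\heli^4/2$, so the stated bound on the empirical average $p/n_2$ requires absorbing the Chernoff deviation into the $\heli^2$ slack — a looseness already present in the theorem statement itself.
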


\subsubsection{Soundness}
Now that we have established an honest prover strategy, the only thing left is to prove that for any prover who wins the game with a high probability, the verifier $\ver$ would collect samples from a distribution close to the $\dist_C$. 

The key fact to prove the soundness of the protocol is that the values $x_i$ and $d_i$ are somewhat correlated with the measurement outcomes of the $i^{th}$ qubit of the prover's extracted qubit. 

\begin{fact}[\cite{mahadev}]\label{fact:mahadev_isometry}
Let $\mathcal{F}$ be a collapsing claw-free family and let $\prov$ be any QPT prover who wins in Protocol~\ref{fig:classical_protocol} with probability 1, let $\rho$ be the prover's extracted qubits, $B_i$ and $D_i$ the outcome of measuring the $i^{th}$ qubit of $\rho$ in the computational and the Hadamard basis respectively. For any parity $\chi: \{0,1\}^{n'}\to \{-1,+1\}$ we have, 
\begin{itemize}
    \item (computational basis measurement) $\chi(B_1,\dots,B_{n'})$ is identically distributed to $\chi(b_1(x_1),\dots , b_{n'}(x_{n'}))$.
    \item (the Hadamard basis measurement) $\chi(D_1,\dots,D_{n'})$ is computationally indistinguishable from $\chi(d_1\cdot(x_{1,0} + x_{1,1}),\dots,d_{n'}\cdot(x_{n',0}+x_{n',1}))$.
\end{itemize}
\end{fact}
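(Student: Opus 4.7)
The plan is to reduce both claims to the two identities stated at the end of Section~\ref{sec:canonical}, namely $\trace(\sigma_Z(c)\rho) = \bra{\psi} Z(c) \ket{\psi}$ and $\trace(\sigma_X(c)\rho) = 2^{-n'}\sum_a (-1)^{a\cdot c}\bra{\psi} Z(a) X(c) Z(a)\ket{\psi}$. Any parity $\chi:\{0,1\}^{n'}\to\{-1,+1\}$ has the form $\chi(y) = (-1)^{c\cdot y}$ for a unique $c\in\{0,1\}^{n'}$, and by Fourier inversion the distribution of a $\{0,1\}^{n'}$-valued random variable $Y$ is determined by the $2^{n'}$ expectations $\{\mathbb{E}[(-1)^{c\cdot Y}]\}_c$. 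Hence it suffices to compute these parity expectations on each side and compare.

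For the computational-basis claim I would first unpack definitions. On the $\rho$ side, by definition of $\sigma_Z(c)$ acting on the extracted register, $\mathbb{E}[\chi(B_1,\dots,B_{n'})] = \trace(\sigma_Z(c)\rho)$. On the prover side, $Z(c) = \sum_{x_1,\dots,x_{n'}} (-1)^{c\cdot b(x)} \ket{x_1}\bra{x_1}\otimes\cdots\otimes\ket{x_{n'}}\bra{x_{n'}}\otimes I_P$, so $\bra{\psi} Z(c)\ket{\psi} = \mathbb{E}[\chi(b_1(x_1),\dots,b_{n'}(x_{n'}))]$ by the honest-prover computational-basis measurement procedure on $\ket{\psi}$. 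The first identity from Section~\ref{sec:canonical} equates these two expectations exactly, so by Fourier inversion the two distributions coincide (not just in each parity, but as joint distributions on $\{0,1\}^{n'}$).

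For the Hadamard-basis claim the strategy is analogous but relies on the collapsing property to absorb the twirl. First I would observe that the prover's honest Hadamard procedure (apply $U$, measure in the Hadamard basis) gives $\mathbb{E}[\chi(d_1\cdot(x_{1,0}+x_{1,1}),\dots,d_{n'}\cdot(x_{n',0}+x_{n',1}))] = \bra{\psi} X(c) \ket{\psi}$ directly from the definition of $X(c)$. The second identity from Section~\ref{sec:canonical} gives $\trace(\sigma_X(c)\rho) = 2^{-n'}\sum_a (-1)^{a\cdot c}\bra{\psi} Z(a)X(c)Z(a)\ket{\psi}$, which is precisely the twirl $\mathcal{E}_Z(X(c))$ of $X(c)$ by the group $\{Z(a)\}_a$. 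This twirl is exactly the channel that first performs a computational-basis measurement on the committed preimage registers (by standard completeness of the Pauli group and the fact that $Z(a)$ ranges over all computational-basis phase patterns), then evaluates $X(c)$. A hybrid argument then shows that inserting this extra measurement before the prover's Hadamard-branch computation is indistinguishable to any QPT distinguisher, because $\prov$ wins the preimage check with probability $1$, so his state on the committed register is already supported on preimages of $y$, and by the collapsing property of $\mathcal{F}$ measuring this register is indistinguishable from not doing so. The distinguishing advantage between $\bra{\psi}X(c)\ket{\psi}$ and $\bra{\psi}\mathcal{E}_Z(X(c))\ket{\psi}$ can therefore be reduced to the collapsing game for $\mathcal{F}$, yielding computational (not statistical) indistinguishability of the two parities.

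The main obstacle is the last step: making the reduction to the collapsing game fully formal. One has to build a collapsing-game adversary out of a hypothetical distinguisher for the two parity distributions, which involves carefully simulating all other registers and challenge choices in Protocol~\ref{fig:classical_protocol}, and arguing that $n'$ applications of the single-function collapsing property compose with only a polynomial loss. The computational-basis part, by contrast, is an exact information-theoretic equality and requires no cryptographic assumption. I would defer the detailed reduction to \citet{mahadev}, since the argument there is essentially the same and the verifier's different use of the outcomes here does not affect the commitment structure on which collapsing is invoked.
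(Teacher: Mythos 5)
The paper does not actually prove this statement: it is imported verbatim as a Fact from \citet{mahadev}, and the text explicitly says the technical lemmas of this section are not re-proven. So there is no in-paper proof to match; what you have done is sketch the derivation that the paper outsources. Your sketch is consistent with how the result is actually established in \citet{mahadev}, and the overall logic is sound. For the computational basis, reducing to the identity $\trace(\sigma_Z(c)\rho)=\bra{\psi}Z(c)\ket{\psi}$ is exactly right, and since a parity is $\{-1,+1\}$-valued its distribution is determined by its expectation, so no Fourier inversion over all of $\{0,1\}^{n'}$ is even needed for the bullet as stated (your stronger joint-distribution conclusion is a bonus). For the Hadamard basis, you correctly locate the entire difficulty in the gap between $\bra{\psi}X(c)\ket{\psi}$ and its $Z$-twirl, and correctly identify the collapsing property as what closes that gap. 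Two caveats. First, your reduction bottoms out in the preceding Fact (the two trace identities), which the paper also imports from \citet{mahadev} without proof, so the genuinely hard content --- establishing those identities and carrying out the hybrid/collapsing reduction --- is still being deferred, just as the paper defers it. Second, your description of the phased twirl $2^{-n'}\sum_a(-1)^{a\cdot c}Z(a)X(c)Z(a)$ as ``measure in the computational basis, then evaluate $X(c)$'' is imprecise: the plain dephasing channel is $M\mapsto 2^{-n'}\sum_a Z(a)MZ(a)$, which annihilates a true Pauli $X(c)$ for $c\neq 0$; the extra phases $(-1)^{a\cdot c}$ instead extract the component of the prover's observable that shifts the $Z$-eigenbasis by $c$. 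The collapsing argument in \citet{mahadev} controls the difference between the observable and this phased projection, not literally the insertion of a forgotten measurement, so the hybrid you describe would need to be set up around that object. With that correction, the proposal is a faithful outline of the standard proof.
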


\begin{lemma}\label{lem:classical_expectation}
For any $T$-gate quantum circuit $C$, and its corresponding $T'$-gate comparison circuit $\compcirc$, let $\mathcal{F} = \{f_{pk}\}$ be a family of claw-free functions satisfying the collapsing property, and let $H_\compcirc$ be an $n' = 2n+T'+1$ qubit Hamiltonian corresponding to $\compcirc$. For any QPT prover $\prov$ let $\rho$ be the reduced density of the extracted qubits of $\prov$, and let $\mathcal{D}^{A}$ be the distribution of outcomes of measuring the adv register of $\rho$ conditioned on the measurement outcome of $\clock$, $\out$ and $\aux$ register being $0$ in the computational basis. If $\prov$ wins in protocol \ref{fig:classical_protocol} with probability $1$  we have,
\begin{itemize}
    \item $\mathbb{E}[\gamma/n_1 \binom{n'}{2}] \approx \trace(H_\compcirc \rho)$,
    \item $\mathbb{E}[p/n_2] = \frac{\trace(\ket{T'}\bra{T'}_{\clock}\otimes \ket{1}\bra{1}_{\out} \rho)}{\trace(\ket{T'}\bra{T'}_{\clock} \rho)}$,
    \item $S \sim {(\mathcal{D}^{A})}^{|S|}$.
\end{itemize}
\end{lemma}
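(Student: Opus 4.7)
The plan is to reduce every claim to a single-round computation and then apply Fact~\ref{fact:mahadev_isometry} to transfer distributional statements about the prover's classical responses into statements about measurements of the extracted qubit $\rho$ defined by the canonical isometry in Section~\ref{sec:canonical}. For each bullet, in one round of the corresponding type I would (i) identify the random variable that updates the accumulator, (ii) condition on the verifier's random choices (the basis bit $b$, the random pair $(i,j)$, and the type), and (iii) invoke the appropriate half of Fact~\ref{fact:mahadev_isometry} to rewrite the expectation as a trace against $\rho$.

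For the first bullet I would write $H_\compcirc = \sum_{i<j}-\frac{J_{ij}}{2}(\sigma_{X,i}\sigma_{X,j}+\sigma_{Z,i}\sigma_{Z,j})$ and analyse a type-$1$ round. If $b=0$, the update is $-J_{ij}(-1)^{b(x_i)+b(x_j)}$; applying the computational-basis part of Fact~\ref{fact:mahadev_isometry} with the parity $\chi(y_1,\dots,y_{n'})=(-1)^{y_i+y_j}$ shows this has the same distribution as $-J_{ij}(-1)^{B_i+B_j}$, whose expectation is $-J_{ij}\trace(\sigma_{Z,i}\sigma_{Z,j}\rho)$. If $b=1$, the update $-J_{ij}(-1)^{d_i\cdot(x_{i,0}+x_{i,1})+d_j\cdot(x_{j,0}+x_{j,1})}$ is, by the Hadamard part of Fact~\ref{fact:mahadev_isometry}, computationally indistinguishable from $-J_{ij}(-1)^{D_i+D_j}$, so its expectation matches $-J_{ij}\trace(\sigma_{X,i}\sigma_{X,j}\rho)$ up to a $\mathrm{ngl}(\lambda)$ additive error. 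Averaging uniformly over the pair $(i,j)$ and over $b\in\{0,1\}$ produces $\frac{1}{\binom{n'}{2}}\trace(H_\compcirc \rho)$ per round up to a negligible error, which after multiplying by $\binom{n'}{2}$ and averaging over $n_1$ i.i.d.\ rounds yields the claimed $\approx$.

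For the second and third bullets only the computational basis is used, so no negligible-error slack is incurred. In a type-$2$ round the verifier sets $b=0$ and the prover must return preimages; since the parity statement of Fact~\ref{fact:mahadev_isometry} holds for every parity on $\{0,1\}^{n'}$, Fourier inversion upgrades it to equality of the full joint distribution of $(b_1(x_1),\dots,b_{n'}(x_{n'}))$ and the $Z$-basis measurement outcomes $(B_1,\dots,B_{n'})$ of $\rho$. Consequently the probability that the clock bits decode to $T'$ is $\trace(\ket{T'}\bra{T'}_{\clock}\rho)$, the joint probability that additionally $b(o)=1$ is $\trace(\ket{T'}\bra{T'}_{\clock}\otimes\ket{1}\bra{1}_{\out}\rho)$, and since $n_2$ counts only the $T'$-rounds and $p$ accumulates $b(o)$ on those rounds, $\mathbb{E}[p/n_2]$ equals the stated conditional probability. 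The third bullet is the same argument applied to the event that the clock decodes to $0$, the out qubit is $0$, and the aux register is $0^n$, with the returned sample $b(a_1)\Vert\dots\Vert b(a_n)$ inheriting its joint distribution from the adv register of $\rho$ conditioned on that event; independence across rounds follows because each round uses fresh keys and fresh randomness.

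The main obstacle is carefully handling the Hadamard half of Fact~\ref{fact:mahadev_isometry}, which is only a single-parity, computationally indistinguishable statement: this forces the $\approx$ in the first bullet and requires keeping track of $\poly(\lambda,n,T)$ many negligible errors when averaging. A secondary subtlety is upgrading the parity-level version of the computational part of the fact to the joint-distribution statement needed for bullets two and three, which I would do either via Fourier inversion over $\{0,1\}^{n'}$ or, following~\citet{mahadev}, by directly invoking the collapsing property of $\mathcal{F}$.
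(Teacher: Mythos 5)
Your proposal is correct and follows essentially the same route as the paper: both arguments reduce each bullet to a single-round expectation, invoke the computational-basis half of Fact~\ref{fact:mahadev_isometry} for $p$ and $S$ and the (only computationally indistinguishable) Hadamard half for the $\sigma_X\sigma_X$ terms of $\gamma$, which is exactly where the paper's $\approx$ and your $\mathrm{ngl}(\lambda)$ slack come from. Your explicit remark that the parity-level statement must be upgraded to a joint-distribution statement (via Fourier inversion) for the conditional events in bullets two and three is a step the paper leaves implicit, but it is the same argument.
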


\begin{proof}
To prove this theorem we will be using Fact~\ref{fact:mahadev_isometry}. First we prove the properties only relying on the computational measurements, namely properties about $p$ and $S$. Let us focus on distribution of $S$ first. 

A sample is collected if $\textsf{type} = 1$, the challenge bit $b$ is equal to $0$ and $b(o) , b(\clock) , b(\aux)$ are all 0. Due to Fact~\ref{fact:mahadev_isometry} this is equivalent to when the outcome of measuring the $\aux,\clock$ and $\out$ registers of $\rho$ are $0$. Conditioned on this happening the sample collected has the exact same distribution as measuring the adv register of $\rho$, which is equivalent to $\mathcal{D}^{A}$.

The estimate $p$ is increased by $b(o)$, when $\textsf{type} = 2$, the challenge bit is $0$ and $b(t_0),\dots, b(t_{T'}) = (1,\dots,1)$. Conditioned on the $\clock$ being $T'$, The expectation of $b(o)$ is $\trace(\ket{T'}\bra{T'}_{\clock}\otimes \ket{1}\bra{1}_{\out} \rho)$ due to fact \ref{fact:mahadev_isometry}. Hence we have $\mathbb{E}[p] = n_2\frac{ \trace(\ket{T'}\bra{T'}_{\clock}\otimes \ket{1}\bra{1}_{\out} \rho)}{\trace(\ket{T'}\bra{T'}_{\clock} \rho)}$.

The next thing to prove is that the energy estimate has the desired expectation. If we consider the $n_1$ rounds in which we change the energy estimate, the expectation of the amount of change done to $\gamma$ is equal to:
\begin{center}
$-\frac{1}{2\binom{n'}{2}}\sum_{i,j} J_{i,j}(-1^{b_i(x_i) + b_j(x_j)} + -1^{d_i(x_{i,0}+ x_{i,1}) +d_{j} (x_{j,0} + x_{j,1})})$
\end{center}
For $b_i(x_i)$ and $b_j(x_j)$, we know that these random variables are distributed identically to measurement of $\rho$ in computational basis. The only issue is that $d_i(x_{i,0} + x_{i,1})$ is not distributed identically to Hadamard measurement of $\rho$, but rather is computationally indistinguishable from it.

However for any parity $\chi$ if the distance between the expectations of $\chi(d_i(x_{i,0} + x_{i,1}))$ and $\chi$ applied on the measurement outcomes of $\rho$ in the Hadamard basis is negligible in $\lambda$; as otherwise an adversary could distinguish between the two by random sampling using only $O(1/poly(\mu))$ samples. Hence we have, 
\begin{align*}
   \mathbb{E}[J_{i,j}(-1)^{b_i(x_i)+ b_j{x_j}}] = J_{i,j} \trace(\sigma_{Z,i} \sigma_{Z,j}\rho) &\\
  \mathbb{E}[J_{i,j}(-1)^{d_i(x_{i,0}+ x_{i,1}) +d_{j} (x_{j,0} + x_{j,1})}] = J_{i,j}(\trace(\sigma_{X,i} \sigma_{X,j}\rho) \pm ngl(\lambda))&
\end{align*}
Hence we have that $\mathbb{E}[\gamma] \approx n_1 \frac{1}{\binom{n'}{2}} \trace(H_\compcirc \rho)$ as desired. 
\end{proof}

\begin{theorem}[Perfect Prover Soundness]\label{thm:classical}
For any $T$-gate circuit $C$ acting on $n$-qubits, the protocol defined in Figure~\ref{fig:classical_protocol} has the following properties. It is an interactive protocol $(\ver,*)$ between a classical verifier $\ver$ and a quantum prover. For any QPT prover $\prov$, let $\rho$ be the reduced density of the extracted qubits of  $\prov$, and $\mathcal{D}^{A}$ be the distribution of outcomes of measuring the adv register of $\rho$ in the computational basis conditioned on the measurement outcome of $\clock$, $\out$ and $\aux$ registers being $0$. If $\prov$ wins $N = O(\frac{K}{\eta^4}\poly(n,T)\log(1/\delta))$ itterations of the protocol with probability $1$ and $\frac{\gamma}{n_1} \binom{2n + T+1}{2} \leq \frac{\eta^2}{2T'^3}$ and $\frac{p}{n_2} \geq 1-2\eta^2$, then with probability $1-\delta$ we have,
\begin{itemize}
    \item $\heli(\mathcal{D}_C , \mathcal{D}^{A}) \leq O(\eta^{1/4})$,
    \item $S \sim (\mathcal{D}^A)^{|S|}$.
\end{itemize}

\end{theorem}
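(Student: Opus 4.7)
The strategy is to mimic the soundness proof of Theorem~\ref{thm:constantmemoryverifier}: first show that the empirical quantities $\gamma/(n_1\binom{n'}{2})$, $p/n_2$, and the set $S$ concentrate around what they would be if the verifier could physically measure the extracted-qubit density $\rho$, and then invoke the mixed-state circuit-to-Hamiltonian reduction (Corollary~\ref{cor:circuittoham}) to translate low energy plus large acceptance probability into a Hellinger-distance bound on $\dist^A$. The distribution of the collected samples is then free from Fact~\ref{fact:mahadev_isometry}.

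First I would argue concentration. By Lemma~\ref{lem:classical_expectation} the increments driving $\gamma$, $p$, and $|S|$ are bounded i.i.d.\ random variables whose expectations are $\trace(H_\compcirc\rho)$, $\trace(\ket{T'}\bra{T'}_\clock\otimes\ket{1}\bra{1}_\out \rho)/\trace(\ket{T'}\bra{T'}_\clock \rho)$, and $\trace(\ket{0}\bra{0}_\clock\otimes\ket{0}\bra{0}_\out\otimes\ket{0^n}\bra{0^n}_\aux\rho)$ respectively (the energy estimate up to a negligible additive term from the Hadamard-basis twirl). Lemma~\ref{lem:threets} adapted to three types guarantees $n_1,n_2,n_3=\Omega(N)$ with probability $1-\delta/2$, so Fact~\ref{fact:chernoff} together with a union bound gives, for $N=O\!\left(\frac{K}{\eta^4}\poly(n,T)\log(1/\delta)\right)$, that with probability at least $1-\delta/2$ all three empirical quantities are within additive error $\frac{\eta^2}{4T'^3}$ of their expectations (plus the negligible cryptographic slack, which is absorbed into the Chernoff slack).

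Combining the protocol hypotheses with concentration gives $\trace(H_\compcirc\rho) \leq \frac{\eta^2}{2T'^3}+\frac{\eta^2}{4T'^3}+\mathrm{ngl}(\lambda) \leq \frac{\eta^2}{T'^3}$ and, for the conditional output probability $p^\ast$ of Lemma~\ref{lem:classical_expectation}, $p^\ast \geq 1-2\eta^2-\frac{\eta^2}{4}\geq 1-3\eta^2$. These are exactly the hypotheses of Corollary~\ref{cor:circuittoham}, which immediately yields $\heli(\dist_C,\dist^A)\leq O(\eta^{1/4})$ for the distribution $\dist^A$ obtained by measuring the $\adver$ register of $\rho$ conditioned on the $\clock$, $\out$, $\aux$ registers being zero.

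Finally, Fact~\ref{fact:mahadev_isometry} asserts that on every round where the verifier receives computational-basis preimages, the joint distribution of $(b_1(x_1),\ldots,b_{n'}(x_{n'}))$ is \emph{identical} to the distribution of measuring $\rho$ in the computational basis. In particular, on each round where $\textsf{type}=3$ and the clock/out/aux checks pass, the string appended to $S$ is distributed as $\dist^A$ independently of past rounds; hence $S \sim_{\mathrm{i.i.d.}}(\dist^A)^{|S|}$. The main delicate point of the whole argument is the energy estimate: because the Hadamard-basis observables of $\prov$ are only \emph{computationally} indistinguishable (not identical) from $\sigma_X$ measurements of $\rho$, one must argue, as indicated in the proof of Lemma~\ref{lem:classical_expectation}, that any non-negligible bias in $\mathbb{E}[\gamma/(n_1\binom{n'}{2})] - \trace(H_\compcirc\rho)$ would give a distinguisher contradicting the collapsing/adaptive-hardcore property of $\mathcal{F}$; this is the step that requires the most care and is where the QPT restriction on $\prov$ is essential.
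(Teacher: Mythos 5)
Your proposal is correct and follows essentially the same route as the paper's proof: concentration of $\gamma/n_1$ and $p/n_2$ around the quantum expectations supplied by Lemma~\ref{lem:classical_expectation} (via Lemma~\ref{lem:threets} and Fact~\ref{fact:chernoff}), combination with the theorem's hypotheses to meet the conditions of Corollary~\ref{cor:circuittoham}, and Fact~\ref{fact:mahadev_isometry} for the claim $S \sim (\dist^A)^{|S|}$. Your bookkeeping of the constants is in fact slightly cleaner than the paper's (which derives $p^\ast \geq 1 - \tfrac{7\eta^2}{2}$ while Corollary~\ref{cor:circuittoham} is stated with $1-3\eta^2$), but this is a constant-level difference that does not change the argument.
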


\paragraph{Note.}
The guarantee expressed in the last sentence of Theorem~\ref{thm:classical} might seem mysterious at first. Note however that the conditions contained there are equivalent to the checks performed at the last step in the constant quantum memory protocol from Figure~\ref{fig:low_mem_protocol}. The fact that the checks are contained in the statement of the theorem and not in the protocol itself allows us to analyze perfect provers only and simplifies the presentation considerably. 

\begin{proof}[Proof of Theorem~\ref{thm:classical}]
Let $\compcirc$ be the $T'$ comparison circuit of $C$ and let $H_\compcirc$ be the corresponding 2-local Hamiltonian acting on $n' = 2n + T' +1$ qubits. %We prove that the protocol defined in Figure~\ref{fig:classical_protocol} satisfies the conditions. 

Applying Lemma~\ref{lem:threets} we have that with probability $1- e^{-\frac{N}{18}}$ we have $n_1 \geq \Omega(N)$. From Lemma~\ref{lem:classical_expectation} we have,
\begin{align}
    \mathbb{E}\left[\frac{\gamma}{n_1} \binom{n'}{2}\right] \approx \trace(H_\compcirc \rho)\\ 
    \mathbb{E}\left[\frac{p}{n_2} \right]= \frac{\trace(\ket{T'}\bra{T'}_{\clock}\otimes \ket{1}\bra{1}_{\out} \rho)}{\trace(\ket{T'}\bra{T'}_{\clock} \rho)}
\end{align}

Using Fact~\ref{fact:chernoff} we get, 
\begin{align}
    \Prob \left[ \left|\frac{\gamma}{n_1} \binom{n'}{2} - \trace(H_\compcirc \rho)\right| \geq \frac{\eta^2}{2T'^3} \right] \leq 2e^{-\frac{\eta^4 n_1}{8 \binom{n'}{2}^2 T'^6 J}} \label{gamma_inq}\\
    \Prob \left[\left|\frac{p}{n_2} - \frac{\trace(\ket{T'}\bra{T'}_{\clock}\otimes \ket{1}\bra{1}_{\out} \rho)}{\trace(\ket{T'}\bra{T'}_{\clock} \rho)}\right| \geq \frac{3\eta^2}{2} \right] \leq 2 e^{-\frac{9\eta^4 n_2}{8}} \label{p_inq},
\end{align}
where $J = \text{sup}_{i\neq j \in [n']} \{|J_{i,j}|\}$.

If we use the hypothesis of the theorem, \eqref{gamma_inq} and \eqref{p_inq} we get that with probability $1 - \frac{\delta}{8}$,
\begin{align}
    \trace(H_\compcirc \rho) \leq \frac{\eta^2}{T'^3}\\
    \frac{\trace(\ket{T'}\bra{T'}_{\clock}\otimes \ket{1}\bra{1}_{\out} \rho)}{\trace(\ket{T'}\bra{T'}_{\clock} \rho)} \geq 1- 2\eta^2 - \frac{3\eta^2}{2} \geq {1-\frac{7\eta^2}{2}} \label{output_bit}
\end{align}
 Note that \eqref{output_bit} implies that probability of the measurement outcome of the out register being $1$, when the $\clock$ is $T'$ is at least $1-\frac{7\eta^2}{2}$. By employing Corollary~\ref{cor:circuittoham} we have that with probability $\frac{1\pm 7\eta}{T'+1}$, $d_H(\mathcal{D}_C , \mathcal{D}^{A}) \leq O(\eta ^{1/4})$. 

By Fact~\ref{fact:chernoff} we have that $n_2  = \Omega(\frac{N}{T'})$ with probability $1 - \frac{\delta}{20}$ so it's enough for $N = O( \frac{1}{\eta^4}\log(1/\delta)\binom{n'}{2}^2 T'^7 J) = O(\frac{1}{\eta^4}\log(\frac{1}{\delta})\poly(T,n))$ for \eqref{p_inq} to hold with probability $\leq \delta/10$. If we apply the union bound over all failure events we get that all the conditions will be satisfied with probability at least $1-\delta$, hence with probability $1-\delta$ we get $\heli(\mathcal{D}_C ,\mathcal{D}^{A}) \leq O(\eta^{1/4})$.

The second bullet follows directly from Lemma~\ref{lem:classical_expectation}.
\end{proof}

\paragraph{Discussion}
We note that we have proven the soundness of our protocol only in the perfect prover setting. The problem with this statement is that, no verifier can make sure that the prover is winning with probability $1$. Also the soundness guaranty is different from the previous sections as the game does not \textsf{abort} when the estimates do not satisfy the bound. The reason we modified the game in this manner is that if the game aborted after checking the bounds, even the honest prover would not have won the game with probability $1$.

However, it is possible to achieve a stronger soundness guaranties, similar to Theorem~\ref{thm:constantmemoryverifier}. To do so, we would need to tweak the protocol, namely by utilizing yet another commitment called a trapdoor injective family. To prove soundness of a non-perfect prover one can then follow a similar path as the one in Claim 7.1 of \cite{mahadev}. In there a reduction from the non-perfect prover to a perfect prover is given.
%loses in the Hadamard measurement test is $p_{H}$, the soundness of the this prover can be bounded by the soundness of a perfect version of this prover $\pm O(\sqrt{p_{H}})$. 

% \begin{theorem}
% For every circuit $C$ acting on $n$ qubits, with $T$ gates, for every $\e,\delta, \eta > 0$, $\lambda \in \N$ there exists an interactive protocol between a classical verifier and a quantum prover with the following properties. The protocol runs in $N = O(\frac{T^2}{\delta^2} \log(1/\e))$ rounds, in each round the amount of communication is $\poly(n,T,\lambda)$. At the end of the protocol $V$ outputs $\bot$ when it rejects the interaction or $ S := \{x_1, \dots, x_{\Omega(T)} \} \in \nbits$ when it accepts.

% \begin{itemize}
%     \item (Completeness) There exists a prover $P \in \text{QPT}$ such that for every $\mathcal{D} \in \mathfrak{D}(n)$ satisfying $\heli(\mathcal{D}, \mathcal{D}_C) \leq \eta$ the following holds. $P$ equipped with an oracle access to $\oracle(\mathcal{D})$ succeeds in the protocol with probability $1-\delta$ and $S \sim_{\text{i.i.d.}} \mathcal{D}$.
%     \item (Soundness) For every $P \in \text{QPT}$ that succeeds with probability $1$ when the challenge is $c=0$ conditioned on the verifier accepting the interaction there exists $\dist^A$ such that $\heli(\dist_C, \dist^A) \leq O(\eta)$ and $S \sim_{\text{i.i.d.}} \dist^A$. 
% \end{itemize}
% \end{theorem}

% \begin{proof}
% Easy...
% \end{proof}

\section{Future Work}

We show how to reduce the problem of constructing adversarially robust classifiers to the problem of finding generative models plus the problem of devising classifiers that are robust with respect to distributional shifts. This firmly places the problem of adversarial robustness in the landscape of learning theory. Our result is valid in a model where the prover has access to quantum samples. 

There are some important open questions that we leave for future research.
First, it would be of great interest to find real-world applications where the underlying model is indeed of quantum nature and hence this model applies directly. Second, 
can we use the insights from our results to design defenses that are valid in the classical PAC-learning model? As we discussed in Section~\ref{sec:overview}, this might be challenging because certifying randomness between entities that are fully classical seems unreachable in our setup. But it is possible that there exists a third model, neither PAC nor quantum PAC, in which it is possible to obtain a result of similar flavor. Moreover, we can turn the question around and ask for a reduction of the problem of certifiable randomness to the adversarial robustness problem. If such a reduction exists then this strongly implies that in order to tackle adversarial robustness we should first solve the problem of certifiable randomness. Finally, our results are valid for the Hellinger distance. It is natural to ask if our reduction can be generalized to other similarity metrics and if such a potential generalization might be advantageous.

\section*{Acknowledgment}
We would like to thank Thomas Vidick, whose superb lecture notes \citep{VidickLectureNotes} have aided us substantially in learning about the recent progress in quantum protocols.

\bibliography{example_paper}
\bibliographystyle{abbrvnat}

%\appendix
%\section{Basic Tail Bound}

%\section{Quantum Verifier without the i.i.d. assumption}

%We present here a slight modification of the protocol from Figure~\ref{fig:quantumverifier_protocol} that is generalized to a setting where $\prov$ doesn't need to act i.i.d. The biggest difference in comparison with the previous version of the protocol is that in the very last step we return a random element of $S$ instead of the whole $S$.

%\input{fully_quantum_noiid} 

\end{document}